\let\OLDthebibliography\thebibliography
\renewcommand\thebibliography[1]{
  \OLDthebibliography{#1}
  \setlength{\parskip}{0pt}
  \setlength{\itemsep}{4pt plus 0.3ex}
}
\definecolor{light-gray}{gray}{0.9}
\newenvironment{psmallmatrix}
  {\left(\begin{smallmatrix}}
  {\end{smallmatrix}\right)}
\newcommand{\del}{\partial}
\def\d{{\rm d}}
\renewcommand{\flat}{\mathrm{Flat}}
\newcommand{\Oloc}{\mathcal{O}_{\mathrm{loc}}}
\begin{document}

\title{Quantum Geometric Langlands Categories\\ from $\mc N=4$ Super Yang--Mills Theory}
\author{Chris Elliott and Philsang Yoo}
\date{\today}

\maketitle

\begin{abstract}
We describe the family of supersymmetric twists of $\mc N=4$ super Yang--Mills theory using derived algebraic geometry, starting from holomorphic Chern--Simons theory on $\mc N=4$ super twistor space.  By considering an ansatz for categorical geometric quantization of the family of further twists of a fixed holomorphic twist, we give a quantum field-theoretic synthesis of the categories of twisted D-modules occurring in the quantum geometric Langlands correspondence.
\end{abstract}

\section{Introduction}

This paper aims to explore the occurrence of the quantum geometric Langlands correspondence in topologically twisted 4d gauge theory using the Batalin--Vilkovisky (BV) formalism: an approach to classical and quantum field theory that is a physical realization of the mathematical theory of derived geometry.  We extend earlier work \cite{EY1, EY2} in which we studied the derived algebraic geometry of twists of 4d $\mc N=4$ supersymmetric gauge theory; namely, we describe the geometric quantization of the derived stack assigned to the 4-manifold $\CC \times C$ for each twist of $\mc N=4$ super Yang--Mills theory, where $C$ is a compact Riemann surface. We argue that for certain twists this procedure produces the categories $\text{D}_\kappa(\bun_G(C))$ of twisted D-modules appearing in the quantum geometric Langlands correspondence.  In particular, we discuss the meaning of the parameter $\kappa$ in terms of derived algebraic geometry.  It is worth noting that the parametrization we obtain is different from the canonical parameter $\Psi$ appearing in the original work of Kapustin and Witten \cite{KapustinWitten}, and takes a much simpler form. In particular, our analysis does not incorporate the coupling constant in the Yang--Mills action functional.

The connections between the geometric Langlands program and supersymmetric gauge theory in four dimensions originated in work of Kapustin and Witten \cite{KapustinWitten}.  They considered a family of topological twists of 4d $\mc N=4$ super Yang--Mills theory parameterized by a value $\Psi \in \bb{CP}^1$ called the canonical parameter (depending both on the complexified coupling constant of the theory and on the choice of twisting supercharge).  There is a duality of quantum field theories called S-duality that relates the $\mc N=4$ theories with gauge group $G$ and its Langlands dual group $G^\vee$; this descends to a duality between the topologically twisted theories with canonical parameter $\Psi$ and $-1/r\Psi$, where $r$ is the lacing number of $G$.

The connection to the geometric Langlands correspondence comes when we consider the action of S-duality on certain boundary conditions in these 4d topological field theories. That is, if we compactify the 4d field theories along a compact Riemann surface $C$, then we obtain twists of an $\mc N=(2,2)$ supersymmetric sigma model whose target is the Hitchin moduli space on $C$ (in fact, it canonically admits the action of the $\mc N=(4,4)$ supersymmetric algebra). Kapustin and Witten argue that along appropriate loci in the $\bb{CP}^1$ family these twists yield either A- or B-models into the Hitchin moduli space with particular symplectic or complex structures. Then they go on to describe the categories of branes in these 2d topological field theories, and observe that versions of the categories appearing in the geometric Langlands correspondence are associated to S-dual theories.

A significant amount of Kapustin and Witten's physical analysis is devoted to explaining how D-modules occur as the categories of branes in an A-twisted theory, by identifying categories of A-branes as acted on by the algebra of open strings ending on an object called the canonical coisotropic brane.  Kapustin \cite{Kapustinnote} gave an alternative analysis explaining the occurrence of twisted D-modules.  In this paper we follow a different perspective, using ideas from derived algebraic geometry: we give a classical description of all twists of 4d $\mc N=4$ theories using the language of derived algebraic geometry and then explain why twisted D-modules occur when we geometrically quantize these derived stacks.

\subsection{Quantum Geometric Langlands}
Let us begin by briefly reviewing the statement of the categorical geometric Langlands correspondence and its quantum version.\footnote{We should remark that the ordinary geometric Langlands correspondence, the quantum geometric Langlands correspondence and the ``classical limit'' of the geometric Langlands correspondence all arise from similar constructions from the point of view of quantum field theory. In this paper we follow the usual mathematicians's convention for naming these correspondences.}  The quantum geometric Langlands correspondence was originally proposed by Beilinson and Drinfeld, and first appeared in the literature in its modern form around 15 years ago \cite{FrenkelLectures, Stoyanovsky}.  We fix a compact Riemann surface $C$ and a reductive algebraic group $G$ with Langlands dual group $G^\vee$.  We also fix a non-degenerate invariant pairing $\kappa$ on the Lie algebra $\gg$ of $G$. Let us assume for simplicity of notation that $G$ is a connected and simply-connected simple group and $\kappa$ is a complex multiple of the Killing form on $\gg$.  In this case we can identify $\kappa$ with this complex number. 

Let $\LL$ denote the determinant line bundle on the stack $\bun_G(C)$ of principal $G$-bundles on $C$.  Recall that we can define a sheaf $\mathcal{D}_\kappa$ of twisted differential operators associated to an arbitrary complex power $\LL^{\otimes \kappa}$ of $\LL$ (the sheaf still makes sense even when the line bundle itself does not exist) \cite{BBJantzen}.  Write $\text{D}_\kappa(\bun_G(C))$ for the category of modules for the sheaf of twisted differential operators on $\LL^{\otimes \kappa  } \otimes K_{\bun_G(C)}^{1/2}$ where $K_{\bun_G(C)}^{1/2}$ is the unique square root of the canonical bundle $K_{\bun_G(C)}$ over $\bun_G(C)$ under the assumptions we have made on the group $G$. \footnote{To be more careful, we should note that the category of twisted D-modules on $\bun_G(C)$ is defined as the limit over a filtration of $\bun_G(C)$ by quasi-compact substacks, and there are two different ways of doing this depending on whether one uses $*$ or $!$ push-forwards between the strata.  This distinction will not be important for our purposes.} 

\begin{remark}
There is an alternative point of view on twisted D-modules presented in the work of Gaitsgory and Rozenblyum \cite{GRCrystals}.  Rather than defining the sheaf of twisted differential operators directly, one can identify twisted D-modules on $X$ in terms of the data of a trivialized $\bb G_m$ gerbe equipped with a flat connection, called a twisting datum. Since usual D-modules are realized as quasi-coherent sheaves on a de Rham stack $X_{\mr{dR}}$, twisted D-modules can be realized in terms of such a category twisted by a twisting datum  (see Section \ref{gerbe_section}).
\end{remark}

The \emph{quantum geometric Langlands correspondence} gives a description of this category in terms of the Langlands dual group.  We will describe two versions of the correspondence, one for non-zero $\kappa$, and one for $\kappa = 0$.

\begin{conjecture}[Quantum Geometric Langlands Correspondence]
There is an equivalence of DG categories
\[\text{D}_\kappa(\bun_G(C)) \iso \text{D}_{-\kappa^\vee}(\bun_{G^\vee}(C)),\]
where $\kappa^\vee$ is the invariant pairing on $\gg^\vee$ such that $\kappa^\vee - \frac 12 \kappa^\vee_{\mr{Kil}}$ is dual to $\kappa - \frac 12 \kappa_{\mr{Kil}}$.  Here $\kappa_{\mr{Kil}}$ and $\kappa_{\mr{Kil}}^\vee $ are the Killing forms on $\gg$ and $\gg^\vee$ respectively. 
\end{conjecture}
This conjecture has been verified in the case where $G$ is abelian in work of Polishchuk and Rothstein \cite{PolishchukRothstein}.

\begin{remark}
The ``critical shift'', where we subtract half the Killing form when defining the dual will, from our perspective, be related to the metaplectic correction in categorical geometric quantization, as in Remark \ref{metaplectic_remark}.
\end{remark}

The correct statement of the correspondence for $\kappa=0$ requires one to work out what the dual category should be at ``$\kappa^\vee = \infty$''.  Versions of this conjecture were developed by Beilinson and Drinfeld prior to the statement of the quantum conjecture described above, although it turns out that a correct statement involves some subtleties not present for generic $\kappa$.
\begin{conjecture}[Categorical Geometric Langlands Correspondence \cite{ArinkinGaitsgory}]
There is an equivalence of DG categories
\[\text{D}(\bun_G(C)) \iso \mr{IndCoh}_{\mc N}(\flat_{G^\vee}(C)),\]
where $\flat_{G^\vee}(C)$ is the stack of $G^\vee$-bundles on $C$ equipped with a flat connection, and $\mr{IndCoh}_{\mc N}$ indicates the category of ind-coherent sheaves with nilpotent singular support.
\end{conjecture}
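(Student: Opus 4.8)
The expected route to a proof, following the program of Arinkin--Gaitsgory and its subsequent elaborations, is roughly as follows. First I would construct the Langlands functor $\mathrm{L}_G \colon \text{D}(\bun_G(C)) \to \mr{IndCoh}_{\mc N}(\flat_{G^\vee}(C))$ by hand and only afterwards pin it down as an equivalence via a package of compatibilities. The spectral category $\mr{IndCoh}_{\mc N}(\flat_{G^\vee}(C))$ carries a tautological action of $\mr{QCoh}(\flat_{G^\vee}(C))$ and a compatible ``spectral Hecke'' action, while the automorphic category $\text{D}(\bun_G(C))$ carries the Hecke action given by modifying bundles at points of $C$; these two actions are identified by the derived geometric Satake equivalence of Bezrukavnikov--Finkelberg. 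One then defines $\mathrm{L}_G$ to be the unique colimit-preserving functor intertwining the Hecke actions and sending the Whittaker (Poincar\'e) object on the automorphic side to the structure sheaf $\mc O_{\flat_{G^\vee}(C)}$ --- equivalently, $\mathrm{L}_G$ is assembled from the Whittaker coefficient functor once the latter is shown to be corepresentable after the appropriate ``critical'' normalization (the metaplectic/critical shift, cf.\ the remark above).

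Second I would reduce the equivalence to groups of smaller semisimple rank, ultimately to a torus, by an Eisenstein/constant-term induction. Both sides are glued from contributions indexed by standard parabolics: automorphically via the adjoint pair $(\mr{CT}_P, \mr{Eis}_P)$ attached to $\bun_P(C) \to \bun_G(C) \times \bun_L(C)$, and spectrally via the analogous functors for $\flat_{P^\vee}(C)$. The crucial structural statement is that $\mathrm{L}_G$ carries the automorphic gluing pattern to the spectral one --- compatibility with Eisenstein series --- so that, assuming $\mathrm{L}_L$ is an equivalence for every proper Levi $L$, the problem reduces to the ``genuinely cuspidal'' complement and to the base case $G = T$. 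For a torus $\mc N = 0$, so $\mr{IndCoh}_{\mc N} = \mr{QCoh}$, and the claim becomes the classical Fourier--Mukai statement $\text{D}(\bun_T(C)) \simeq \mr{QCoh}(\flat_{T^\vee}(C))$ of Laumon and Rothstein.

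Third I would treat the cuspidal part and the singular support condition. Full faithfulness transverse to all Eisenstein images should follow from conservativity of the Whittaker coefficient functor on that subcategory together with a cleanness/convergence statement for the Poincar\'e series functor; essential surjectivity then follows because $\mr{IndCoh}_{\mc N}(\flat_{G^\vee}(C))$ is generated under colimits by the Hecke translates of $\mc O_{\flat_{G^\vee}(C)}$, all of which lie in the image by construction. The nilpotent singular support condition is matched automatically: objects in the image inherit a ``locally finite'' Hecke action which forces their singular support into $\mc N$, and this is precisely the condition cutting out the spectral generators.

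The step I expect to be the main obstacle is closing the induction at the cuspidal level: showing that $\mathrm{L}_G$ is fully faithful on the part of $\text{D}(\bun_G(C))$ not built from proper parabolics. This requires genuine analytic/geometric input rather than formal gluing --- a temperedness or convergence argument controlling the Poincar\'e series, together with the compatibility of $\mathrm{L}_G$ with the pseudo-identity (Verdier self-duality) functors on the two sides, which is what actually makes the Eisenstein gluing patterns agree. Derived Satake and the torus base case can be taken as black boxes, and the parabolic gluing is formal once the relevant functors and adjunctions are in place, so the real work is concentrated in this cuspidal comparison, all while carefully tracking the critical shift throughout.
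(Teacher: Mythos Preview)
The paper does not prove this statement: it is explicitly labeled a \emph{conjecture} and is presented only as background, with no argument offered beyond a pointer to \cite{ArinkinGaitsgory} for the formulation and to \cite{EY2} for the physical meaning of the nilpotent singular support condition. There is therefore no ``paper's own proof'' to compare your proposal against.

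What you have written is a fair high-level summary of the Arinkin--Gaitsgory program and its subsequent developments (construction of the Langlands functor via Whittaker normalization and derived Satake, Eisenstein/constant-term gluing reducing to Levi subgroups, the torus base case via Fourier--Mukai, and the residual cuspidal analysis). That is indeed the route taken in the literature toward this conjecture, but it goes far beyond anything in the present paper, whose aim is to \emph{derive} the categories appearing on the two sides from twisted $\mc N=4$ gauge theory rather than to establish the equivalence between them. So your proposal is not wrong as a sketch of the known strategy, but it is not a comparison point here: the paper simply records the conjecture and moves on.
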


The theory of ind-coherent sheaves with singular support for sufficiently general stacks was developed by Arinkin and Gaitsgory. For a discussion of the physical meaning of the nilpotent singular support condition, see \cite{EY2}.

\begin{remark}
Let us remark upon a few interesting subtleties and extensions of the conjectures above, which we won't discuss in this paper, but which certainly will admit interesting analogues in the world of supersymmetric gauge theory.  Instead of considering only the family of twisted D-modules depending on a single complex parameter (multiples of the Killing form), we could study the action of duality on the full family of twists, parameterized by $(\bb A^1)^n \times \sym^2(\mf z^*)$, where $n$ is the number of simple factors of $\gg$ and $\mf z$ is its center.  One can additionally include in the twisting datum an extension of the canonical sheaf $\omega_C$ by $\OO_C \otimes \mf z^*$.  The meaning of this full twisting datum in the geometric Langlands program has been studied by Zhao \cite{Zhao1, Zhao2}.  This twisting family is extended further in the metaplectic Langlands correspondence of Gaitsgory and Lysenko \cite{GaitsgoryLysenko}.  We will not consider these extensions in this paper, though it would be interesting to investigate the relationship between this and theories including a discrete $\theta$-angle, as in the work of Frenkel and Gaiotto \cite{FrenkelGaiotto}.
\end{remark}

\subsection{Review of Previous Work}
This work is part of a larger project \cite{EY1, EY2} using derived algebraic geometry and the BV formalism to connect the geometric Langlands program and the topological twists of $\mc N=4$ super Yang--Mills theory, after the work of Kapustin and Witten \cite{KapustinWitten}.  In this section we will recall the main results from our previous work, and then in the next section we will explain how this paper extends those earlier results.

We begin with the idea of \emph{twisting} supersymmetric field theories, as introduced by Witten \cite{WittenTQFT}.  The basic idea is that if a field theory admits an odd symmetry $Q$ such that $Q^2 = 0$, we can study the $Q$-cohomology of the algebra of observables.  This cohomology is itself realized as the algebra of observables in a theory with a modified action functional, where we disregard $Q$-exact terms.  The Batalin--Vilkovisky formalism lends itself to the mathematical analysis of this concept, being a description of field theory in the language of homological algebra \cite{ElliottSafronov}.  

In brief, in the perturbative BV formalism, one gives a description for a classical field theory using a cochain complex equipped with some additional structure -- a shifted symplectic pairing and a homotopy Lie structure -- encoding the information of the theory's action functional.  One can then twist a classical field theory equipped with an odd square-zero symmetry $Q$ by adding $Q$ to the differential.  In \cite{EY1} we explained how to extend this from the perturbative level (the formal neighborhood of a classical solution to the equations of motion) to the non-perturbative level (the full moduli stack of classical solutions).  We considered theories whose fields included an algebraic gauge field -- a principal $G$-bundle on $\bb{R}^4$ that can be equipped with a complex algebraic structure upon choice of a complex structure of $\bb{R}^4$ -- in addition to other fields which are formal, for instance because they have odd degree.  If the action of the symmetry $Q$ leaves the gauge field fixed then one can compute the twist by $Q$ as a stack by twisting the fibers over each point in the moduli stack of $G$-bundles. 

We applied these ideas using the twistor formalism for $\mc N=4$ super Yang--Mills theory.  We will review this approach in Section \ref{twistor_section}.  The main advantage of this perspective is that it allows us to consider a moduli space with a natural algebraic structure, although it also has a bunch of non-algebraic objects before twisting.  The simplest non-zero twist of $\mc N=4$ super Yang--Mills theory is referred to as the ``holomorphic twist''; it is obtained by twisting by a non-trivial supercharge $Q_{\mr{hol}}$ of smallest possible rank and it removes all the non-algebraic information to yield the following:
 
\begin{theorem}[{\cite[Theorem 4.2, Lemma 4.7]{EY1}}] \label{EY1_holo_thm} 
The holomorphic twist of 4d $\mc N=4$ super Yang--Mills theory is defined on a complex algebraic surface $X$, and has the classical moduli space of solutions
\[\mr{EOM}_{\mr{hol}}(X) \iso T[1]\higgs_G(X),\]
where $\higgs_G(X)$ is the stack of $G$-Higgs bundles on $X$; that is, of pairs $(P,\phi)$, where $P$ is an algebraic $G$-bundle and $\phi$ is a section of the sheaf $K_X \otimes \gg^*_P$ such that $[\phi \wedge \phi]=0$.\footnote{To state this a little more precisely we should instead consider the formal completion of $\bun_G(X)$ in $T[1]\higgs_G(X)$.}
\end{theorem}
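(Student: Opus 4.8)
The plan is to reduce the statement to a field-theoretic computation on $X$ and then to globalize using the machinery of \cite{EY1}. I would begin from the twistor description recalled in Section~\ref{twistor_section}, which presents $\mc N=4$ super Yang--Mills via holomorphic Chern--Simons theory on $\mc N=4$ super twistor space $\bb P^{3|4}$; within the supersymmetry algebra I would single out $Q_{\mr{hol}}$ as the nilpotent supercharge of minimal rank, and record the complex structure it induces on spacetime, so that the twisted theory is defined on a complex algebraic surface $X$. (Equivalently one can work directly with the component BV complex of $\mc N=4$ super Yang--Mills and add $Q_{\mr{hol}}$ to the differential.)

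Next I would compute the twist at the free level, as the $Q_{\mr{hol}}$-cohomology of the linearized BV complex. Decomposing the $\mc N=4$ vector multiplet under the stabilizer of $Q_{\mr{hol}}$ in $\mr{Spin}(4)\times SU(4)_R$, the surviving cohomology should, once the complex structure is used, organize into Dolbeault complexes on $X$: one valued in $\gg$ and carrying the algebraic gauge field, one valued in $K_X\otimes\gg^*_P$ and carrying the Higgs field $\phi$, each equipped with a ``fermionic partner'' (a tensor factor of $\bb C\oplus\bb C[1]$), with the appropriate cohomological shifts. I would then match this with the Dolbeault model for the tangent complex of $T[1]\higgs_G(X)$ at an algebraic $G$-bundle, identifying $\higgs_G(X)$ with the shifted cotangent stack $T^*[-1]\bun_G(X)$ and using Serre duality on the surface, $R\Gamma(X,\gg_P)^\vee\simeq R\Gamma(X,\gg^*_P\otimes K_X)[2]$, to see that the two summands are exchanged by the pairing.

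I would then track the interacting structure: the cubic Chern--Simons vertex descends to an $L_\infty$-structure on the twisted complex, which I would match term by term with the Chevalley--Eilenberg differential of the tangent complex of $T[1]\higgs_G(X)$, at the same time checking that the shifted symplectic pairing transported from holomorphic Chern--Simons agrees with the canonical one. The key point is that the Maurer--Cartan equation of the twisted theory reproduces precisely the equations cutting out $T[1]\higgs_G(X)$: the holomorphicity equation $\bar\partial A+\frac{1}{2}[A\wedge A]=0$ for the gauge field, the equations $\bar\partial_A\phi=0$ and $[\phi\wedge\phi]=0$ defining $\higgs_G(X)$, and the de Rham differential in the odd $T[1]$-directions. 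This identifies the formal moduli problem attached to a classical solution with the formal completion of $\bun_G(X)$ inside $T[1]\higgs_G(X)$, as the footnote to the statement anticipates.

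Finally, I would pass from this perturbative description to the honest derived stack; I expect this globalization step to require the most care. Because $Q_{\mr{hol}}$ leaves the algebraic gauge field fixed, the framework of \cite{EY1} computes the classical moduli stack of the twist by twisting the fibers of $\mr{EOM}_{\mr{hol}}(X)$ over each point of $\bun_G(X)$, and the fiberwise computation above then glues to $T[1]\higgs_G(X)$ globally. The delicate points are: checking that $Q_{\mr{hol}}$ really acts trivially on the gauge field in the non-perturbative model, so the fiberwise argument applies; controlling the relevant formal completions, so that one lands exactly in the formal completion of $\bun_G(X)$ warned about in the footnote; and verifying that the shifted symplectic structure inherited from holomorphic Chern--Simons matches the canonical one on $T[1]\higgs_G(X)$ after gluing. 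The principal subsidiary obstacle is the representation-theoretic bookkeeping of stage two --- keeping track of all cohomological degrees and $R$-symmetry multiplicities --- together with fixing the twisting-parameter conventions so that they remain consistent with the rest of the paper.
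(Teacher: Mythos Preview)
Your proposal is correct and is essentially the strategy of the original proof in \cite{EY1}: compute the $Q_{\mr{hol}}$-cohomology of the BV complex after decomposing under the stabilizer, identify the result with the Dolbeault model for the tangent complex of $T[1]\higgs_G(X)$, and then globalize over $\bun_G(X)$ using the fact that $Q_{\mr{hol}}$ fixes the algebraic gauge field. The present paper, however, gives a \emph{different} proof at the perturbative level (Lemma~\ref{holo_twist_lemma} and Theorem~\ref{holo_twist_BF_thm}), and the difference is the whole point. Rather than passing to $Q_{\mr{hol}}$-cohomology, the paper writes out the full twisted BV complex as an explicit $\mr U(2)$-equivariant diagram and then constructs, summand by summand, an explicit \emph{deformation retraction} $(i,p,h)$ onto $(\Omega^{\bullet,\bullet}(\CC^2;\gg[1]\oplus\gg),\ol\dd)$, not merely a quasi-isomorphism.

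What this buys is precisely what your approach does not: the retract data $(i,p,h)$ feed directly into the homological perturbation lemma, so that every further twist (rank $(2,0)$, $(1,1)$, $(2,1)$, $(2,2)$) can be computed by the formula $\d_0' = p\circ(1-\d'h)^{-1}\circ\d'\circ i$. Your cohomological identification would establish the theorem as stated but would leave you without the homotopy $h$, and hence without a mechanism to track how a further supercharge $Q'$ acts on the retracted model --- which is the main technical engine of Section~\ref{twist_section}. Conversely, your outline handles the $L_\infty$ structure and the non-perturbative globalization, which the paper's new argument explicitly defers (see the remarks following Theorem~\ref{holo_twist_BF_thm}).
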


We will give a new proof of this result, at the perturbative level, in Section \ref{holo_section}.  The holomorphic twist has a natural two-parameter family of further deformations. To describe these further deformations, we will introduce the notion of the \emph{$\mu$-de Rham stack} $\mc X_{\mu\text{-dR}}$ of a stack $\mc X$.  This is a 1-parameter deformation of $T_f[1]\mc X$ \footnote{Here and later $T_f[1] \mc X$ refers to the formal completion of the zero section in $T[1]\mc X$.} called the \emph{Hodge deformation} -- the tangent space at a point $x \in \mc X$ to $\mc X_{\mu\text{-dR}}$ takes the form $\bb T_{\mc X,x}[1] \overset \mu \to \bb T_{\mc X,x}$.
\begin{definition}
The moduli stack of \emph{flat $\lambda$-connections} $\flat^\lambda_G(X)$ is the mapping stack $\mr{Map}(X_{\lambda\text{-dR}}, BG)$.  If $\lambda = 0$, this is the moduli stack of Higgs bundles described above.  If $\lambda = 1$ this is the moduli stack of $G$-bundles with flat connection.
\end{definition}

There is a two-dimensional family of twists further deforming the holomorphic twist, generated by a pair $Q_A,Q_B$ of supercharges.  These are supercharges in the cohomology of the supersymmetry algebra with respect to the operator $[Q_{\mr{hol}}, -]$.  The further twists by a linear combination of $Q_A$ and $Q_B$ can be defined on any complex surface $X$, and we showed that they have the following description.

\begin{theorem}[{\cite[Theorem 4.9, Theorem 4.21]{EY1}}] \label{EY1_further_thm}
The two-parameter family of further twists of the holomorphic twist generated by $Q_A$ and $Q_B$ coincides with the two parameter family of stacks
\[\mr{EOM}_{(\lambda, \mu)}(X) \iso \flat^\lambda_G(X)_{\mu\text{-dR}}.\]
\end{theorem}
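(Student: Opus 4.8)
The plan is to argue perturbatively, building directly on the description of the holomorphic twist obtained in Section~\ref{holo_section}. Recall from there --- our new proof of Theorem~\ref{EY1_holo_thm} --- that the holomorphic twist is presented, in the formal neighbourhood of a $G$-bundle on $X$, by an explicit local $L_\infty$ algebra: a Dolbeault-type elliptic complex on $X$ with coefficients in $\gg_P$ (together with its shifted symplectic pairing), whose Maurer--Cartan moduli is $T_f[1]\higgs_G(X)$. The first step is to exhibit the residual supersymmetry on this complex: the $[Q_{\mr{hol}},-]$-cohomology of the $\mc N=4$ supersymmetry algebra acts on the holomorphic twist, and I would write down the two commuting odd generators $Q_A$ and $Q_B$ explicitly as cohomological-degree-one operators on this $L_\infty$ algebra. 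Twisting by a linear combination then amounts, in the BV formalism, to adding $\lambda Q_B + \mu Q_A$ to the differential; that the result is a consistent $L_\infty$ deformation follows from $Q_A^2 = Q_B^2 = [Q_A,Q_B]=0$ together with the fact that each $Q_i$ is a symmetry of the holomorphic twist.

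The heart of the proof is to identify the deformed moduli problem. I would decompose the holomorphic-twist complex into the copies carrying the algebraic structure on $P$ (the ``gauge field''), the Higgs field $\phi$, and the $[1]$-shifted cotangent copy dual to these coming from the $T[1]$ in Theorem~\ref{EY1_holo_thm}. The claim is that, after possibly relabelling $Q_A$ and $Q_B$, the operator $Q_B$ deforms the first two pieces exactly by promoting the $\bar\partial$-operator together with $\phi$ to a flat $\lambda$-connection --- that is, it deforms the underlying formal moduli problem from the formal completion of $\bun_G(X)$ in $\higgs_G(X)$ to $\flat^\lambda_G(X) = \mr{Map}(X_{\lambda\text{-dR}}, BG)$ --- while $Q_A$ acts on the remaining $[1]$-shifted copy through the canonical pairing, adding precisely the differential $\bb T_{\mc X,x}[1] \overset{\mu}{\to} \bb T_{\mc X,x}$ that turns $T_f[1]\mc X$ into the Hodge deformation $\mc X_{\mu\text{-dR}}$. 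Concretely I would verify this by comparing the deformed differential, $L_\infty$ bracket and symplectic pairing against the tangent $L_\infty$ algebra of $\flat^\lambda_G(X)_{\mu\text{-dR}}$ at a point, which is the $\mu$-Hodge deformation of the $\lambda$-de Rham complex of $X$ with coefficients in $\gg_P$.

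Finally I would assemble the two deformations. Since $Q_B$ touches only the ``base'' directions (gauge field and Higgs field) and $Q_A$ acts via the pairing of tangent with shifted-cotangent directions, the $\mu$-deformation is compatible with an arbitrary deformation of the base, so the combined twist is the $\mu$-de Rham stack of $\flat^\lambda_G(X)$, namely $\flat^\lambda_G(X)_{\mu\text{-dR}}$. I would then match parameters by checking the distinguished loci: $(\lambda,\mu)=(0,0)$ recovers $T_f[1]\higgs_G(X)$ and hence Theorem~\ref{EY1_holo_thm}; $\lambda=1$ replaces $\higgs_G$ by $\flat_G$; $\mu=1$ replaces $T_f[1](-)$ by the de Rham stack; and the $\bb G_m$-rescalings of $Q_B$ and $Q_A$ are absorbed into rescalings of $\lambda$ and $\mu$ respectively.

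I expect the main obstacle to be the identification step: pinning down the explicit action of $Q_A$ and $Q_B$ on the holomorphic-twist $L_\infty$ algebra with the correct normalisations, confirming that the $Q_B$-deformation genuinely produces flat $\lambda$-connections (rather than some twisted or partially-integrated variant), that the $Q_A$-deformation of $T_f[1]\flat^\lambda_G(X)$ is genuinely the $\mu$-de Rham stack --- the Hodge deformation is a delicate object --- and that the two deformations decouple cleanly into the ``base'' and ``Hodge'' directions rather than interfering. A secondary point worth flagging is that this argument, like the new proof of Theorem~\ref{EY1_holo_thm}, is perturbative: globalising it is not covered by the ansatz of \cite{EY1} for twisting by a supercharge that fixes the gauge field, since $Q_B$ deforms $\bar\partial$, so one should either invoke the original non-perturbative arguments of \cite{EY1} or restrict, as in the footnote to Theorem~\ref{EY1_holo_thm}, to the formal completion along $\bun_G(X)$.
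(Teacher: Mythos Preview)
Your outline is broadly the strategy the paper pursues in Section~\ref{twist_section}, but the point you flag as a potential obstacle --- that $Q_A$ and $Q_B$ ``decouple cleanly into the `base' and `Hodge' directions'' --- is in fact the crux, and it fails in exactly the way the Remark following the theorem warns. The paper does not simply write down $Q_A$ and $Q_B$ as operators on the small complex $\Omega^{\bullet,\bullet}(\CC^2;\gg[\eps])$ and read off the deformation; instead it computes the action of each supercharge on the \emph{large} untwisted BV complex (Propositions~\ref{action_of_positive_spinor_prop} and~\ref{action_of_negative_spinor_prop}), builds the deformation retract of Theorem~\ref{holo_twist_BF_thm}, and then transfers the further supercharges along this retract using the homological perturbation lemma. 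The transfer is where the subtlety lives: it introduces a quadratic $L_\infty$ correction, so the map from the three-dimensional space of residual supercharges to the three-dimensional space of deformations $(\dd_{z_1},\dd_{z_2},\dd_\eps)$ is not linear. Theorem~\ref{22_theorem} pins this down: the supercharge $Q_{\mr{hol}}+aQ'_{\mr{hol}}+b_1Q'+b_2Q''$ yields Hodge parameter $b_1+ab_2$, not a linear combination of the $b_i$ alone. In particular the supercharge you would call $Q_B$ does \emph{not} land at $(\lambda,\mu)=(1,0)$.

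So your plan to ``write down the two commuting odd generators $Q_A$ and $Q_B$ explicitly as cohomological-degree-one operators'' on the holomorphic-twist complex, and then verify the decoupling by inspection, would either reproduce the incorrect identification of \cite[Theorem 4.9]{EY1} or would require you to discover the quadratic correction by hand --- which is exactly the homological-perturbation computation the paper carries out case by case (Theorems~\ref{20_twist_thm}, \ref{11_twist_thm}, \ref{21_theorem}, \ref{21_opposite_theorem}, \ref{12_theorem}) before assembling them in Theorem~\ref{22_theorem}. Your instinct that the globalisation from $\CC^2$ to a general $X$ is a separate issue is correct; the paper works perturbatively on $\CC^2$ throughout Section~\ref{twist_section} and defers the stacky interpretation to the original arguments of \cite{EY1}.
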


\begin{remark}
The statement of \cite[Theorem 4.9]{EY1} is not quite correct although the mathematical content of the theorem is intact. Indeed, while the deformation at the point $(\lambda,\mu) = (1,0)$ in the family above coincides with a twist in the two parameter family spanned by $Q_A$ and $Q_B$, it doesn't correspond to the point we called $Q_B$, as we will explain in the next section and spell out in detail in Section \ref{22_section}.
\end{remark}

\subsection{Goals of this Paper}
In this paper, we extend our previous work in two ways.

\begin{enumerate}
 \item First, we describe the precise relationship between the space of square-zero supertranslations in the $Q_{\mr{hol}}$-cohomology of the $\mc N=4$ supersymmetry algebra and the space of further deformations of the holomorphic twist of the $\mc N=4$ field theory.  As described in Theorem \ref{EY1_holo_thm} above, this holomorphically twisted theory can be identified, perturbatively on $\mathbb{C}^2$, as the complex
 \[\Omega^{\bullet,\bullet}(\mathbb{C}^2, \gg[\eps]) \iso \Omega^{0,\bullet}(\mathbb{C}^2, \gg[\eps, \eps_1, \eps_2]),\]
 where $\eps, \eps_1, \eps_2$ are degree 1 parameters.  There is a three-dimensional space of deformations, corresponding to the three vector fields $\dd_\eps, \dd_{\eps_1}, \dd_{\eps_2}$.  There is a two-dimensional subspace consisting of ``B-type'' deformations spanned by $\dd_{\eps_1}$ and $\dd_{\eps_2}$.  The complement of this plane consists of ``A-type'' deformations, where the complex becomes contractible.
 
 When we identify these deformations as arising from supersymmetric twists however, there are some subtleties.
 \begin{enumerate}
    \item As we discussed in \cite[Section 2.1.1]{EY1}, the two-dimensional family of twists spanned by the A- and B-deformations must differ from Kapustin--Witten's two-parameter family of twists.  Their family does \emph{not} entirely consist of further twists of any single holomorphic supertranslation.  We will identify the twisted theories associated to a larger family of supertranslations, and explain how our family from our previous work, as well as Kapustin and Witten's family, sit inside it.
    
    \item The space of square-zero supertranslations in the $Q_{\mr{hol}}$-cohomology of the $\mc N=4$ supersymmetry algebra and the space of deformations of the $Q_{\mr{hol}}$-twisted theory are both three-dimensional.  There is a map from the former to the latter, but -- somewhat unintuitively -- this map is not linear, but instead is a quadratic $L_\infty$ map.  We identify this map explicitly in Section \ref{22_section}.
 \end{enumerate}
 
 \item Next, in Section \ref{QGL_section}, which is meant to informally suggest a new proposal for the physics of quantum geometric Langlands correspondence, we explain how to go from the derived moduli spaces associated to the $\mc N=4$ topological twists on complex algebraic surfaces of the form $\CC \times C$, to the categories of twisted D-modules occurring in the quantum geometric Langlands correspondence.  This procedure involves the idea of ``categorified geometric quantization'' (considered by \cite{Wallbridge} and Safronov \cite{SafronovGQ}).  We emphasize that this ansatz is not fully developed, and therefore the construction we describe is still somewhat ad hoc, rather than an application of a systematic method.  The idea, in brief, is as follows.  Starting with a 1-shifted symplectic derived stack $(\mc X, \omega)$ we specify two additional pieces of data: 
 \begin{enumerate}
  \item A \emph{prequantization}: a $\bb G_m$-gerbe $\mc G_{\mc X}$ on $\mc X$ with connection, chosen so that its curvature 2-form coincides with $\omega$.
  \item A \emph{polarization}: A 1-shifted Lagrangian fibration $\pi \colon \mc X \to \mc Y$. We will assume that $\mc G_{\mc X} \iso \pi^*\mc G_{\mc Y}$ for some $\bb G_m$ gerbe $\mc G_{\mc Y}$ on $\mc Y$.
 \end{enumerate}
 The ansatz of categorified geometric quantization says, by analogy with the process of geometric quantization for ordinary symplectic phase spaces, that we should quantize $\mc X$ by considering the category $\QC_{\mc G_{\mc Y}}(\mc Y)$ of quasi-coherent sheaves on $\mc Y$ twisted by the gerbe $\mc G_{\mc Y}$. 
 
 We can apply this ansatz to the moduli stacks assigned to the complex surface $\bb C \times C$, where $C$ is a closed algebraic curve, by the various twists of $\mc N=4$ super Yang--Mills.  These moduli stacks come equipped with 1-shifted symplectic forms, and we can run the geometric quantization procedure as discussed above.  The main result is the following.
 \begin{theorem}[see Section \ref{subsub:quantization examples}]
 Let $\flat^\lambda_G(C)_{\mu\text{-dR}}$ be the moduli stack of solutions to the equations of motion in a twisted theory on the algebraic surface $\bb C \times C$. The natural categorified geometric quantization of $\flat^\lambda_G(C)_{\mu\text{-dR}}$ is of the form $\text{D}_\kappa(\bun_G(C))$ where $\kappa=\lambda/\mu$.
 \end{theorem}


\end{enumerate}

\begin{remark}
The current paper discusses a part of the proposal of Kapustin and Witten in terms of derived algebraic geometry. However, a coupling constant doesn't play a role in our analysis. This can be heuristically explained by the physics fact as follows: all our discussion involving categories factors through a holomorphic-topological twist of 4-dimensional $\mc N=4$ theory and such a theory is realized by compactifying a holomorphic-topological twist of the 6-dimensional $\mc N=(2,0)$ theory along a topological torus which explains that the coupling constant $\tau$ should also be topological.

Indeed, in a recent paper by the second author with Surya Raghavendran \cite{RaghavendranYoo}, a different proposal was made for the realization of S-duality in type IIB string theory only \emph{after} performing a twist in the sense of \cite{CostelloLiSUGRA}.  Because the twisted S-duality procedure is defined only after performing a twist (by a supercharge in the 10-dimensional supersymmetry algebra) in which the spacetime directions responsible for the coupling constant $\tau$ become topological, it is insensitive to the value of $\tau$. In Remark \ref{comparison with KW}, we explain how the results of the current paper coupled with the framework of twisted S-duality would recover several known conjectures of equivalences of categories.
\end{remark}

\subsection{Conventions}
If $E$ is a graded vector bundle on a manifold $M$, we write $\mc E$ for its sheaf of smooth sections, and $\mc E_c$ for its cosheaf of compactly supported smooth sections.  We will sometimes refer to the space $\Oloc(\mc E)$ of \emph{local functionals} on $\mc E$.  This is the topological vector space consisting of those functionals which are given by integrating a density that varies polynomially in the fields and their derivatives (see \cite[Section 4.5.1]{CostelloGwilliam2} for a precise definition).

A \emph{local $L_\infty$ algebra} on a manifold $M$ is a $\ZZ \times \ZZ/2\ZZ$-graded vector bundle $E$ where $\mc E$ has the structure of a sheaf of $L_\infty$ algebras in the cohomological $\ZZ$ degrees such that the $L_\infty$ operations are even for the $\ZZ/2\ZZ$-grading, and are given by polydifferential operators $\mc E^{\otimes k} \to \mc E$.

\subsection*{Acknowledgements}
We would like to thank Owen Gwilliam, Justin Hilburn, Pavel Safronov, and Brian Williams for useful conversations during the preparation of this paper.  This research was supported in part by Perimeter Institute for Theoretical Physics. Research at Perimeter Institute is supported by the Government of Canada through Industry Canada and by the Province of Ontario through the Ministry of Economic Development \& Innovation. This work was also supported by the New Faculty Startup Fund from Seoul National University, the National Research Foundation of Korea (NRF) grant funded by the Korea government(MSIT) (No. 2022R1F1A107114212), and the LAMP Program of the National Research Foundation of Korea (NRF) grant funded by the Ministry of Education (No. RS-2023-00301976).

\section{\texorpdfstring{$\mc N=4$}{N=4} Super Yang--Mills Theory}
In this section we will construct an action of the $\mc N=4$ super translation Lie algebra on (complexified) self-dual $\mc N=4$ super Yang--Mills theory on $\RR^4$.  The action of a supercharge commuting with a holomorphic supercharge $Q_\mr{hol}$ induces a vector field on the holomorphically twisted moduli space: we classify all such supercharges and, in the next section, the corresponding vector fields.  Given such a vector field we can calculate the further twist of the moduli space, which we do in all cases.  Included in this family is a subfamily studied in \cite{EY1} parameterized by $\bb{CP}^1$, which induces the moduli spaces occuring in the quantum geometric Langlands program.  This family is closely related to, but does not coincide with, the $\bb{CP}^1$ family of Kapustin--Witten \cite{KapustinWitten}.

\subsection{Twisting Classical Field Theory}
We begin with some general discussion on the notion of supersymmetric twisting for classical field theory.  Our model for classical field theory is based on the classical Batalin--Vilkovisky (BV) formalism \cite{BatalinVilkovisky}.  The starting point of the BV formalism can be viewed as coming from derived geometry: given a space $\mc F$ of fields and an action functional $S$, we would like to study the solutions to the equations of motion.  That is, the space of elements $\phi \in \mc F$ satisfying $\d S(\phi) = 0$.  We can view this as the intersection $\Gamma_{\d S} \cap_{T^* \mc F} \mc F$, where $\Gamma_{\d S}$ is the graph of the derivative $\d S$.  In the classical BV formalism we use homological algebra to model this intersection in a derived sense, i.e. we model the derived intersection $\Gamma_{\d S} \cap^h_{T^* \mc F} \mc F$. We refer to \cite{CostelloBook, CostelloGwilliam2} for more details.

In this paper we will mainly consider the perturbative version of this formalism, meaning that we will try to model not the entirety of this derived intersection, but only the formal neighborhood of a single point.  This formal neighborhood will be described as a formal moduli problem, or equivalently (using the nexus of ideas developed by Pridham \cite{PridhamFMP}, Lurie \cite{DAGX}, To\"en \cite{Toen} and others) as an $L_\infty$-algebra.

\begin{definition}
A \emph{perturbative classical BV theory} on a manifold $M$ is a $\ZZ \times \ZZ/2\ZZ$-graded vector bundle $E$ with sheaf of sections $\mc E$, equipped with a local $L_\infty$-algebra structure on $\mc E[-1]$, and a $(-1)$-shifted symplectic pairing $\omega \colon \mc E \otimes  \mc E \to \dens_M[-1]$ where $\dens_M$ is the density line bundle on $M$.
\end{definition}

We can equivalently realize a classical BV theory as a graded vector bundle $E$ equipped with a $(-1)$-shifted symplectic pairing $\omega$, and a local functional $S \in \Oloc(\mc E)$ satisfying the classical master equation:
\[\{S,S\} = 0,\]
where the bracket here is the \emph{BV antibracket} acting on local functionals in the field theory (see, for instance, \cite[Section 1.2]{ElliottSafronovWilliams} for a construction). The $k^\text{th}$ Taylor coefficient of the functional $S$ can be identified with the order $k-1$ bracket $\ell_{k-1}$ of the $L_\infty$-algebra according to the formula
\[S(\alpha) = \sum_{k \ge 2} \frac 1{k!} \omega(\alpha, \ell_{k-1}(\alpha^{\otimes k-1})).\]

We can define the action of a super Lie group on a classical BV theory in the following way.  We can think of the following definition as introducing global background fields valued in the Lie algebra of a group $G$ of symmetries, and extending the action functional to depend on these background fields.
\begin{definition}
Let $G$ be a super Lie group acting smoothly on a manifold $M$. An \emph{action of $G$ on a classical BV theory $\mc E$} on $M$ consists of the following data:
\begin{enumerate}
 \item An action of $G$ on the DG vector bundle $E$, respecting the symplectic pairing and the $L_\infty$ structure;
 \item An infinitesimal inner action of the Lie algebra $\gg$ on $\mc E$.  That is, a linear map $S_\gg \colon \sym^{\ge 1}(\gg[1]) \to \Oloc(\mc E)$ so that $S + S_\gg$ satisfies the classical master equation. One has the degree $k$ term $S_\gg^{(k)}\colon \sym^k(\mf g[1])\to \Oloc(\mc E)$ for each $k\geq 1$.
\end{enumerate}
These data should together satisfy the condition that the linear term $S^{(1)}_\gg$ in $S_\gg$ coincides with the derivative of the induced $G$-action on $\Oloc(\mc E)$.
\end{definition}

Now, let's begin to study supersymmetric field theories.  These are field theories equipped with an action of a \emph{supersymmetry group}, i.e. a supersymmetric extension of the Poincar\'e group of isometries of a pseudo-Riemannian vector space $\RR^{p,q}$, where $p+q=n$.  It will be enough for us, in order to study twists, to restrict attention to the action of translations and supertranslations.

\begin{definition}
Let $\Sigma$ be a real (resp. complex) spinorial representation of $\Spin(p,q)$, and fix a nondegenerate spin-equivariant $\Gamma \colon \Sigma \otimes \Sigma \to \RR^{n}$ (resp $\CC^n$), where $n = p+q$.  The \emph{real} (resp. \emph{complex}) \emph{supertranslation algebra} associated to $\Sigma$ and $\Gamma$ is the super Lie algebra $\RR^n \oplus \Pi \Sigma$ (resp. $\CC^n \oplus \Pi \Sigma$), with the only non-trivial bracket given by $\Gamma$.
\end{definition}

\begin{definition}
The \emph{4d $\mc N=4$ supertranslation algebra} $\mf T_{\mc N=4}$ is the complex supertranslation algebra in dimension $n=4$ with $\Sigma =  S_+\otimes W \oplus  S_-\otimes  W^*$, where $W$ is a 4 dimensional complex vector space, and $S_\pm$ are the positive and negative helicity Weyl spinor representations.  That is, under $\spin(4) \iso \SU(2) \times \SU(2)$, $S_\pm$ are the defining representations of the two factors. The pairing $\Gamma$ is defined by identifying the complex vector representation of $\spin(4)$ with the tensor product $S_+ \otimes S_-$, and taking the composite
\[(S_+ \otimes W) \otimes (S_- \otimes W^*) \to S_+ \otimes S_- \iso \CC^4.\]
\end{definition}

From now on we will restrict attention to this example, so to classical BV theories on $\RR^4$ equipped with an action of the supertranslation algebra $\mf T_{\mc N=4}$.  We'll be interested in twisting theories of this form.  The idea is that we modify the differential on the classical BV complex $\mc E$ using an odd element $Q$ of the supertranslation algebra, which squares to zero.  We refer to \cite{ElliottSafronovWilliams} for a more detailed discussion.

\begin{definition}
The \emph{twist} of a supersymmetric classical BV theory $\mc E$ by an odd element $Q \in  \mf  T_{\mc N=4}$ satisfying $\Gamma(Q,Q)=0$ is the $\ZZ/2\ZZ$-graded classical field theory $\mc E^Q$ defined by adding the term $S_\gg(Q):=\sum_{k\geq 1 } S_{\gg }^{(k)}(Q,\cdots, Q)$ to the action functional. 
\end{definition}

We can promote the $\ZZ/2\ZZ$-graded theory to a $\ZZ$-graded theory by choosing an additional piece of data.
\begin{definition} \label{twisting_datum_def}
A \emph{twisting datum} is a pair $(\alpha, Q)$, where $Q$ is a square-zero supertranslation as above, and $\alpha$ is an action of $\mr U(1)$ on $\mc E$ so that $Q$ has weight one.  Given a twisting datum, the corresponding $\ZZ$-graded twisted theory is defined by taking the theory $\mc E$ and adding the $\alpha$-weight to the cohomological grading. With respect to the new grading the term $S_\gg(Q)$ has cohomological degree zero, so the $\ZZ/2\ZZ$-graded twisted theory $\mc E^Q$ is promoted to a $\ZZ $-graded theory.
\end{definition}

\begin{remark}
One can also study twisting in the non-perturbative setting, on the level of the derived stack of solutions to the equations of motion.  In \cite{EY1} we discussed twists of $\mc N=4$ super Yang--Mills from this point of view, by considering theories where the moduli space was a derived thickening of a fixed base space, left unchanged by the action of the twisting data.  In this setting we were able to discuss the global structure of twists by twisting the fibers over this fixed base.  In this paper we'll show how the twists we studied there match up with twists associated to the action of specific families of supercharges in the $\mc N=4$ supertranslation algebra.
\end{remark}

\subsection{\texorpdfstring{$\mc N=4$}{N=4} Super Yang--Mills Theory and Twistors} \label{twistor_section}

The theories of our interest are supersymmetric twists of $\mc N=4$ super Yang--Mills theory on $\RR^4$. This theory can be obtained in two very different ways.  Either one can first define a maximally supersymmetric Yang--Mills theory on $\RR^{10}$, with $\mc N=(1,0)$ supersymmetry, and dimensionally reduce it along $\RR^6$, or one can define a supertranslation invariant field theory on a complex 3-fold -- twistor space -- and dimensionally reduce it along a $\bb{CP}^1$-fibration. In this paper, we will construct twists using the second approach after briefly reviewing the first approach.

For the first approach, $\mc N=4$ super Yang--Mills theory on $\RR^4$ is the theory obtained by dimensionally reducing $\mc N=(1,0)$ super Yang--Mills theory on $\RR^{10}$ along a projection $\RR^{10} \to \RR^4$.  This theory degenerates to the theory of \emph{self-dual} $\mc N=4$ super Yang--Mills theory in the following way.  One can describe 4d Yang--Mills theory with fields including a $\gg$-valued 1-form $A$ and a $\gg$-valued 2-form $B$, with action functional including terms of the form $\int \langle \d_+ A \wedge B \rangle + g\langle B \wedge B \rangle$, for $g$ a positive constant.  This is the \emph{first-order formalism} of Yang--Mills theory, as explained -- in the BV formalism -- in \cite{CostelloBook, ElliottWilliamsYoo}.  The degeneration to self-dual Yang--Mills theory entails subtracting the $\langle B \wedge B \rangle$ term from the action functional, or setting $g=0$.  

For the second approach, we will review the construction of self-dual $\mc N=4$ super Yang--Mills theory on $\RR^4$ via dimensional reduction from twistor space, due originally to Witten \cite{WittenTwistor} and Boels, Mason and Skinner \cite{BMS}.  For more details on our perspective we refer the reader to the discussion in \cite{CostelloSH} and \cite[Section 3.2]{EY1}.

\begin{definition}\label{supertwistor_def}
Let $W$ be the 4-dimensional auxiliary space in the $\mc N=4$ supersymmetry algebra.  The \emph{$\mc N=4$ super twistor space} $\bb{PT}^{\mc N=4}$ is the super complex variety defined to be the total space of the super vector bundle $\OO(1) \otimes \Pi W \to \bb{CP}^3$.  The \emph{twistor morphism} associated to this super twistor space is the smooth map
\[p' \colon \bb{PT}^{\mc N=4} \to \bb{HP}^1 \iso S^4,\]
given by composing the projection with the canonical map $\bb{CP}^3 \to \bb{HP}^1$.  When we restrict to the complement of a point $\infty \in S^4$, we obtain a smooth map
\[p \colon \bb{PT}^{\mc N=4} \bs \bb{CP}^1 \to \RR^4,\]
where $\bb{CP}^1$ is the (even part of) the fiber of $p'$ over $\infty$.
\end{definition}

\begin{remark} \label{twistor_CY_remark}
In other words, $\bb{PT}^{\mc N=4}$ is just the super projective space $\bb{CP}^{3|4}$.  This carries a super Calabi--Yau structure, since the Berezinian of $\bb{CP}^{n|s}$ is readily computed to be isomorphic to $\OO(s-n-1)$. 
\end{remark}

The following action is constructed in \cite[Lemma 14.8.1]{CostelloSH}.
\begin{lemma}
There is a natural action of the $\mc N=4$ supertranslation algebra $\mf T_{\mc N=4}$ on $\bb{PT}^{\mc N=4} \bs \bb{CP}^1$.
\end{lemma}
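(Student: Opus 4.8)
The plan is to construct the action by restricting the natural superconformal action on super twistor space. By Remark~\ref{twistor_CY_remark}, $\bb{PT}^{\mc N=4}$ is the super projective space $\bb{CP}^{3|4} = \bb P(\CC^{4|4})$, where the twist by $\OO(1)$ built into its definition records that the four odd homogeneous coordinates carry weight one under the overall $\bb G_m$-scaling. I fix an identification $\CC^{4|4} = (S_+ \oplus S_-^*) \oplus W$, with even homogeneous coordinates $(\omega^\alpha, \pi_{\dot\alpha})$ and odd ones $\psi^a$, chosen so that the incidence relation $\omega^\alpha = x^{\alpha\dot\alpha}\pi_{\dot\alpha}$ realizes points of $\CC^4 = S_+\otimes S_-$. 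Then the $\mc N=4$ superconformal algebra $\mf{psl}(4|4)$ acts on $\CC^{4|4}$ by (super)linear endomorphisms and hence on $\bb{CP}^{3|4}$ by the corresponding projective vector fields, and any Lie subalgebra acts by restriction.

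First I would exhibit $\mf T_{\mc N=4} = V \oplus \Pi\Sigma$, with $V = S_+\otimes S_-$ and $\Sigma = S_+\otimes W \oplus S_-\otimes W^*$, as a subalgebra of $\mf{psl}(4|4)$. In these coordinates the line $\bb{CP}^1$ -- the even part of the fibre of $p'$ over $\infty$ -- is, after a linear change of coordinates, $\bb P(S_+) = \{\pi = 0\}$, and I take as generators: the translations $V = \mr{Hom}(S_-^*, S_+)$, acting by $\omega^\alpha \mapsto \omega^\alpha + P^{\alpha\dot\alpha}\pi_{\dot\alpha}$, and the odd block $\mr{Hom}(W, S_+)$, acting by the odd shift $\omega^\alpha \mapsto \omega^\alpha + c^\alpha_a\psi^a$; these two span the nilradical of the parabolic subalgebra of $\mf{psl}(4|4)$ stabilizing $S_+ \subset \CC^{4|4}$. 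In addition I take the remaining odd block $\mr{Hom}(S_-^*, W)$, which sits in the Levi $\mf{gl}(S_+)\times\mf{gl}(S_-^*\oplus W)$ of that parabolic and acts by shifting $\psi^a$ by a term linear in $\pi_{\dot\alpha}$. All twenty generators are nilpotent linear maps, so they act by regular vector fields on all of $\bb{CP}^{3|4}$ and integrate to an action of $\mf T_{\mc N=4}$; moreover they preserve $\bb{CP}^1 = \bb P(S_+)$ (the nilradical fixing it pointwise, the Levi block preserving it setwise), so restricting the action to $\bb{PT}^{\mc N=4}\bs\bb{CP}^1$ is harmless. The point of removing the line is that it is precisely there that this action is compatible with the twistor morphism $p$ and induces the standard translation action on $\RR^4$.

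What then remains are two verifications. The first is that these generators close into a subalgebra carrying exactly the bracket of $\mf T_{\mc N=4}$: since the summands are distinct matrix blocks, every bracket is a composite of block maps, and one reads off immediately that $[V, V] = 0$, that $V$ is central in the span, that the self-brackets $[\mr{Hom}(W,S_+),\mr{Hom}(W,S_+)]$ and $[\mr{Hom}(S_-^*,W),\mr{Hom}(S_-^*,W)]$ vanish for block-degree reasons, and that the single surviving bracket -- between $\mr{Hom}(S_-^*,W)$ and $\mr{Hom}(W,S_+)$ -- lands in $\mr{Hom}(S_-^*,S_+) = V$ and is the composite obtained by contracting the intermediate $W$; this is exactly the pairing $\Gamma$ in the definition of $\mf T_{\mc N=4}$, after the evident relabeling $W \leftrightarrow W^*$ identifying the two presentations of $\Sigma$. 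The second is $\spin(4)\times\mr{GL}(W)$-equivariance together with compatibility with $p$, which is a direct computation with the explicit vector fields and also pins down the conventions just used. I expect the only genuine obstacle to be organizational: fixing the identification $\CC^4 \iso S_+\oplus S_-^*$ and the incidence relation, and deciding which chirality of $\Sigma$ sits in the parabolic nilradical and which in the Levi, so that the embedded subalgebra is literally $\mf T_{\mc N=4}$ with the stated $\Gamma$. A lower-tech alternative, avoiding $\mf{psl}(4|4)$ altogether, is to write the twenty vector fields directly in an affine chart of $\bb{PT}^{\mc N=4}\bs\bb{CP}^1$, check that they glue over the $\bb{CP}^1$ of charts covering it, and compute the brackets by hand; the content, and in particular the bookkeeping of odd signs, is the same.
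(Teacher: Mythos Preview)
Your proposal is correct and takes a genuinely different route from the paper.  The paper constructs the action by hand on the smooth trivialization $\bb{PT}^{\mc N=4}\bs\bb{CP}^1 \iso \OO(1)\otimes(\Pi W\oplus S_-)\to\bb P(S_+)$: ordinary translations act on the $\RR^4$ base of $p$, the summand $S_-\otimes W^*$ acts as the nilpotent block $\hom(\Pi W,S_-)\subset\eend(\Pi W\oplus S_-)$ on fibers, and $S_+\otimes W$ acts by translating fibers by the associated global section of $\OO(1)\otimes\Pi W$; the commutation relations are then checked directly.  You instead embed $\mf T_{\mc N=4}$ as a block subalgebra of the superconformal algebra $\mf{psl}(4|4)$ acting linearly on $\CC^{4|4}$, and restrict the resulting projective action on $\bb{CP}^{3|4}$.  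Your route has the virtue that the bracket relations are automatic --- they are literally matrix composition in the block decomposition --- and the action is manifestly holomorphic throughout, whereas the paper's description passes through a merely smooth trivialization.  The paper's route, on the other hand, is phrased in exactly the bundle coordinates used in all the later computations (Section~\ref{susy_action_section} and the proofs in Section~\ref{twist_section}), so it feeds directly into those arguments without any translation.

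One small point of bookkeeping: you identify the removed $\bb{CP}^1$ with $\bb P(S_+)=\{\pi=0\}$, but in the paper's conventions $\bb P(S_+)$ is the \emph{base} of the trivialized fibration, not the removed fiber at infinity.  This is only a labeling issue (swap the roles of $S_+$ and $S_-^*$ in your homogeneous coordinates, or equivalently interchange which chirality sits in the nilradical and which in the Levi), and you flag it yourself with the $W\leftrightarrow W^*$ remark; but if you want your explicit vector fields to match the formulas used downstream it is worth aligning the convention at the outset.
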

This action is constructed as follows.  
\begin{itemize}
 \item First, the ordinary translations act by translating the base of the fibration $p \colon \bb{PT}^{\mc N=4} \bs \bb{CP}^1 \to \RR^4$.
 \item There is also a canonical identification $\Pi S_+\otimes W \iso H^0(\bb P(S_+), \OO(1) \otimes \Pi W)$.  We can use this identification to define an action $\Pi S_+\otimes W $ on the total space of our vector bundle by translating the fiber by the section associated to an element of $\Pi S_+\otimes W $.
 \item As $\bb{PT}\setminus \bb{CP}^1= \bb{CP}^3 \setminus \bb{CP}^1$ is identified as the total space of $\OO(1)\otimes S_-\to \bb P(S_+)$, we can identify $\bb{PT}^{\mc N=4} \bs \bb{CP}^1$ with the total space $\OO(1) \otimes (S_- \oplus \Pi W) \to \bb P(S_+)$. There is an action of $\Pi S_-\otimes  W^* $ on this total space, by embedding $\Pi S_-\otimes  W^* \iso \hom(\Pi W, S_-)$ in $\eend(S_- \oplus \Pi W)$, and acting on the fibers.
\end{itemize}
One can then check that these three constituent parts of the action satisfy the appropriate commutation relations.

We will obtain $\mc N=4$ super Yang--Mills theory by dimensionally reducing an appropriate super translation invariant theory built using super twistor space.
\begin{definition}
\emph{Holomorphic Chern--Simons theory} on $\mc N=4$ super twistor space with Lie algebra $\gg$ is the classical BV theory with BV fields $\Omega^{0,\bullet}(\bb{PT}^{\mc N=4} \bs \bb{CP}^1) \otimes \gg[1]$, with bracket given by the combination of the super Calabi--Yau structure on super twistor space and the invariant pairing on $\gg$.
\end{definition}

The following theorem is proven on the level of cochain complexes in \cite[Proposition 3.17]{EY1}.  For the non-supersymmetric theories it was proven on the level of $L_\infty$ algebras by Movshev \cite{Movshev}.

\begin{theorem}
There is a supertranslation-equivariant equivalence of classical field theories between the following two theories: \vspace{-1em}
\begin{itemize}
 \item The dimensional reduction of holomorphic Chern--Simons theory along the projection $p$;
 \item Self-dual $\mc N=4$ super Yang--Mills theory on $\RR^4$.
\end{itemize}
\end{theorem}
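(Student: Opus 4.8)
The plan is to treat both sides as cyclic local $L_\infty$ algebras carrying a compatible $\mf T_{\mc N=4}$-action, so that the statement decomposes into three tasks: (i) produce a quasi-isomorphism of the underlying cochain complexes obtained by pushing forward $\Omega^{0,\bullet}(\bb{PT}^{\mc N=4} \bs \bb{CP}^1)\otimes \gg[1]$ along $p$; (ii) upgrade it to a cyclic $L_\infty$ quasi-isomorphism onto first-order self-dual $\mc N=4$ super Yang--Mills; and (iii) check that the two $\mf T_{\mc N=4}$-actions are intertwined. Task (i) is \cite[Proposition 3.17]{EY1}, which I would recall; the genuinely new content is (ii) together with (iii).

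For (i): trivialize $p$ over $\RR^4$ so that $\bb{PT}^{\mc N=4} \bs \bb{CP}^1$ is identified with the pullback to $\RR^4$ of the total space of $\OO(1) \otimes (\Pi W \oplus S_-) \to \bb P(S_+) \iso \bb{CP}^1$, and compute the Dolbeault cohomology along the $\bb{CP}^1$ fibers using Hodge theory. The fiberwise cohomology of the relevant twisted bundles -- built from $H^\bullet(\bb{CP}^1, \OO)$, $H^\bullet(\bb{CP}^1, \OO(-2))$, and the odd and $S_-$-summands coming from $H^\bullet(\bb{CP}^1, \OO(1))$ -- reassembles, once one organizes it by the action of $\Spin(4) \iso \SU(2)\times\SU(2)$ and of the $R$-symmetry on $W$, into precisely the BV complex of first-order self-dual $\mc N=4$ super Yang--Mills on $\RR^4$: a connection $A$, an anti-self-dual two-form $B$, fermions in $S_\pm \otimes W$ and $S_\pm \otimes W^*$, and scalars in $\Lambda^2 W$. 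A choice of fiberwise Green's operator and harmonic projection gives an explicit deformation retract realizing the quasi-isomorphism, together with its contracting homotopy.

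For (ii): since holomorphic Chern--Simons theory is a cyclic \emph{dg} Lie algebra (cubic action), the deformation retract from (i) transfers its structure, via the homotopy transfer theorem, to a cyclic $L_\infty$ structure on the reduction, with binary bracket induced from ``wedge and Lie bracket'' and a priori higher brackets assembled from the fiberwise propagator. I would then argue these reproduce first-order self-dual $\mc N=4$ super Yang--Mills exactly: the binary bracket yields the cubic vertex $\int \langle \d_+ A \wedge B \rangle$ together with the minimal and Yukawa couplings of $B$, the fermions and the scalars, and the higher transferred brackets either vanish for representation-theoretic and weight reasons on the compact $\bb{CP}^1$ fiber or are removed by an explicit $L_\infty$-isomorphism (a field redefinition). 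For the purely non-supersymmetric (pure gauge) sector this last point is Movshev's theorem \cite{Movshev}; the matter and supersymmetric completion is then pinned down by $R$-symmetry and (anticipating (iii)) $\mf T_{\mc N=4}$-equivariance, together with the matching of cyclic pairings -- the fiber-integrated super Calabi--Yau pairing (the Berezinian of $\bb{CP}^{3|4}$ is $\OO(4-3-1) \iso \OO$, trivial) tensored with the Killing form on the twistor side, against the first-order BV pairing of $A$ with $B$ and the standard pairings on the matter fields on the spacetime side.

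For (iii): the quoted lemma gives an action of $\mf T_{\mc N=4}$ on $\bb{PT}^{\mc N=4} \bs \bb{CP}^1$ preserving the super Calabi--Yau structure, hence an inner action on holomorphic Chern--Simons theory and on its reduction; I would check it is intertwined with the standard $\mc N=4$ supertranslation action on self-dual Yang--Mills generator by generator, using the explicit description of the twistor action recalled in the excerpt. Ordinary translations act geometrically on $\RR^4$ on both sides; for the supercharges in $\Pi W \otimes S_+$, which act on twistor space by translating fibers along global sections of $\OO(1) \otimes \Pi W$, and those in $\Pi W^* \otimes S_-$, which act through $\hom(\Pi W, S_-) \hookrightarrow \eend(\Pi W \oplus S_-)$ on fibers, one computes the induced map on fiberwise cohomology classes and compares with the known supersymmetry variations of the fields. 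The main obstacle is the conjunction of (ii) and (iii): controlling the homotopy transfer along a fibration which is compact in the $\bb{CP}^1$ direction but non-compact along $\RR^4$, so as to be sure that the reduced theory is \emph{strictly} (or up to an explicit $L_\infty$-isomorphism) the cubic first-order self-dual $\mc N=4$ theory with its standard supersymmetry -- that is, that no spurious higher brackets and no anomalous supersymmetry-variation terms survive. This is where one leans most heavily on Movshev's bosonic result and on the rigidity forced by $R$-symmetry together with $\mf T_{\mc N=4}$-equivariance.
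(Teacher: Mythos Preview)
The paper does not actually prove this theorem. Immediately before stating it, the paper writes: ``The following theorem is proven on the level of cochain complexes in \cite[Proposition 3.17]{EY1}. For the non-supersymmetric theories it was proven on the level of $L_\infty$ algebras by Movshev \cite{Movshev}.'' No further argument is given; the result is then used as input for the rest of the paper. So there is no ``paper's own proof'' to compare against---the paper treats the full supertranslation-equivariant $L_\infty$ statement as established background, citing only the cochain-level version and the bosonic $L_\infty$ version.

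Your proposal is therefore more ambitious than anything the paper does, and it is consistent with the references the paper invokes: your step (i) is exactly \cite[Proposition 3.17]{EY1}, and you correctly lean on Movshev for the bosonic $L_\infty$ sector in step (ii). Your outline of (iii)---transporting the geometric $\mf T_{\mc N=4}$-action along the deformation retract and comparing generator-by-generator with the standard supersymmetry variations---is the natural strategy and is in fact what the paper carries out \emph{after} the holomorphic twist in Propositions \ref{action_of_positive_spinor_prop} and \ref{action_of_negative_spinor_prop}, rather than on the untwisted theory. The honest caveat you flag at the end (controlling the transferred higher brackets and verifying strict equivariance, not just equivariance up to homotopy) is real: the paper does not address it, and a complete proof along your lines would need either a direct computation or a rigidity argument beyond what either reference supplies. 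In short, your sketch is sound as a roadmap, but you should be aware that the paper itself simply asserts the result by citation rather than proving it.
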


\begin{remark}
We will not need to be too concerned with the difference between ordinary and self-dual $\mc N=4$ super Yang--Mills theory.  After taking any non-trivial twist, the difference between the two action functionals becomes $Q$-exact, and hence the two theories become equivalent.
\end{remark}

\begin{remark}
One feature of the super twistor space perspective, and one of the main topics of \cite{EY1}, is that it allows us to construct \emph{algebraic} structures on twists of super Yang--Mills theory.  Indeed, super twistor space, being a projective space, naturally has the structure of a complex algebraic variety.  We used this algebraic structure to realize the holomorphic twist of $\mc N=4$ super Yang--Mills and its further deformations as algebraic quantum field theories on $\CC^2$, or more generally on a smooth algebraic surface. This in particular means that the classical BV complex $\mc E$ can be described in a purely algebraic way, where the $L_\infty$ structure can also be given in terms of algebraic differential operators.
\end{remark}

\subsection{Description of the Supersymmetry Action} \label{susy_action_section}
Let's use the manifest supertranslation action on super twistor space to give an explicit description of the supertranslation action on the classical BV complex of self-dual $\mc N=4$ super Yang--Mills theory on $\RR^4$.  We begin with the 6d holomorphic Chern--Simons theory on super twistor space itself.  The supertranslation action here is inherited from the supertranslation action on super twistor space.  Thus, there are two different descriptions of the positive and negative sectors of the space of supertranslations.   We will use the notation $u_1, u_2$ and $\theta_1, \ldots, \theta_4$ for complex coordinates in the even and odd directions respectively of the fiber of the map $\pi \colon \bb{PT}^{\mc N=4} \setminus\bb{CP}^1 \to \bb P(S_+)$, namely, for $S_-$ and $\Pi W$, respectively.

\begin{itemize}
 \item First, given an element $Q_+ = \alpha \otimes w \in S_+ \otimes W$, it acts on super twistor space by a shear transformation.  In other words, $Q_+$ defines a section of $\OO(1) \otimes W$, and hence a vector field on the total space of the bundle In the coordinates above, this transformation can be written as the vector field $\alpha (w \cdot \dd_\theta)$, where $w \cdot \dd_\theta$ is a vector field on the fiber $\Pi W$. Hence, on holomorphic Chern--Simons theory, the supertranslation acts by adding this differential operator to the differential $\ol \dd$ on $\Omega^{0,\bullet}(\bb{PT}^{\mc N=4}\setminus\bb{CP}^1; \gg)$. 
 
 \item Now, given an element $Q_- = \beta  \otimes w^* \in S_- \otimes W^*$, it acts on super twistor space by a superspace rotation.  In the coordinates above, this transformation can be written as the vector field at the coordinate $(u,\theta)$ given by $\langle w^*, \theta \rangle (\beta  \cdot \dd_u)$, where $\langle -,- \rangle$ denotes the $\CC$-valued pairing between $W$ and $W^*$, and $\beta  \cdot \dd_u$ is the constant vector field on $S_-$ associated to the element $\beta $.  Therefore, on holomorphic Chern--Simons theory, the supertranslation acts by adding this differential operator to the differential $\ol \dd$ on $\Omega^{0,\bullet}(\bb{PT}^{\mc N=4}\setminus\bb{CP}^1; \gg)$. 
\end{itemize}

\subsection{Classification of Twists} \label{classification_section}
There are several possible twists of $\mc N=4$ super Yang--Mills theory.  We obtain a twisted theory for each square-zero supertranslation $Q$; within the space of square-zero supercharges there are a few different equivalence classes of twisted theory, corresponding to the orbits under the action of $\spin(4)$, and of the R-symmetry group $\SL(4;\CC)$.  This classification (in all dimensions) was analyzed in \cite{ElliottSafronov, ElliottSafronovWilliams, EagerSaberiWalcher}, and we refer there for more details.

An element $Q \in S_+ \otimes W \oplus S_- \otimes W^*$ is equivalent to a pair of homomorphisms $W^* \to S_+$ and $W \to S_-$.  We index such a pair of homomorphisms by their ranks, so by a pair of non-negative integers $(p,q)$, both at most 2.  The orbits in the space of square-zero supercharges admit the following classification:
\begin{itemize}
 \item Rank $(1,0)$ and $(0,1)$.  These twists are holomorphic.
 \item Rank $(2,0)$ and $(0,2)$.  These twists are topological.
 \item Rank $(1,1)$.  This twist is intermediate, with 3-invariant directions.
 \item Rank $(2,1)$ and $(1,2)$.  These twists are topological.
 \item Rank $(2,2)$.  Unlike the earlier examples, the space of square-zero supercharges of full rank decomposes further.  There is a $\CC^\times$-family of orbits indexed by a parameter $s$, which can be obtained as the ratio between the canonical volume form on the $\SL(4;\CC)$-representation $W$ and the volume form induced from the volume forms on $S_+$ and $S_-$ by the element $(Q_+, Q_-) \in S_+ \otimes W \oplus S_- \otimes W^*$, viewed as a short exact sequence
 \[0 \to S_+ \to W \to S_- \to 0.\]
\end{itemize} 

It is sometimes valuable to restrict attention to those supertranslations which are fixed by a twisting homomorphism from $\spin(4;\CC)$ to the R-symmetry group $\SL(4;\CC)$.  The idea is as follows.  The space of supertranslations is a module for $\spin(4;\CC) \times \SL(4;\CC)$.  Fix a subgroup $\iota \colon G \inj \spin(4;\CC)$ and a homomorphism $\phi \colon G \to \SL(4;\CC)$, called a \emph{twisting homomorphism}.

\begin{definition}
A supertranslation $Q$ is said to be \emph{compatible} with a twisting homomorphism $\phi$ if it is invariant for the action of the subgroup
\[ (\iota, \phi) \colon G \to \spin(4;\CC) \times \SL(4;\CC).\]
\end{definition}
If $Q$ is compatible with a twisting homomorphism $\phi$, then we can use the twisting homomorphism to define the $Q$-twisted theory not just on $\RR^4$, but on general $G$-structured manifolds.

Consider the \emph{Kapustin--Witten twisting homomorphism}:
\begin{align*}
\phi_{\mr{KW}} \colon \spin(4;\CC) \iso \SL(2;\CC)_+ \times \SL(2;\CC)_- &\to \SL(4;\CC) \\
(A,B) &\mapsto \mr{diag}(A,B).
\end{align*}
It's easy to check that there is a $\bb{CP}^1$-family of inequivalent supertranslations $Q$ compatible with the full Kapustin--Witten twisting homomorphism, consisting of the $\CC^\times$-family of rank $(2,2)$-twists, with the rank $(0,2)$ and $(2,0)$ twists at the poles.  

From now on we will fix a basis $\{\alpha_1, \alpha_2\}$ for $S_+$, a basis $\{\beta_1, \beta_2\}$ for $S_-$, and a basis $\{e_1,e_2,f_1,f_2\}$ for $W$.  Under the Kapustin--Witten twisting homomorphism, $\SL(2;\CC)_+$ will act on $e_1$ and $e_2$ by the defining representation, and likewise $\SL(2;\CC)_-$ will act on $f_1$ and $f_2$.  Linear combinations of the rank $(2,0)$ and $(0,2)$ supertranslations $\alpha_1 \otimes e_2 - \alpha_2 \otimes e_1$ and $\beta_1^ \otimes f_1^* - \beta_2 \otimes f_2^*$  will be left invariant by the twisted $\spin(4;\CC)$ action.

Now, more generally, we can consider the restriction of the Kapustin--Witten twisting homomorphism to the diagonal maximal torus $\CC^\times \times \CC^\times \subset \SL(2;\CC)_+\times \SL(2;\CC)_-$.

\begin{prop} \label{P1xP1_family_prop}
There is a $(\bb{CP}^1 \times \bb{CP}^1)/\CC^\times$-family of supertranslations compatible with this restricted twisting homomorphism, where the group $\CC^\times$ acts by simultaneously rescaling the two factors.
\end{prop}

More precisely, we are going to define a map of quotient stacks
\[(\bb{CP}^1 \times \bb{CP}^1)/\CC^\times \to (S_+ \otimes W \oplus S_- \otimes W^*)/(\Spin(4;\CC) \times \SL(4;\CC))\]
whose image factors through the substack of the target quotient stack on which the subgroup $(\CC^\times)^2 \sub \Spin(4;\CC) \times \SL(4;\CC)$ defined using the Kapustin--Witten twisting homomorphism acts trivially.

\begin{proof}
Under the $\CC^\times \times \CC^\times$-action induced from the Kapustin--Witten twisting homomorphism, the elements $e_1, e_2, f_1, f_2$ in $W$ have weights $(1,0), (-1,0), (0,1)$ and $(0,-1)$ respectively.  The dual elements $e_1^*, e_2^*, f_1^*, f_2^*$ in $W^*$ likewise have weights $(-1,0), (1,0), (0,-1)$ and $(0,1)$.  There is, therefore, a four-dimensional space of weight $(0,0)$ elements, spanned by the four elements
\[\alpha_1 \otimes e_2,\ \alpha_2 \otimes e_1,\ \beta_1 \otimes f_1^*, \text{ and } \beta_2\otimes f_2^*.\]

Let us denote this four-dimensional space by $U \sub (S_+ \otimes W \oplus S_- \otimes W^*)$.  We will now investigate the image of $U$ in the quotient stack $(S_+ \otimes W \oplus S_- \otimes W^*)/(\Spin(4;\CC) \times \SL(4;\CC))$.  The subgroup of $\Spin(4;\CC) \times \SL(4;\CC)$ that leaves $U$ fixed is just the normalizer of $(\CC^\times)^2 \sub \Spin(4;\CC) \times \SL(4;\CC)$, which is generated by two factors: the normalizer of $(\CC^\times)^2$ inside the Kapustin--Witten embedded copy of $\Spin(4;\CC)$, and the centralizer of $(\CC^\times)^2$ inside $\SL(4;\CC)$.

Let us first consider the second factor.  This centralizer is three dimensional.  It is generated by a copy of $\CC^\times$ acting on $U \iso \CC^4$ with weights $(0,1,1,0)$, a copy of $\CC^\times$ acting with weights $(1,0,0,1)$, and finally a copy of $\CC^\times$ acting with weights $(0,1,0,1)$.  Let us restrict to the locus where the first two factors act freely.  The quotient of this locus $(\CC^2 \bs  \{0\} )^2$ by the $(\CC^\times)^2$ action is exactly $\bb{CP}^1 \times \bb{CP}^1$ with homogeneous coordinates $(( \alpha_1 \otimes e_2 :  \beta_2 \otimes f_2^* ), (  \beta_1 \otimes f_1^*  : - \alpha_2 \otimes e_1))$, where we abuse the notation to denote coordinates using the corresponding basis. The coordinate is chosen in such a way that the third factor of $\CC^\times$ acts by simultaneous rescaling.
\end{proof}

\begin{remark}
Let us briefly discuss the remaining factor of the normalizer: the action of the normalizer of $(\CC^\times)^2$, which is an abelian subgroup of $\SL(2;\CC)^2$ containing  $(\CC^\times)^2$ as an index two subgroup.  This subgroup acts trivially on $U$ by definition.  The quotient (the Weyl group $(\ZZ/2\ZZ)^2$ of $\SL(2;\CC)^2$) does not preserve the locus considered above on which $(\CC^\times)^2$ acts freely.  A more refined picture of the moduli space of twisting supercharges would use this larger locus and keep track of this discrete symmetry, but we will not need this in what follows.
\end{remark}

In the current paper, we are going to study twists that can be defined on a spacetime manifold taking the form of a product of two Riemann surfaces.  Such theories arise when we consider twists by supercharges that are compatible with $\CC^\times \times \CC^\times \iso \spin(2;\CC) \times \spin(2;\CC) \sub \Spin(4;\CC)$, which is isomorphic to the diagonal maximal torus in $\SL(2;\CC) \times \SL(2;\CC)$.  For this reason, throughout the paper we are going to work with a fixed twisting datum $\alpha \colon \mr U(1)\to G_R= \SL(4;\CC)$ with respect to which $e_1,e_2$ have weight $1$ and $f_1,f_2$ have weight $-1$. Note that any linear combination of $\alpha_1 \otimes e_2$, $\alpha_2 \otimes e_1$, $\beta_1 \otimes f_1^*$, and $\beta_2 \otimes f_2^*$ has weight 1 with respect to $\alpha$. In what follows, a twist by a supercharge $Q$ will always be considered together with this choice of $\alpha$.

\begin{remark}
 Let us take a moment to comment on the spaces of twists that will occur in the actual computations that we will perform.  We will describe, particularly in Theorem \ref{22_theorem}, a family of twists of $\mc N=4$ super Yang--Mills theory parameterized by the affine space $\bb A^3$.  This family of twists will be $\CC^\times$ invariant, and will descend, after removing a line, to a family over $\bb A^1 \times \bb P^1 \sub \bb P^1 \times \bb P^1$.  Note that these computations depend on a choice of point in one of the two $\bb P^1$ factors (or equivalently, on an initial choice of holomorphic supercharge).
\end{remark}

\section{Twisting \texorpdfstring{$\mc N=4$}{N=4} Theory} \label{twist_section}
Having set up the necessary background, we can discuss the computation of the twist of $\mc N=4$ super Yang--Mills theory by the various twists described above.  Our approach will be to begin by computing the holomorphic twists by rank $(1,0)$ and $(0,1)$ supercharges and then their further twists in each case.  Twists of rank $(1,1)$ and higher can be obtained starting from either holomorphic point, and we will compare the two descriptions.

\subsection{Dimensional Reduction}
Let us first investigate the dimensional reduction of the action discussed in Section \ref{susy_action_section} to $\RR^4$.  We first identify the classical BV complex of our holomorphic Chern--Simons theory as follows.  

\begin{prop} \label{hCS_BV_complex_prop}
The pushforward of the classical BV complex of holomorphic Chern--Simons theory on super twistor space to $\RR^4$ is equivalent to the total complex of the triple complex
\[\bigoplus_{i,j,k} C^\infty_\CC(\RR^4) \otimes \Omega^{0,i}(\bb P(S_+); \wedge^j(\mc O(1)\otimes S_-) \otimes \sym^k(\Pi \mc O(-1)\otimes W^* ) \otimes \mf g )), \]
where the summands live in cohomological degree $i+j-1$ and fermionic degree $k$ (mod 2).  The differential is given by the sum of $\d_i = \ol \dd_{\bb P(S_+)}$ and $\d_j$, which is given by the operator $\sum_{a,b=1}^2 \frac{\dd }{\dd x_{ab}} \otimes (\alpha_a \otimes \beta_b)$, where $x_{11},x_{12},x_{21},x_{22}$ are (complexified) coordinates on $\RR^4$, and $\alpha_a \otimes \beta_b$ is considered as a section of $\OO(1) \otimes S_-$.  There is no additional differential in the $k$ direction.
\end{prop}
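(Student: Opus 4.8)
The plan is to unwind the definition of holomorphic Chern--Simons theory on $\bb{PT}^{\mc N=4}\bs\bb{CP}^1$ and use the smooth trivialization of the twistor fibration $\pi\colon\bb{PT}^{\mc N=4}\bs\bb{CP}^1\to\bb P(S_+)$ that already appeared in the construction of the supertranslation action. First I would recall that, as a smooth (super) bundle over $\bb P(S_+)\cong\bb{CP}^1$, the complement $\bb{PT}^{\mc N=4}\bs\bb{CP}^1$ is the total space of $\OO(1)\otimes(\Pi W\oplus S_-)$: the even fiber directions $z_1,z_2$ are the $\OO(1)\otimes S_-$ summand, and the odd fiber directions $\theta_1,\dots,\theta_4$ are the $\OO(1)\otimes\Pi W$ summand. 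Because the fiber directions are linear, the Dolbeault complex $\Omega^{0,\bullet}$ of the total space, computed along the fibers via a fiberwise Fourier/polynomial expansion, decomposes as $\Omega^{0,\bullet}(\bb P(S_+))$ tensored with polynomials in the fiber coordinates and their conjugates. The even coordinates $z_i$ are coordinates on $\CC^2$, which becomes the $C^\infty(\CC^2)$ factor (after the choice of point $\infty$ that produces $p\colon\bb{PT}^{\mc N=4}\bs\bb{CP}^1\to\RR^4$, the holomorphic polynomial dependence on the fiber of $S_-$ combines with the antiholomorphic dependence to give smooth functions on $\CC^2\cong\RR^4$); the antiholomorphic fiber directions of $\OO(1)\otimes S_-$ contribute $\wedge^\bullet(\mc O(1)\otimes S_-)$, since $\overline{\partial}z_i$ transforms as a section of $\mc O(1)\otimes S_-$; and the odd coordinates $\theta$, being Grassmann, contribute a \emph{symmetric} (rather than exterior) algebra on $\Pi\mc O(-1)\otimes W^*$ — the sign flip from $\Pi$ turns the naive exterior algebra on odd variables into a symmetric algebra, and the dual/opposite twist $\mc O(-1)\otimes W^*$ appears because $\theta$ is a \emph{coordinate} on $\OO(1)\otimes\Pi W$. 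This reproduces exactly the claimed triple complex.

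Next I would pin down the gradings. The cohomological degree of holomorphic Chern--Simons fields is $\Omega^{0,\bullet}\otimes\gg[1]$, so a $(0,i)$-form contributes $i-1$; the Dolbeault form degree along the base is $i$, and the $\wedge^j(\mc O(1)\otimes S_-)$ factor carries the fiberwise antiholomorphic form degree, contributing $j$; hence total cohomological degree $i+j-1$, matching the statement. The odd/fermionic $\ZZ/2\ZZ$-degree is carried solely by the $\sym^k(\Pi\mc O(-1)\otimes W^*)$ factor, which has parity $k$ mod $2$, since the $\theta$'s are the only odd generators and each power of a $\theta$ flips parity — again as asserted.

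Then I would identify the differential. The Dolbeault operator $\overline{\partial}$ on the total space splits, in the trivialization, into: (i) the base operator $\overline{\partial}_{\bb P(S_+)}$, which is the stated $\d_i$; (ii) the piece differentiating with respect to the fiber coordinates. For the even fiber coordinates $z_i$, after the identification $\alpha_1\otimes\alpha_i^\vee\mapsto\overline{\partial}_{z_i}$ fixed just above the proposition, this piece is precisely $\d_j=\sum_{i=1}^2\overline{\partial}_{z_i}\otimes(\alpha_1\otimes\alpha_i^\vee)$, where $\alpha_1\otimes\alpha_i^\vee$ is read as a section of $\mc O(1)\otimes S_-$ landing in the $\wedge^\bullet$ factor; this uses the explicit $\spin(4)$-equivariant identification $S_+\otimes S_-\cong\CC^4$ and the transition functions of $\OO(1)$, which are exactly what makes $\overline{\partial}z_i$ a well-defined $\mc O(1)\otimes S_-$-valued object globally on $\bb P(S_+)$. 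For the odd fiber coordinates there is no further $\overline{\partial}$-differential: $\overline\partial\theta_a$ is already accounted for because the $\theta$'s are holomorphic in the fiber and the only surviving generator is the holomorphic $\theta$ itself, so the $\sym^k$ factor is a flat direction — hence ``no additional differential in the $k$ direction.'' I expect the main obstacle to be bookkeeping the $\spin(4)\times\SL(4)$-equivariance and the twists of $\OO(\pm1)$ consistently: one must check that the polynomial/Fourier expansion along the fibers is compatible with the gluing over $\bb P(S_+)$ so that each factor $\wedge^j(\mc O(1)\otimes S_-)$ and $\sym^k(\Pi\mc O(-1)\otimes W^*)$ is an honest bundle on $\bb P(S_+)$ with the stated twist, and that the cross-terms in $\overline\partial$ mixing base and fiber directions assemble precisely into the covariant operator $\d_j$ with no extra connection terms. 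The remaining verification — that the $(-1)$-shifted pairing from the super Calabi--Yau structure and the invariant form on $\gg$ is carried along correctly, and that nothing is lost by restricting from $\bb{CP}^1$ to its complement $\RR^4$ — is routine given the setup in \cite{EY1, CostelloSH} and I would only remark on it.
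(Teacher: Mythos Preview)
Your proposal is correct and follows essentially the same route as the paper: both use the smooth identification of $\bb{PT}^{\mc N=4}\bs\bb{CP}^1$ with the total space of $\mc O(1)\otimes(\Pi W\oplus S_-)$ over $\bb P(S_+)$ to decompose the Dolbeault complex into base, even-fiber, and odd-fiber contributions, and then read off the gradings and the two pieces of $\ol\dd$. The only organizational difference is that the paper first integrates out the odd directions to reduce to bosonic twistor space and then invokes the global formula $T^*_{\bb{PT}\bs\bb{CP}^1}=\pi^*(\mc O(-2)\oplus\mc O(-1)\otimes S_-^*)$, which makes the identification $\ol{\mc O(-1)\otimes S_-^*}\cong\mc O(1)\otimes S_-$ and the global consistency of the $\mc O(\pm 1)$ twists over $\bb P(S_+)$ cleaner than your coordinate-by-coordinate check; but this is a difference in bookkeeping, not in substance.
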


\begin{proof}
The BV complex for holomorphic Chern--Simons theory on $\bb{PT}^{\mc N=4}$ is given by \[\Omega^{0,\bullet}(\bb {PT}^{\mc N=4}\setminus\bb{CP}^1 ; \mf g ) \cong \Omega^{0,\bullet}(\bb{PT}\setminus \bb{CP}^1; \sym^\bullet( \Pi(\mc O(-1)\otimes W^* ) \otimes \mf g)\]
with the differential induced from the Dolbeault differential of $\bb {PT}^{\mc N=4}\setminus\bb{CP}^1$. Note that the Dolbeault forms are generated by functions and anti-holomorphic 1-forms. As $T^*_{\bb{PT}\setminus \bb{CP}^1} =\pi^*(\mc O(-2)\oplus \mc O(-1)\otimes S_-^* )$, where $\pi \colon \bb{PT}\setminus \bb{CP}^1 \to \bb P(S_+)$, one has
\[\Omega^{0,\bullet}(\bb{PT}\setminus \bb{CP}^1) \iso \Gamma(\bb{PT}\setminus \bb{CP}^1, \wedge^\bullet \ol{T^*_{\bb{PT}\setminus \bb{CP}^1}}) \iso  \Gamma( \bb{PT}\setminus \bb{CP}^1; \wedge^\bullet \pi^*( \ol{\mc O(-2)}\oplus \ol{ \mc O(-1)\otimes S_-^* } ) ). \]
Its pushforward to $\RR^4$ is 
\[C^\infty_\CC(\RR^4) \otimes\Gamma(\bb P(S_+) ; \wedge^\bullet(\ol{\mc O(-2)}\oplus \mc O(1)\otimes S_- ) )\iso C^\infty_{\CC}(\RR^4) \otimes \Omega^{0,\bullet}(\bb P(S_+) ; \wedge^\bullet(\mc O(1)\otimes S_-) ).\]
Hence the BV complex of interest is
\[\bigoplus_{i,j,k} C^\infty_\CC(\RR^4) \otimes \Omega^{0,i}(\bb P(S_+); \wedge^j(\mc O(1)\otimes S_-) \otimes \sym^k(\Pi \mc O(-1)\otimes W^* ) \otimes \mf g ) \]
where such a term is in cohomological degree $i+j-1$ and fermionic degree $k$ (mod 2). The Dolbeault differential has summands associated to coordinates on the base $\bb P(S_+)$ and global sections of the fibration $\pi \colon \bb{PT}\setminus \bb{CP}^1 \to \bb P(S_+)$.  The first summand $\d_i$ is just the Dolbeault differential on the base.  The second summand $\d_j$ is the sum of four terms associated to a basis for the space of holomorphic sections of the vector bundle $\OO(1) \otimes S_-$.
\end{proof}

Dimensional reduction to $\RR^4$ simply entails taking the $\d_i$-cohomology.  We can compute what happens to the differential $\d_j$ and the supertranslation action using the corresponding spectral sequence.

\begin{prop}[{\cite[Proposition 3.17]{EY1}}] \label{BV_SS_prop}
The pushforward of the classical BV complex of holomorphic Chern--Simons theory on super twistor space is quasi-isomorphic to the following $\ZZ \times \ZZ/2\ZZ$-graded complex by taking $\d_i$-cohomology and computing the further differentials acting on the $E_1$ page of the spectral sequence of the double complex:
\[\xymatrix{
\ul{-1} &\ul{0} &\ul{1} &\ul{2} \\
\Omega^0(\RR^4; \gg) \ar[r]^{\d} &\Omega^1(\RR^4; \gg) \ar[r]^{\d_+} &\Omega^2_+(\RR^4; \gg ) & \\
&\Omega^0(\RR^4; S_- \otimes \Pi W^* \otimes \gg) \ar[r]^{\sd {\d}} & \Omega^0(S_+ \otimes \Pi W^* \otimes \gg) & \\
&\Omega^0(\RR^4; \wedge^2 W^* \otimes \gg) \ar[r]^{\Delta} &\Omega^0(\RR^4; \wedge^2 W^* \otimes \gg) & \\
&\Omega^0(\RR^4; S_+ \otimes \Pi \wedge^3 W^* \otimes \gg) \ar[r]^{\sd{\d}} &\Omega^0(\RR^4; S_- \otimes \Pi \wedge^3 W^* \otimes \gg) & \\
&\Omega^2_+(\RR^4; \wedge^4 W^* \otimes\gg) \ar[r]^{\d} &\Omega^3(\RR^4; \wedge^4 W^* \otimes\gg) \ar[r]^{\d} &\Omega^4(\RR^4; \wedge^4 W^* \otimes\gg).
}\]
\end{prop}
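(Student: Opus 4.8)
The plan is to run the spectral sequence of the double complex of the preceding proposition, whose two differentials are $\d_i$ (Dolbeault degree along the base $\bb P(S_+)\iso\bb{CP}^1$) and $\d_j$ (Dolbeault degree along the two-dimensional fiber); the fermionic index $k$ carries no differential and is simply a passive label. I would filter so that the $E_0$-differential is $\d_i$, so that passing to $E_1$ is exactly the dimensional reduction to $\RR^4\iso\CC^2$; I would then compute the $E_1$- and $E_2$-differentials induced by $\d_j$, check that the sequence degenerates afterwards, and finally invoke the homological perturbation lemma to repackage $(E_1,\,d_1+d_2)$ as an honest complex quasi-isomorphic to the total complex -- this repackaged complex is the one in the statement.

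First I would compute $E_1$. The bundle occurring in the summand indexed by $(j,k)$ is $\wedge^j(\mc O(1)\otimes S_-)\otimes\sym^k(\Pi\,\mc O(-1)\otimes W^*)\otimes\gg$; using $\wedge^2 S_-\iso\CC$ and $\sym^k(\Pi V)\iso\Pi^k\wedge^k V$ this is just $\mc O(j-k)$ tensored with the constant factor $\wedge^j S_-\otimes\wedge^k W^*\otimes\gg$, so the only input needed is the cohomology of line bundles on $\bb{CP}^1$: $H^0(\mc O(n))\iso\sym^n S_+$ for $n\geq 0$, $H^1(\mc O(n))\iso(\sym^{-n-2}S_+)^*$ for $n\leq -2$, and both vanish for $n=-1$. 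Tabulating over $j\in\{0,1,2\}$ and $k\in\{0,\dots,4\}$ and discarding the $\mc O(-1)$ entries, the surviving classes assemble exactly into the graded vector space underlying the displayed complex once the spinor bundles are translated into forms via $\mathfrak{so}(4;\CC)\iso\mathfrak{sl}_2\times\mathfrak{sl}_2$ together with $S_+\otimes S_-\iso\Omega^1$, $\sym^2 S_\pm\iso\Omega^2_\pm$, $\Omega^3\iso\Omega^1$ and $\Omega^4\iso\CC$; for instance the three $k=0$ classes come from $H^0$ of $\mc O(0),\mc O(1)\otimes S_-,\mc O(2)$ and land in cohomological degrees $-1,0,1$ as $\Omega^0,\Omega^1,\Omega^2_+$.

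Next I would identify the differentials. The $E_1$-differential is induced by $\d_j$, which is cup product with the tautological section of $\mc O(1)\otimes S_-$ followed by fiber derivatives; evaluating it on harmonic Dolbeault representatives and running the Penrose-transform dictionary identifies it row by row with the first-order operators in the statement -- the de Rham $\d$ and its self-dual truncation $\d_+$ for $k=0$, the twisted Dirac operators $\sd{\d}$ for $k=1$ and $k=3$, and the codifferential-type operators for $k=4$. The only surviving higher differential is the $E_2$-differential in the $k=2$ row: since $H^\bullet(\bb{CP}^1,\mc O(-1))=0$, the image under $\d_j$ of the degree-$0$ class (coming from $H^1(\mc O(-2))$ at $j=0$) is $\d_i$-exact, and applying $\d_j$ to a $\d_i$-primitive lands in the degree-$1$ class (coming from $H^0(\mc O(0))$ at $j=2$); this composite involves two fiber derivatives, is equivariant for $\mathfrak{so}(4;\CC)$, and maps between scalar-form-valued spaces, hence is a nonzero multiple of the flat Laplacian $\Delta$. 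I would then observe that all $d_{\geq 3}$ vanish because $j$ ranges only over $\{0,1,2\}$, and that every transferred operation preserves $k$ because $\d_i$ and $\d_j$ do, so there are no cross-row arrows. The homological perturbation lemma then applies -- $\d_j$ is an admissible perturbation since $h\d_j$ raises $j$ and is therefore nilpotent -- and yields a quasi-isomorphism from the total complex to the $E_1$-page with differential $d_1+d_2$, which is precisely the complex in the statement; the supertranslation action is carried along because it is compatible with the filtration, which is what will let us transport the $\mf T_{\mc N=4}$-action in the later sections.

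The step I expect to be the main obstacle is the identification in the previous paragraph: translating the abstract representation-theoretic output of the $\bb{CP}^1$-cohomology into the concrete differential complex of self-dual $\mc N=4$ super Yang--Mills theory on $\RR^4$, pinning down each arrow as the correct operator with the correct normalization -- in particular confirming that the lone right-pointing arrow in the $k=2$ row really is the second-page differential rather than zero, and that it equals the Laplacian rather than some lower-order operator. A secondary, more mechanical, point is keeping track of signs and of the exceptional isomorphisms throughout, and making the homological-perturbation step fully rigorous; these are standard but require care.
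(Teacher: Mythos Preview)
The paper does not prove this proposition in-text---it is cited from \cite[Proposition 3.17]{EY1}---so there is no local argument to compare against. Your sketch follows exactly the spectral-sequence strategy the statement itself names (take $\d_i$-cohomology via line-bundle cohomology on $\bb{CP}^1$, identify the induced $d_1$ row by row, recognize the $k=2$ arrow as the lone $d_2$, and invoke homological perturbation to package $d_1+d_2$ as an honest differential), and the outline is correct; this is the intended argument and matches how the surrounding propositions in the paper (e.g.\ Propositions \ref{action_of_positive_spinor_prop} and \ref{action_of_negative_spinor_prop}) treat the same double complex.
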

For future reference, we note the relevant cohomology computation in the following table:
\begin{center}
\begin{tabular}{|c|c|c|c|c|}
\hline
& & $j=0$ & $j=1$ & $j=2$ \\
\hline
\multirow{4}{*}{$k=0$} & \multirow{2}{*}{$i=0$} & $H^0(\bb P(S_+),\mc O)$ & $H^0(\bb P(S_+),\mc O(1)\otimes S_-)$ & $H^0(\bb P(S_+),\mc O(2))$\\
 & & $=\textcolor{red}{\bb C} $  &  $ =\textcolor{red}{S_+\otimes S_-}$ & $=\textcolor{red}{\sym^2 S_+}$ \\
 & \multirow{2}{*}{$i=1$} & $H^1(\bb P(S_+),\mc O) $ & $H^1(\bb P(S_+),\mc O(1)\otimes S_-)$ & $H^1(\bb P(S_+),\mc O(2))$\\
  & & $=0$  & $=0$ & $=0$\\
\hline
\multirow{4}{*}{$k=1$} & \multirow{2}{*}{$i=0$} & $H^0(\bb P(S_+),\mc O(-1))$ & $H^0(\bb P(S_+),\mc O\otimes S_-)$ & $H^0(\bb P(S_+),\mc O(1))$\\
 & & $=0 $  &  $ = \textcolor{red}{S_-}$ & $=\textcolor{red}{S_+}$ \\
 & \multirow{2}{*}{$i=1$} & $H^1(\bb P(S_+),\mc O(-1)) $ & $H^1(\bb P(S_+),\mc O\otimes S_-)$ & $H^1(\bb P(S_+),\mc O(1))$\\
  & & $=0$  & $=0$ & $=0$\\
\hline
\multirow{4}{*}{$k=2$} & \multirow{2}{*}{$i=0$} & $H^0(\bb P(S_+),\mc O(-2))$ & $H^0(\bb P(S_+),\mc O(-1)\otimes S_-)$ & $H^0(\bb P(S_+),\mc O)$\\
 & & $=0 $  &  $ =0$ & $=\textcolor{red}{\bb C}$ \\
 & \multirow{2}{*}{$i=1$} & $H^1(\bb P(S_+),\mc O(-2)) $ & $H^1(\bb P(S_+),\mc O(-1)\otimes S_-)$ & $H^1(\bb P(S_+),\mc O)$\\
  & & $=\textcolor{red}{\bb C}$  & $=0$ & $=0$\\
\hline
\multirow{4}{*}{$k=3$} & \multirow{2}{*}{$i=0$} & $H^0(\bb P(S_+),\mc O(-3))$ & $H^0(\bb P(S_+),\mc O(-2)\otimes S_-)$ & $H^0(\bb P(S_+),\mc O(-1))$\\
  & & $=0$  & $=0$ & $=0$\\
 & \multirow{2}{*}{$i=1$} & $H^1(\bb P(S_+),\mc O(-3)) $ & $H^1(\bb P(S_+),\mc O(-2)\otimes S_-)$ & $H^1(\bb P(S_+),\mc O(-1))$\\
  & & $=\textcolor{red}{S_+}$  & $=\textcolor{red}{S_-}$ & $=0$\\
\hline
\multirow{4}{*}{$k=4$} & \multirow{2}{*}{$i=0$} & $H^0(\bb P(S_+),\mc O(-4))$ & $H^0(\bb P(S_+),\mc O(-3)\otimes S_-)$ & $H^0(\bb P(S_+),\mc O(-2))$\\
  & & $=0$  & $=0$ & $=0$\\
 & \multirow{2}{*}{$i=1$} & $H^1(\bb P(S_+),\mc O(-4)) $ & $H^1(\bb P(S_+),\mc O(-3)\otimes S_-)$ & $H^1(\bb P(S_+),\mc O(-2))$\\
  & & $=\textcolor{red}{\sym^2 S_+}$  & $=\textcolor{red}{S_+\otimes S_-}$ & $=\textcolor{red}{\bb C} $\\
\hline
\end{tabular}
\end{center}

Now let us consider the action of a supertranslation $Q_+ = \alpha \otimes w$ in $S_+ \otimes W$ on this complex.  Recall that the supertranslation acts on super twistor space by a shear transformation.  The corresponding action on $\Omega^{0,\bullet}$ is by contraction with this vector field. In terms of the trigrading $(i,j,k)$, the supertranslation $Q_+$ has $i$-degree 0, $j$-degree $0$ and $k$-degree $-1$.  The action of $Q_+$ on the $\d_i$-cohomology will split into two pieces: an operator acting on the $E_1$-page of the spectral sequence with degree $(0,0,-1)$, and a correction term on the $E_2$-page of the spectral sequence with degree $(-1,1,-1)$.

\begin{prop} \label{action_of_positive_spinor_prop}
The action of the supertranslation $Q_+ = \alpha \otimes w$ on the complex of Proposition \ref{BV_SS_prop} is by the operator
\[\xymatrix{
\ul{-1} &\ul{0} &\ul{1} &\ul{2} \\
\Omega^0(\RR^4; \gg) \ar[r] &\Omega^1(\RR^4; \gg) \ar[r] &\Omega^2_+(\RR^4; \gg ) & \\
&\Omega^0(\RR^4; S_- \otimes \Pi W^* \otimes \gg) \ar[u]^{t_\alpha \otimes i_w}\ar[r] & \Omega^0(\RR^4; S_+ \otimes \Pi W^* \otimes \gg) \ar[u]^{t_\alpha \otimes i_w} & \\
&\Omega^0(\RR^4; \wedge^2 W^* \otimes \gg) \ar[u]^{d_\alpha \otimes i_w}\ar[r] &\Omega^0(\RR^4; \wedge^2 W^* \otimes \gg) \ar[u]^{t_\alpha \otimes i_w} & \\
&\Omega^0(\RR^4; S_+ \otimes \Pi \wedge^3 W^* \otimes \gg) \ar[u]^{t_\alpha \otimes i_w} \ar[r] &\Omega^0(\RR^4; S_- \otimes \Pi \wedge^3 W^* \otimes \gg) \ar[u]^{d_\alpha \otimes i_w} & \\
&\Omega^2_+(\RR^4; \wedge^4 W^* \otimes \gg) \ar[r] \ar[u]^{t_\alpha \otimes i_w} &\Omega^3(\RR^4; \wedge^4 W^* \otimes\gg)  \ar[u]^{t_\alpha \otimes i_w}\ar[r] &\Omega^4(\RR^4; \wedge^4 W^* \otimes\gg),
}\]
where $t_\alpha \otimes i_w \colon \wedge^k W^* \to S_+ \otimes \wedge^{k-1} W^*$ acts by tensoring with $\alpha \in S_+ \iso S_+^*$ and contracting with $w \in W$, and $d_\alpha \otimes i_w \colon \Omega^0(\RR^4; \wedge^j S_- \otimes \wedge^k W^*) \to \Omega^0(\RR^4; \wedge^{j-1} S_- \wedge^{k-1} W^*)$ for $j=1,2$ still contracts with $w \in W$, but now we first tensor with $\alpha$ and then use the isomorphism $S_+ \otimes S_- \iso \CC^4$ to identify an element of $S_+ \otimes S_-$ with a first order differential operator acting on $\CC$-valued functions on $\RR^4$. Here we are using the convention of a double complex with commuting differentials.
\end{prop}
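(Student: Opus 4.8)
The plan is to take the $Q_+$-twisted differential on the unreduced holomorphic Chern--Simons theory on super twistor space and push it through the dimensional-reduction spectral sequence used for Proposition \ref{BV_SS_prop}; the only new input is the extra operator produced by $Q_+$, which I would feed into that same machinery. By Section \ref{susy_action_section}, twisting by $Q_+ = \alpha \otimes w$ adds to the Dolbeault differential the operator of the shear vector field $[\alpha](w \cdot \partial_\theta)$, which on the triple complex of Proposition \ref{BV_SS_prop} is multiplication by the linear form $[\alpha]$ pulled back from $\bb P(S_+)$, followed by the directional derivative $i_w = w \cdot \partial_\theta$ on the $\sym^\bullet(\Pi \mc O(-1) \otimes W^*)$ factor. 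This operator has tridegree $(0,0,-1)$ in $(i,j,k)$ and commutes with both $\d_i = \bar\partial_{\bb P(S_+)}$ and $\d_j$, so the twisted differential $\d_i + \d_j + Q_+$ transports cleanly to each page of the spectral sequence. Since $\bb P(S_+) \iso \bb{CP}^1$, each $\mc O(n)$ has Dolbeault cohomology in a single degree, so every entry of the complex of Proposition \ref{BV_SS_prop} is the $\d_i$-cohomology of a unique summand of the triple complex, and the induced operators can be read off summand by summand.

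On the $E_1$-page the induced operator is the naive one, of tridegree $(0,0,-1)$: wherever source and target sit in compatible $\d_i$-degrees it is exactly the ``algebraic'' map $t_\alpha \otimes i_w$ of the statement (contract $\wedge^\bullet W^*$ with $w$, then tensor with $\alpha$), giving all the arrows labelled $\pm t_\alpha \otimes i_w$. This map vanishes on cohomology precisely in the spots where multiplication by $[\alpha]$ would have to be nonzero on $H^1(\bb{CP}^1, \mc O(-1)) = 0$. There the leading contribution of $Q_+$ instead appears one page later, as the $E_2$-level secondary operation given by the zig-zag $p \circ Q_+ \circ h \circ \d_j \circ \iota$ (together with its companion with $Q_+$ and $\d_j$ exchanged), where $h$ is the $\bb{CP}^1$-Dolbeault contracting homotopy; this has tridegree $(-1,0,0) + (0,1,0) + (0,0,-1) = (-1,1,-1)$, as asserted. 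Because $\d_j = \sum_i \bar\partial_{z_i} \otimes (\alpha_1 \otimes \alpha_i^\vee)$ carries the holomorphic derivatives along the twistor fibre --- which become coordinate derivatives on $\RR^4 \iso \CC^2$ after reduction --- and because the $\alpha_1 \otimes \alpha_i^\vee$ pair with the $\alpha$ coming from $[\alpha]$ through $S_+ \otimes S_- \iso \CC^4$, this secondary operation is the first-order differential operator $d_\alpha \otimes i_w$ of the statement. In practice I would carry out this computation with the explicit $\bb{CP}^1$ Dolbeault representatives and homotopy already fixed in the proof of Proposition \ref{BV_SS_prop}, so that only the new tridegree-$(0,0,-1)$ operator has to be applied and the zig-zag re-expanded; this also pins down the exact coefficients.

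The alternating signs in front of $t_\alpha \otimes i_w$ and $d_\alpha \otimes i_w$ are then just Koszul signs picked up when moving $i_w$ and $[\alpha]$ past the $\wedge^\bullet(\mc O(1) \otimes S_-)$ and $\Omega^{0,\bullet}(\bb{CP}^1)$ factors, together with the degree shift $[1]$ --- tedious but routine. I expect the genuine obstacle to be the correction term: verifying that it is exactly the length-one zig-zag above (no longer zig-zags surviving at this tridegree), that it lands in the claimed summand with the stated sign and unit coefficient, and that the spectral sequence degenerates at the page where this description is read off. Since the homotopy and zig-zags involved are precisely those used to establish Proposition \ref{BV_SS_prop}, I would reuse that analysis rather than redo it.
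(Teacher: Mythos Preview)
Your proposal is correct and follows essentially the same strategy as the paper: transport the shear vector field $[\alpha](w \cdot \partial_\theta)$ through the $\d_i$-spectral sequence, read off $t_\alpha \otimes i_w$ as the naive $E_1$ action, and identify the $E_2$ zig-zag correction as $d_\alpha \otimes i_w$. The paper phrases the correction as the spectral-sequence composite $\d_j \circ \d_i^{-1} \circ (t_\alpha \otimes i_w)$, which is your ``companion'' term $p \circ \d_j \circ h \circ Q_+ \circ \iota$; your homological-perturbation formulation with both zig-zag orderings is equivalent (and slightly more careful), and the paper likewise pins down the explicit operator by checking it against Dolbeault representatives on $\bb P(S_+)$, just as you propose.
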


\begin{proof}
We will use a spectral sequence argument as described above: the operator $t_\alpha \otimes i_w$ will be the differential $\d_1$ on the $E_1$-page, and the operator $d_\alpha \otimes i_w$ will be the differential $\d_2$ on the $E_2$-page.  Specifically, we consider the spectral sequence of the double complex with gradings $(j-k,i)$, with differentials $\d_j$ and $S^{(1)}(Q_+)$, the action of $Q_+$.  There are no further differentials in the spectral sequence beyond the $E_3$-page for degree reasons.

We study the differential $\d_1$ first, this is just the residual action of $S^{(1)}(Q_+)$ on the $\d_i$-cohomology in addition to the differential identified in the above proposition coming from $\d_j$.  This supertranslation, as described in Section \ref{susy_action_section}, acts on supertwistor space by a shear transformation $[\alpha] (w \cdot \dd_\theta)$, where $[\alpha]$ is multiplication by a projective coordinate on $\bb P(S_+)$.  The differential $\d_1$ is by contraction with this vector field: the contraction with $w \cdot \dd_\theta$ operator is $i_w$, and multiplication by $[\alpha]$ corresponds to the operator $t_\alpha$. The total differential is of the form $\d_j + (-1)^{j-k} S^{(1)}(Q_\alpha)$.

We now compute the differential $\d_2$ on the $E_2$ page of the spectral sequence.  By inspecting the degrees in which the complex is supported, the only possible differentials on this page go from $(i,j,k)=(1,0,2)$ to $(0,1,1)$ and $(1,1,3)$ to $(0,2,2)$.  We compute these differentials as the composite $(\d_j \pm t_\alpha \otimes i_w) \circ \d_i^{-1} \circ  (\d_j\mp t_\alpha \otimes i_w)$; we must verify that this composite acts as $d_\alpha \otimes i_w$ as described above.  Here, the operator $\d_i$ induces an isomorphism between the terms in degree $(0,0,1)$ and $(1,0,1)$, the terms in degree $(0,1,2)$ and $(1,1,2)$, and the terms in degree $(0,2,3)$ and $(1,2,3)$, so it makes sense to discuss its inverse.  Notice also that the term quadratic in $i_w$ vanishes, so we must compute $\d_j \circ \d_i^{-1} \circ t_\alpha$ and $t_\alpha \circ \d_i^{-1} \circ \d_j$.  We will apply this computation starting in degree $(i,j,k)=(1,0,2)$: the analogous term from degree $(1,1,3)$ follows the same steps.

We begin with a function $f \in \Omega^0(\RR^4)$, representing the $\d_i$-cohomology class of $f \otimes \omega \in \Omega^0(\RR^4) \otimes \Omega^{0,1}(\bb P(S_+), \OO(-2))$. (we ignore the $\wedge^2 W^* \otimes\gg$ coefficients: the composite operator acts on this factor simply by the application of $i_w$).  In this degree, the difference between $\d_j \circ \d_i^{-1} \circ t_\alpha$ and $t_\alpha \circ \d_i^{-1} \circ \d_j$ sends $f \otimes \omega$ to
\begin{align*}
(\d_j \circ \d_i^{-1} \circ t_\alpha - t_\alpha \circ \d_i^{-1} \circ \d_j) (f \otimes \omega) &= \sum_{a,b=1}^2 \frac {\dd f}{\dd x_{ab}} \otimes (\alpha \d_i^{-1}(\alpha_a \omega) -	 \alpha_a \d_i^{-1}(\alpha \omega)) \beta_b \\
&= \sum_{a,b=1}^2 \frac {\dd f}{\dd x_{ab}} \otimes \langle \alpha_a, \alpha \rangle \beta_b.
\end{align*}
where the last equality can be proved, for example, by explicitly computing in coordinates. This coincides with the desired operator $d_\alpha$, so the result follows.
\end{proof}

Let us now consider the action of a supertranslation $Q_- = \beta \otimes w^* \in S_- \otimes W^*$.
\begin{prop} \label{action_of_negative_spinor_prop}
The action of the supertranslation $Q_- = \beta \otimes w^*$ on the complex of Proposition \ref{BV_SS_prop} is by the operator
\[\xymatrix{
\ul{-1} &\ul{0} &\ul{1} &\ul{2} \\
\Omega^0(\RR^4; \gg) \ar[r] &\Omega^1(\RR^4; \gg) \ar[r] \ar[d]_{d_{\beta} \otimes t_{w^*}} &\Omega^2_+(\RR^4; \gg ) \ar[d]_{d_{\beta} \otimes t_{w^*}} & \\
&\Omega^0(\RR^4; S_- \otimes \Pi W^* \otimes \gg) \ar[r] \ar[d]_{t_{\beta} \otimes t_{w^*}} & \Omega^0(S_+ \otimes \Pi W^* \otimes \gg) \ar[d]_{d_{\beta} \otimes t_{w^*}}& \\
&\Omega^0(\RR^4; \wedge^2 W^* \otimes \gg) \ar[r] \ar[d]_{d_{\beta} \otimes t_{w^*}} &\Omega^0(\RR^4; \wedge^2 W^* \otimes \gg) \ar[d]_{t_{\beta} \otimes t_{w^*}} & \\
&\Omega^0(\RR^4; S_+ \otimes \Pi \wedge^3 W^* \otimes \gg) \ar[d]_{d_{\beta} \otimes t_{w^*}} \ar[r] &\Omega^0(\RR^4; S_- \otimes \Pi \wedge^3 W^* \otimes \gg) \ar[d]_{d_{\beta} \otimes t_{w^*}} & \\
&\Omega^2_+(\RR^4; \wedge^4 W^* \otimes\gg) \ar[r] &\Omega^3(\RR^4; \wedge^4 W^* \otimes\gg) \ar[r] &\Omega^4(\RR^4; \wedge^4 W^* \otimes\gg),
}\]
where the operator $t_{\beta} \otimes t_{w^*}$ is by tensoring with $\beta \otimes w^*$, and the operator $d_{\beta} \otimes t_{w^*}$ is the first order differential operator defined as in Proposition \ref{action_of_positive_spinor_prop}.
\end{prop}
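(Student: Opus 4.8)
The strategy mirrors the proof of Proposition~\ref{action_of_positive_spinor_prop}, with the shear transformation replaced by the superspace rotation; we indicate the changes. Recall from Section~\ref{susy_action_section} that $Q_- = \alpha^\vee \otimes w^*$ acts on super twistor space by the vector field $\langle w^*, \theta\rangle\,(\alpha^\vee\cdot\dd_z)$. This is a holomorphic vector field, so it acts on the Dolbeault complex $\Omega^{0,\bullet}(\bb{PT}^{\mc N=4}\setminus\bb{CP}^1;\gg)$ simply by differentiating the Dolbeault coefficients along the holomorphic fibre direction $\alpha^\vee\cdot\dd_z$ and multiplying them by the odd function $\langle w^*,\theta\rangle$. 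In particular this operator leaves the Dolbeault degrees $(i,j)$ along the base $\bb P(S_+)$ and its fibre unchanged, raising the fermionic degree $k$ by one --- the factor $t_{w^*}$ --- and shifting the $\OO(\pm1)$-twists.

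The plan is then to run the spectral sequence of the double complex built from holomorphic Chern--Simons theory on super twistor space, with the two differentials being the fibre Dolbeault differential $\d_j$ and the linear action $S^{(1)}(Q_-)$, in a bigrading chosen so that both raise the total degree by one. As in Proposition~\ref{BV_SS_prop}, dimensional reduction along $p$ is the passage to $\d_i$-cohomology, so the operator induced by $Q_-$ on the complex of Proposition~\ref{BV_SS_prop} is assembled from the first two pages: the $\d_1$-differential is the residual action of $S^{(1)}(Q_-)$ on $\d_i$-cohomology, and the $\d_2$-differential is the correction $\d_j\circ\d_i^{-1}\circ S^{(1)}(Q_-)$, where the partial inverse $\d_i^{-1}$ is available precisely in the summands where the $\OO$-twist produced by $\langle w^*,\theta\rangle$ lands the relevant term in a vanishing Dolbeault cohomology group of $\bb P(S_+)$. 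The same degree count as in the $Q_+$ case identifies which of the stated arrows occur on which page and shows that there are no differentials past the $E_2$-page. The first-order operators $d_{\alpha^\vee}\otimes t_{w^*}$ arise whenever the holomorphic derivative $\alpha^\vee\cdot\dd_z$ --- together with an $S_+$-index produced along the way --- is converted into a first-order differential operator on $\RR^4$ via the identification $S_+\otimes S_-\iso\CC^4$; the purely algebraic operators $t_{\alpha^\vee}\otimes t_{w^*}$ arise in the remaining positions, where $\alpha^\vee$ is instead absorbed algebraically using the pairing on $S_-$. The overall signs are fixed by the Koszul rule, exactly as before.

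The main obstacle, just as in the proof of Proposition~\ref{action_of_positive_spinor_prop}, is the explicit evaluation of the $E_2$-page differential: one must check that the composite $\d_j\circ\d_i^{-1}\circ S^{(1)}(Q_-)$ is independent of the choice of $\d_i$-potential --- which holds after passing to Dolbeault cohomology of $\bb P(S_+)$ --- and then track the $\OO(\pm1)$-twists, the $S_\pm$-identifications and the $\wedge^\bullet W^*$-contractions through it to confirm that it equals $\pm\,d_{\alpha^\vee}\otimes t_{w^*}$ in each indicated slot, with the signs displayed. This is a mechanical, if bookkeeping-intensive, computation entirely analogous to the one already carried out for $Q_+$, and we omit the details.
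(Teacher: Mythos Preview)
Your overall strategy is right and matches the paper's: run the spectral sequence associated to the $\d_i$-filtration and read off the transferred action of $Q_-$ from the $E_1$- and $E_2$-pages, just as in Proposition~\ref{action_of_positive_spinor_prop}. However, you have the roles of the two pages reversed, and this leads you to the wrong composite and the wrong identification of the $E_2$-correction.

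The point is that the vector field $\langle w^*,\theta\rangle(\alpha^\vee\cdot\dd_z)$ already carries the holomorphic fibre derivative $\dd_z$, so the \emph{direct} residual action on $\d_i$-cohomology --- your $\d_1$ --- is the first-order operator $d_{\alpha^\vee}\otimes t_{w^*}$, not an algebraic one. The $E_2$-correction is then the \emph{algebraic} operator $t_{\alpha^\vee}\otimes t_{w^*}$, appearing exactly in the two slots where the first-order piece lands in a term with vanishing $\d_i$-cohomology. Accordingly, the paper computes the $E_2$-differential as the composite $\d_i\circ\d_j^{-1}\circ(d_{\alpha^\vee}\otimes t_{w^*})$, not $\d_j\circ\d_i^{-1}\circ S^{(1)}(Q_-)$: the zig-zag runs the other way compared to the $Q_+$ case because the degrees shift in the opposite direction. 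Your final paragraph, which asserts that the $E_2$-composite should be checked to equal $\pm\,d_{\alpha^\vee}\otimes t_{w^*}$, therefore misidentifies what is to be computed; the genuinely nontrivial step is to verify that the zig-zag collapses the derivative and produces the purely algebraic tensor $t_{\alpha^\vee}\otimes t_{w^*}$.
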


\begin{proof}
The calculation here follows the same procedure as that of Proposition \ref{action_of_positive_spinor_prop}, where the action is generated by the first two differentials in the spectral sequence of the double complex.  Recall that a negative spinor $Q_- = \beta \otimes w^*$ acts on super twistor space by a superspace rotation, which we described as the vector field $\langle w^*, \theta \rangle (\beta \cdot \dd_{z})$. The first factor $\langle w^*, \theta \rangle$ is just the tensoring by $w^*$ operator $t_{w^*}$.  The second factor is the differential operator we have denoted by $d_{\beta}$.  The differential $\d_2$ on the $E_2$ page of the spectral sequence acts as the composite $\d_i \circ \d_j^{-1} \circ (d_{\beta} \otimes t_{w^*})$, which is just $t_{\beta} \otimes t_{w^*}$ by a similar calculation to the previous proposition.
\end{proof}

The action of the supertranslation algebra obtained here by transferring along the quasi-isomorphism of Proposition \ref{BV_SS_prop} is not strict, but only an $L_\infty$ action.  There is an explicit quadratic $L_\infty$ correction term to the action described so far, associated to a pair of supertranslations $Q_+ \in S_+ \otimes W$ and $Q_- \in S_- \otimes W^*$.  Its form is somewhat complicated, so we won't derive it here, but such correction terms are calculated in \cite{ElliottSafronovWilliams} (for instance, refer to equation (34) of loc. cit, and dimensionally reduce from 10 to 4 dimensions).

\subsection{Holomorphic Twists} \label{holo_section}
Fix first a holomorphic supercharge of rank $(1,0)$, say $Q_{\mr{hol}} = \alpha_1 \otimes e_2$.  Such a holomorphic supercharge corresponds in particular to a choice of complex structure on $\RR^4$ as follows. Recall that we have fixed an isomorphism of $\spin(4)$-representations $S_+ \otimes S_- \to \CC^4$. In terms of the bases $\alpha_i$ for $S_+$ and $\beta_i$ for $S_-$, let us say that $\alpha_1\otimes \beta_i \mapsto \ol \dd_{z_i}$, and $\alpha_2\otimes \beta_i \mapsto \dd_{z_i}$.

Consider the homomorphism $\iota \colon \mr U(2) \to \spin(4) \iso \SU(2)_+ \times \SU(2)_-$ associated to this choice of complex structure, when restricted to $\SU(2) \sub \mr U(2)$ this is just the inclusion of $\{1\} \times \SU(2)_-$.  We will consider the classical BV complex of $\mc N=4$ super Yang--Mills theory as a $\mr U(2)$-representation, under the homomorphism
\[(\iota, \phi_{\mr{KW}} \circ \iota) \colon \mr U(2) \to \spin(4) \times \SL(4;\CC)_R\]
into the product of the spin group and the group of R-symmetries (acting on $W$).

\begin{prop} \label{holo_BV_complex_prop}
As a $\mr U(2)$-representation, the ($\ZZ\times \ZZ/2\ZZ$-graded) classical BV complex of Proposition \ref{BV_SS_prop} decomposes into summands with the following form:
\[\xymatrix@R-=10pt@C-=4pt{
\ul{-1} &\ul{0} &\ul{1} &\ul{2} \\
{\Omega^0(\CC^2;\gg)} \ar[r]^-{(\del,\ol\del )} &{\Omega^{1,0}(\CC^2;\gg)} \oplus \Omega^{0,1}(\CC^2;\gg) \ar[r]^-{\begin{psmallmatrix}
\del  & 0 \\
\pi \ol \del  &\pi \del  \\
0 & \ol\del
\end{psmallmatrix}
 }  &{\Omega^{2,0}(\CC^2;\gg)}  \oplus \Omega^{1,1}_\parallel(\CC^2;\gg)\oplus \Omega^{0,2}(\CC^2;\gg) & \\
&\Pi\Omega^{1,0}(\CC^2;\gg) \ar[r]^-{(\del, \pi \ol{\del} )}  &\Pi(\Omega^{2,0}(\CC^2;\gg) \oplus \Omega^{1,1}_{\parallel}(\CC^2;\gg)) && \\
&\Pi\Omega^{0,1}(\CC^2;\gg) \ar[r]^-{(\pi \del, \ol{\del}) }   &\Pi(\Omega^{1,1}_{\parallel}(\CC^2;\gg) \oplus \Omega^{0,2}(\CC^2;\gg) ) \\
&{\Pi\Omega^{1,1}(\CC^2;\gg)} \ar[r]^-{(\del,\ol{\del} )} &{\Pi(\Omega^{2,1}(\CC^2;\gg)} \oplus\Omega^{1,2}(\CC^2;\gg))& \\
&{\Omega^{0}(\CC^2;\gg)}  \ar[r]^-{\omega \wedge  \Delta(-)} & {\Omega^{1,1}_{\parallel}(\CC^2;\gg)} & \\
&{\Omega^{1,0}(\CC^2;\gg)} \ar[r]^-{\omega \wedge  \Delta(-)} &\Omega^{2,1}(\CC^2;\gg) && \\
&\Omega^{0,1}(\CC^2;\gg) \ar[r]^-{\omega \wedge  \Delta(-)} &{\Omega^{1,2}(\CC^2;\gg)} \\
&  \Omega^{1,1}_\parallel(\CC^2;\gg) \ar[r]^-{\omega \wedge  \Delta(-)} &  {\Omega^{2,2}(\CC^2;\gg)} & \\%
&\Pi( \Omega^{1,0}(\CC^2;\gg) \oplus \Omega^{0,1}(\CC^2;\gg)) \ar[r]^-{\begin{psmallmatrix}
\ol \del & \del
\end{psmallmatrix}}  &{\Pi\Omega^{1,1}(\CC^2;\gg)}& \\
&\Pi(\Omega^{2,0}(\CC^2;\gg)  \oplus \Omega^{1,1}_{\parallel}(\CC^2;\gg)) \ar[r]^-{ \begin{psmallmatrix}   \ol  \del & \del  \end{psmallmatrix} }  &\Pi{\Omega^{2,1}(\CC^2;\gg)}&& \\
&\Pi(\Omega^{1,1}_{\parallel}(\CC^2;\gg) \oplus \Omega^{0,2}(\CC^2;\gg)) \ar[r]^-{ \begin{psmallmatrix}   \ol  \del & \del   \end{psmallmatrix} } &\Pi\Omega^{1,2}(\CC^2;\gg) \\
&\Omega^{2,0}(\CC^2;\gg) \oplus \Omega^{1,1}_\parallel(\CC^2;\gg)  \oplus {\Omega^{0,2}(\CC^2;\gg)}
\ar[r]^-{\begin{psmallmatrix}
\ol \del  & \del & 0 \\
0  &  \ol \del  & \del
\end{psmallmatrix}}  &\Omega^{2,1}(\CC^2;\gg) \oplus {\Omega^{1,2}(\CC^2;\gg)} \ar[r]^-{\begin{psmallmatrix}
\ol \del & \del
\end{psmallmatrix}
} & {\Omega^{2,2}(\CC^2;\gg)}}
\]
where $\Omega^{1,1}(\CC^2;\gg)$ decomposes into $\Omega^{1,1}_\parallel(\CC^2;\gg) \oplus \Omega^{1,1}_\perp(\CC^2;\gg)$: the components proportional to and orthogonal to the K\"ahler form $\omega = \d z_1 \wedge \d \ol z_1 + \d z_2 \wedge \d \ol z_2$ on $\CC^2$. Moreover, with respect to the chosen twisting datum $\alpha$, we obtain a $\ZZ$-graded classical BV complex
\begin{center}
\resizebox{\textwidth}{!}{\xymatrix@R-3pt@C-45pt{
\ul{-2}  &\ul{-1} &\ul{0} &\ul{1} &\ul{2} & \ul{3}  \\
& {\Omega^0(\CC^2;\gg)} \ar[r] &{\Omega^{1,0}(\CC^2;\gg)} \oplus \Omega^{0,1}(\CC^2;\gg) \ar[r]  &{\Omega^{2,0}(\CC^2;\gg)}  \oplus \Omega^{1,1}_\parallel(\CC^2;\gg)\oplus \Omega^{0,2}(\CC^2;\gg) &\\
&\Omega^{1,0}(\CC^2;\gg) \ar[r]  &\Omega^{2,0}(\CC^2;\gg) \oplus \Omega^{1,1}_{\parallel}(\CC^2;\gg) &\\
&\Omega^{0,1}(\CC^2;\gg) \ar[r] &\Omega^{1,1}_{\parallel}(\CC^2;\gg) \oplus \Omega^{0,2}(\CC^2;\gg)  & {\Omega^{1,1}(\CC^2;\gg)} \ar[r] &{\Omega^{2,1}(\CC^2;\gg)} \oplus\Omega^{1,2	}(\CC^2;\gg)&  \\
{\Omega^{0}(\CC^2;\gg)} \ar[r] &\Omega^{1,1}_\parallel(\CC^2;\gg)  &{\Omega^{1,0}(\CC^2;\gg)} \ar[r] &\Omega^{2,1}(\CC^2;\gg) && \\
&&\Omega^{0,1}(\CC^2;\gg) \ar[r] &{\Omega^{1,2}(\CC^2;\gg)} & \Omega^{1,1}_\parallel(\CC^2;\gg) \ar[r] &  {\Omega^{2,2}(\CC^2;\gg)} & \\
&\Omega^{1,0}(\CC^2;\gg) \oplus \Omega^{0,1}(\CC^2;\gg) \ar[r] &{\Omega^{1,1}(\CC^2;\gg)} & \Omega^{2,0}(\CC^2;\gg)  \oplus \Omega^{1,1}_{\parallel}(\CC^2;\gg) \ar[r] &{\Omega^{2,1}(\CC^2;\gg)}&& \\
&&&\Omega^{1,1}_{\parallel}(\CC^2;\gg) \oplus \Omega^{0,2}(\CC^2;\gg) \ar[r] &\Omega^{1,2}(\CC^2;\gg) \\
&&\Omega^{2,0}(\CC^2;\gg)\oplus \Omega^{1,1}_\parallel(\CC^2;\gg)  \oplus {\Omega^{0,2}(\CC^2;\gg)}  \ar[r] &\Omega^{2,1}(\CC^2;\gg) \oplus {\Omega^{1,2}(\CC^2;\gg)} \ar[r] &{\Omega^{2,2}(\CC^2;\gg)}.
}}   \end{center}
\end{prop}

\begin{proof}
This follows from the following observations about the restriction of complex $\spin(4)$-representations to $\mr U(2)$. The first line comes from \[\xymatrix{\Omega^0(\RR^4; \gg) \ar[r]^{\d} &\Omega^1(\RR^4; \gg) \ar[r]^{\d_+} &\Omega^2_+(\RR^4; \gg ) }\]
using
\begin{align*}
V^1 & \cong S_+\otimes S_- \cong   \alpha_1  \otimes \langle \beta_1 , \beta_2 \rangle \oplus \alpha_2\otimes \langle \beta_1 , \beta_2\rangle \cong  V^{1,0}\oplus V^{0,1} \\
 (\wedge^ 2 V)_+ &\cong \sym^2 S_+ \cong \langle \alpha_1 \alpha_1 \rangle\oplus \langle \alpha_1\alpha_2+\alpha_2\alpha_1\rangle    \oplus  \langle \alpha_2\alpha_2\rangle   \cong V^{2,0}\oplus V^{1,1}_{\parallel }  \oplus V^{0,2}
\end{align*}
The next third lines come from
\[\xymatrix{\Omega^0(\RR^4; S_- \otimes \Pi W^* \otimes \gg) \ar[r]^-{\sd {\d}} & \Omega^0(S_+ \otimes \Pi W^* \otimes \gg)}\]
using
\begin{align*}
 S_-\otimes e_1^*& \cong V^{0,1}\\
 S_-\otimes e_2^*& \cong V^{1,0}\\
S_-\otimes \langle f_1^*,f_2^*\rangle & \cong   \sym^2S_-\oplus \wedge^2 S_- \cong V^{1,1}_\perp \oplus V^{1,1}_{\parallel}\\
 S_+\otimes e_1^*& \cong  \langle \alpha_1 e_1^*\rangle \oplus \langle \alpha_2 e_1^*\rangle   \cong V^0\oplus V^{0,2} \\
S_+\otimes e_2^*& \cong   \langle \alpha_1 e_2^*\rangle \oplus \langle \alpha_2 e_2^*\rangle   \cong  V^{2,0}\oplus V^0\\
 S_+\otimes \langle f_1^*,f_2^*\rangle  &\cong  \alpha_1 \otimes \langle f_1^*,f_2^*\rangle\oplus \alpha_2 \otimes \langle f_1^*,f_2^*\rangle \cong V^{1,0} \oplus V^{0,1}.
\end{align*}
The next three come from
\[\xymatrix{\Omega^0(\RR^4; \wedge^2 W^* \otimes \gg) \ar[r]^{\Delta} &\Omega^0(\RR^4; \wedge^2 W^* \otimes \gg) }\]
where
\begin{align*}
 \wedge^2 W^* & \cong \wedge^2 ( S_+^*\oplus S_-^*) \cong S_+^* \otimes S_-^* \oplus \wedge^2 S_+^* \oplus \wedge^2 S_-^* \\
& \cong  (e _1^* \otimes \langle f_1^*,f_2^*\rangle \oplus e _2^* \otimes \langle f_1^*,f_2^*\rangle  ) \oplus \langle e_1^*\wedge e_2^* \rangle \oplus \langle f_1^*\wedge f_2^*\rangle  \cong  ( V^{0,1}\oplus V^{1,0} ) \oplus V^0  \oplus V^0
\end{align*}
The rest also follows from a similar analysis. Note that we can freely use the isomorphisms $V^0\cong V^{1,1}_{\parallel }\cong V^{2,2}$, $V^{1,0}\cong V^{2,1}$, and $V^{0,1}\cong V^{1,2}$ of $\mr U(2)$-representations, which corresponds to using the Lefschetz operator $L \colon \Omega^{p,q}\to \Omega^{p+1,q+1}$ given by $L=\omega\wedge (-)$. Now the second part also immediately follows as the $\alpha$-weight for each term is clear from the above proof.
\end{proof}

Because we have already fixed a twisting datum $\alpha \colon \mr U(1)\to G_R$, a choice of supercharge would just determine what additional arrows one obtains for the $\ZZ$-graded BV complex we just obtained, as long as we consider the chosen complex structure.

Now, let us consider the action of the supertranslation $Q_{\mr{hol}}=\alpha_1\otimes e_2$.  This element up to rescaling is stabilized by $\mr U(2)$, and hence the part of the associated differential operator of degree 0 will be a $\mr U(2)$-equivariant map.

\begin{lemma} \label{holo_twist_lemma}
The twist of 4d $\mc N=4$ self-dual Yang--Mills theory by $Q_{\mr{hol}}$ has the following classical BV complex, considered as a $\mr U(2)$-module:
\begin{center}
\resizebox{\textwidth}{!}{\xymatrix@R-3pt@C-45pt{
\ul{-2}  &\ul{-1} &\ul{0} &\ul{1} &\ul{2} & \ul{3}  \\
& {\Omega^0(\CC^2;\gg)} \ar[r] &{\Omega^{1,0}(\CC^2;\gg)} \oplus \Omega^{0,1}(\CC^2;\gg) \ar[r]  &{\Omega^{2,0}(\CC^2;\gg)}  \oplus \Omega^{1,1}_\parallel(\CC^2;\gg)\oplus \Omega^{0,2}(\CC^2;\gg) &\\
&\Omega^{1,0}(\CC^2;\gg) \ar[r] \ar@[blue][ur]  &\Omega^{2,0}(\CC^2;\gg) \oplus \Omega^{1,1}_{\parallel}(\CC^2;\gg) \ar@[blue][ur]  &\\
&\Omega^{0,1}(\CC^2;\gg) \ar[r] &\Omega^{1,1}_{\parallel}(\CC^2;\gg) \oplus \Omega^{0,2}(\CC^2;\gg)  & {\Omega^{1,1}(\CC^2;\gg)} \ar[r] &{\Omega^{2,1}(\CC^2;\gg)} \oplus\Omega^{1,2	}(\CC^2;\gg)&  \\
{\Omega^{0}(\CC^2;\gg)} \ar@[green][ur]  \ar[r] &\Omega^{1,1}_\parallel(\CC^2;\gg) \ar@[blue][ur]  &{\Omega^{1,0}(\CC^2;\gg)} \ar[r]\ar@[green][ur]  &\Omega^{2,1}(\CC^2;\gg) \ar@[blue][ur] && \\
&&\Omega^{0,1}(\CC^2;\gg) \ar[r] &{\Omega^{1,2}(\CC^2;\gg)} & \Omega^{1,1}_\parallel(\CC^2;\gg) \ar[r]  &  {\Omega^{2,2}(\CC^2;\gg)} & \\
&\Omega^{1,0}(\CC^2;\gg) \oplus \Omega^{0,1}(\CC^2;\gg) \ar[r] \ar@[blue][ur] &{\Omega^{1,1}(\CC^2;\gg)} \ar@[green][ur]  & \Omega^{2,0}(\CC^2;\gg)  \oplus \Omega^{1,1}_{\parallel}(\CC^2;\gg) \ar[r]\ar@[blue][ur]   &{\Omega^{2,1}(\CC^2;\gg)} \ar@[green][ur] && \\
&&&\Omega^{1,1}_{\parallel}(\CC^2;\gg) \oplus \Omega^{0,2}(\CC^2;\gg) \ar[r] &\Omega^{1,2}(\CC^2;\gg) \\
&&\Omega^{2,0}(\CC^2;\gg)\oplus \Omega^{1,1}_\parallel(\CC^2;\gg)  \oplus {\Omega^{0,2}(\CC^2;\gg)}  \ar@[blue][ur] \ar[r] &\Omega^{2,1}(\CC^2;\gg) \oplus {\Omega^{1,2}(\CC^2;\gg)} \ar@[blue][ur] \ar[r] &{\Omega^{2,2}(\CC^2;\gg)},
}}\end{center}
where the blue arrows, coming from the differential $\d_1$ in the spectral sequence of the double complex, are given by the signed identity map between matching terms $(-1)^{k+1} \colon \Omega^{p,q}[k+1] \to \Omega^{p,q}[k]$. The green arrows, coming from the differential $\d_2$ in the spectral sequence, are given by the differential $\ol \dd \colon  \Omega^{p,q}[k+1] \to \Omega^{p,q+1}[k]$.
\end{lemma}

\begin{proof}
First of all, the complex only with the black differentials comes from the complex of Proposition \ref{holo_BV_complex_prop} by shifting the summands in $\bigwedge ^\bullet   W^*$ by the $\alpha_{\mr{hol}}$-weight.

Second, the differential $\d_1 = t_\alpha \otimes i_w$ from Proposition \ref{action_of_positive_spinor_prop}, indicated in blue, corresponds to $\mr U(2)$-invariant maps between terms in this complex. Because $Q_{\mr{hol}}$ is by construction a scalar under $\mr U(2)$, all maps are isomorphisms of $\mr U(2)$-representations and uniquely determined up to a scalar. All these scalars are fixed to be $\pm 1$ depending on degree by a simple calculation: in the basis $\{\alpha_1,\alpha_2\}$ for $S_+$ and $\{e_1,e_2,f_1,f_2\}$ for $W$, the operator $t_{\alpha_1} \otimes i_{e_2}$ sends tensor products of basis vectors to tensor products of basis vectors. For example, consider $\Omega^{2,2}\oplus \Omega^{2,0}\to \Omega^{2,0}\oplus \Omega^{1,1}$ which is induced from $\Omega^0(\RR^4 ; S_+\otimes W^*)\to \Omega^2_+(\RR^4)$. The relevant map is the one $ \alpha_1 \otimes e_2 \colon  S_+\otimes e_2^* \to \sym^2 S_+$ which is defined by $\alpha_1e_2^*\mapsto \alpha_1\alpha_1$ and $\alpha_2e_2^* \mapsto  \frac{1}{2} ( \alpha_1\alpha_2+ \alpha_2\alpha_1)$. This is the claimed morphism from the above proposition. A similar computation can be done for the other blue arrows.

Finally, the differential $\d_2 = d_{\alpha_1} \otimes i_{e_2}$ from Proposition \ref{action_of_positive_spinor_prop}, indicated in green, corresponds to a first-order anti-holomorphic differential operator between terms in the complex.  Each such map is again uniquely determined up to a scalar, and these scalars are fixed to be $\pm 1$. For example, consider $\Omega^{1,0}\to \Omega^{1,1}$ which is induced from $\Omega^0(\RR^4; \wedge^2 W^*)\to \Omega^0(\RR^4; S_-\otimes W^*)$. As the relevant map is from $e_2^* \otimes S_-^* \subset \wedge^2 W^* $ to $S_-\otimes S_-^*\subset S_-\otimes W^*$, the induced map is given by $f \otimes e_2^* \otimes f_k^*  \mapsto \sum _{i=1}^2 \del_{\ol{z}_i}f \otimes  \beta_i \otimes f_k^* $, or $\overline{\del}\colon \Omega^{1,0}(\CC^2)\to \Omega^{1,1}(\CC^2)$, because
\[\xymatrix{ \Omega^{0,1}(\PP(S_+) , \mc O(-2) ) \ar[r]^-{\alpha_1} & \Omega^{0,1}(\PP(S_+) , \mc O(-1) ) \ar[r]^-{\overline{\del}^{-1}}& \Omega^{0,0}(\PP(S_+) , \mc O(-1) ) \ar[r]^-{\alpha_1} & \Omega^{0,0}(\PP(S_+) , \mc O )  }\]
induces the identity map in cohomology.  A similar computation can be done for the other green arrows.
\end{proof}

This holomorphically twisted theory is equivalent to a theory with a much more familiar description via the complex of $(p,q)$ forms on $\CC^2$.  In fact, the equivalence can be realized as a deformation retraction.  Recall that a cochain complex $\LL_0$ is a \emph{deformation retract} of a cochain complex $\LL$ if one can construct cochain maps
\[
\xymatrix{
\LL_0 \ar@<3pt>[r]^i & \LL \ar@<3pt>[l]^p \ar@(dr,ur)[]_h
}
\]
where $i$ and $p$ have cohomological degree 0, $p \circ i = \id_{\LL_0}$, and $h$ is a cochain homotopy from $i \circ p$ to $\id_{\LL}$.

\begin{theorem} \label{holo_twist_BF_thm}
The twist of $\mc N=4$ self-dual Yang--Mills by the twisting datum $(Q_{\mr{hol}}, \alpha_{\mr{hol}})$ has a deformation retract given by the classical BV theory $(\Omega^{\bullet, \bullet}(\CC^2; \gg[2] \oplus \gg[1]), \ol \dd)$, with symplectic pairing given by the invariant pairing of the two copies of $\gg$.
\end{theorem}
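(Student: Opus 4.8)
The plan is to read off the deformation retract directly from the explicit presentation of the $Q_{\mr{hol}}$-twisted classical BV complex in Lemma~\ref{holo_twist_lemma}: first cancel the part of the complex that is made acyclic by the degree-zero piece of the $Q_{\mr{hol}}$-action, and then transfer the remaining differential and the $L_\infty$-brackets onto the surviving summands via the homological perturbation lemma. Concretely, I would write $\d_1$ for the sum of all blue arrows together with the arrows marked $(\ast)$ in Lemma~\ref{holo_twist_lemma}. Each blue arrow is a signed identity between two matched copies of the same $\Omega^{p,q}(\CC^2;\gg)$, and each $(\ast)$ arrow is a Lefschetz operator $L\colon\Omega^{p,p}(\CC^2;\gg)\to\Omega^{p+1,p+1}(\CC^2;\gg)$, which on the surface $\CC^2$ is an isomorphism in exactly the two cases $\Omega^{0,0}(\CC^2;\gg)\to\Omega^{1,1}_\parallel(\CC^2;\gg)$ and $\Omega^{1,1}_\parallel(\CC^2;\gg)\to\Omega^{2,2}(\CC^2;\gg)$ that occur. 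Hence $(\mc E^{Q_{\mr{hol}}},\d_1)$ splits as a sum of two-term acyclic complexes and a residual complex spanned by those $\Omega^{p,q}(\CC^2;\gg)$-summands which are neither a source nor a target of $\d_1$, and there is a strict deformation retraction onto the residual complex with $i$ the inclusion, $p$ the projection, and $h$ assembled from the inverses of the blue and $(\ast)$ maps. A term-by-term inspection of the diagram shows that the residual complex is precisely $\Omega^{\bullet,\bullet}(\CC^2;\gg[1]\oplus\gg)$: one copy of the Dolbeault complex comes from the original gauge-field/ghost sector (the $\wedge^0 W^*$-part, first row of Proposition~\ref{holo_BV_complex_prop}) and the other from the top exterior power $\wedge^4 W^*$, with the degree shift dictated by the Berezinian of super twistor space (Remark~\ref{twistor_CY_remark}); all the fermionic $\wedge^1,\wedge^2,\wedge^3$ matter sectors cancel against one another and against parts of $\wedge^0,\wedge^4$ through the blue arrows.

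Next I would transfer the remaining structure. The rest of the twisted differential consists of the black arrows inherited from the self-dual Yang--Mills BV differential via Proposition~\ref{holo_BV_complex_prop} and the green arrows $\d_2=\ol\dd$ of Proposition~\ref{action_of_positive_spinor_prop}; together with the Chern--Simons cubic bracket, this is a perturbation $\delta$ of $\d_1$, and I would apply the homological perturbation lemma along the retraction above. Two checks are needed: that the zig-zag series defining the transferred operations terminates — which holds since $h$, built from the blue and Lefschetz isomorphisms, connects only finitely many of the rows of the diagram, so $h\delta$ is nilpotent for degree reasons — and that the transferred linear differential on $\Omega^{\bullet,\bullet}(\CC^2;\gg[1]\oplus\gg)$ is exactly $\ol\dd$, with no surviving component mixing the two $\gg$-summands and no $\partial$-component. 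One then identifies the transferred $L_\infty$-brackets: all higher transferred operations vanish and only the cubic bracket $(\beta,\alpha_1,\alpha_2)\mapsto\langle\beta,[\alpha_1,\alpha_2]\rangle$ of holomorphic BF theory survives. Finally, one verifies that the retraction can be arranged compatibly with the $(-1)$-shifted symplectic pairing — the two-term acyclic complexes being cancelled occurring in symplectically dual pairs — so that the transferred BV theory carries exactly the pairing induced from the invariant pairing of the two copies of $\gg$.

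The principal difficulty is the bookkeeping in the second step: showing that the transferred differential is precisely $\ol\dd$ and that no higher transferred operation beyond the BF cubic term survives. This is a finite but intricate tracing of how the green and black arrows propagate through the homotopy $h$ built from the blue and Lefschetz isomorphisms, and it must be carried out with care about signs and about the degree shift on the $\gg$-summand coming from the top power of $W^*$. A secondary subtlety is confirming that $h$ is sufficiently compatible with the shifted symplectic pairing that the transferred pairing is the naive one asserted in the statement.
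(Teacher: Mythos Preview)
Your strategy—use the homological perturbation lemma after first contracting away the acyclic part of $\d_1$—is sound in outline and is close in spirit to what the paper does, but there is a genuine gap in your identification of the residual complex. It is \emph{not} true that the summands left untouched by the blue and $(\ast)$ arrows alone form $\Omega^{\bullet,\bullet}(\CC^2;\gg[1]\oplus\gg)$. For instance, in the first two rows of the diagram in Lemma~\ref{holo_twist_lemma} the blue arrow from $\Omega^{2,2}\oplus\Omega^{2,0}$ to $\Omega^{0,2}\oplus\Omega^{2,0}\oplus\Omega^{1,1}_\parallel$ is, by the description there, the identity on the matching $\Omega^{2,0}$ summands only; the $\Omega^{2,2}$ in row~2 and the $\Omega^{1,1}_\parallel$ in row~1 are neither sources nor targets of any blue or $(\ast)$ arrow, so they survive your contraction. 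Analogous leftovers appear in the last two rows ($\Omega^0$ in row~10 and $\Omega^{1,1}_\parallel$ in row~11) and in the middle block. These terms are only killed once the \emph{black} differential is taken into account, and the homotopy that does so involves Lefschetz operators $L$ and $\Lambda$ that are not among the $(\ast)$-marked arrows.

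Relatedly, your description of where the two copies of $\Omega^{\bullet,\bullet}$ come from is incorrect: they do not arise from the $\wedge^0 W^*$ and $\wedge^4 W^*$ sectors alone. The $\wedge^0$ row contains only $\Omega^0,\Omega^{0,1}\oplus\Omega^{1,0},\Omega^{0,2}\oplus\Omega^{2,0}\oplus\Omega^{1,1}_\parallel$, which is neither $\Omega^{0,\bullet}$ nor a full $\Omega^{\bullet,\bullet}$. In the paper's argument the retract $\Omega^{\bullet,\bullet}(\CC^2;\gg[1]\oplus\gg)$ is assembled from pieces of \emph{every} $\wedge^k W^*$ sector: the complex is broken into five subcomplexes closed under the \emph{full} twisted differential (rows~1--2, rows~3 and~5, rows~4, 6 and~8, rows~7 and~9, rows~10--11), and for each an explicit retraction $(i,p,h)$ onto $\Omega^{0,\bullet}$, $\Pi\Omega^{1,\bullet}$, $\Omega^{0,\bullet}\oplus\Omega^{2,\bullet}$, $\Pi\Omega^{1,\bullet}$, $\Omega^{2,\bullet}$ respectively is written down, with the maps $i$ and $p$ involving $\partial$-derivatives and the homotopy $h$ built from both the inverse blue arrows \emph{and} Lefschetz/inverse-Lefschetz operators between the stray $\Omega^{p,p}$ terms. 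Your two-step plan can be salvaged, but only by enlarging $\d_1$ or by performing a second round of contraction on the leftover terms; as written, the ``term-by-term inspection'' you invoke would not yield the claimed residual.
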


\begin{remarks}
\begin{enumerate}
 \item The twisted theory discussed here is an example of \emph{holomorphic BF theory}, valued in the super Lie algebra $\gg[\eps_1, \eps_2]$ where $\eps_i$ are odd parameters.
 \item The same result appeared as \cite[Theorem 4.2]{EY1}, though using a different argument.  We present a new argument here that can be used when comparing the further deformations of the holomorphic twist with the further twists by supercharges commuting with $Q_{\mr{hol}}$.
 \item Using the methods of \cite{EY1} we can give a non-perturbative description of the holomorphically twisted theory, as we discussed there.  When we work globally we can identify the moduli space of solutions to the equations of motion of the holomorphic twist on a complex algebraic surface $X$ with the derived stack $T_f[1]\higgs_G(X)$: the 1-shifted tangent space to the moduli stack of $G$-Higgs bundles on $X$.
 \item One can also compare this calculation to the calculation of the same theory by dimensional reduction from 10 dimensions in \cite[Theorem 10.9]{ElliottSafronovWilliams}.
\end{enumerate}
\end{remarks}

\begin{proof}
We will split the twisted classical BV complex up into summands, and identify each summand explicitly as retracting onto $\Omega^{k, \bullet}(\CC^2)$ for some $k$.  We begin with the first two rows in our diagram:
\[\xymatrix@R-10pt{
{\Omega^0(\CC^2;\gg)} \ar[r] &{\Omega^{1,0}(\CC^2;\gg)} \oplus \Omega^{0,1}(\CC^2;\gg) \ar[r] &{\Omega^{2,0}(\CC^2;\gg)} \oplus \Omega^{1,1}_\parallel(\CC^2;\gg) \oplus \Omega^{0,2}(\CC^2;\gg)  & \\
\Omega^{1,0}(\CC^2;\gg) \ar[ur]\ar[r] &\Omega^{2,0}(\CC^2;\gg) \oplus \Omega^{1,1}_{\parallel}(\CC^2;\gg). \ar[ur] &&} \]
The canonical projection on the Dolbeault complex $\Omega^{0,\bullet}(\CC^2;\gg)$ is a quasi-isomorphism, so this defines our map $p$.  It has an explicit quasi-inverse $i$ from $\Omega^{0,\bullet}(\CC^2;\gg)$, given by the sum of the canonical inclusion (which is not a cochain map by itself) with the map sending an element $(\alpha_0, \alpha_{0,1}, \alpha_{0,2})$ to $(\dd \alpha_0, \dd \alpha_{0,1}) \in \Omega^{1,0}(\CC^2; \gg) \oplus \Omega^{1,1}_\parallel(\CC^2;\gg )$.  The cochain homotopy $h$ is just the degree $-1$ identity map between the two copies of the factors $\Omega^{1,0}(\CC^2; \gg)$, $\Omega^{2,0}(\CC^2; \gg)$, and $\Omega^{1,1}_\parallel(\CC^2;\gg)$.

Now we take the subcomplex
\[\xymatrix@R-10pt{
&&{\Omega^{0,1}(\CC^2;\gg)} \ar[r] &\Omega^{1,1}_{\parallel}(\CC^2;\gg) \oplus \Omega^{0,2}(\CC^2;\gg)  \\ \\
&{\Omega^{0}(\CC^2;\gg)} \ar[uur] \ar[r] & \Omega^{1,1}_\parallel(\CC^2;\gg) \ar[uur]
}\]
The map $p$ is defined by the projection onto $\Omega^{0,\bullet}(\CC^2;\gg) $. The map $i$ is defined to be the sum of the inclusion of $\Omega^{0,\bullet}(\CC^2;\gg) $ and the map sending $\alpha_{0,1}$ in $\Omega^{0,1}(\CC^2;\gg)$ to $\pi \dd \alpha_{0,1}$ in $\Omega^{1,1}_\parallel(\CC^2;\gg)$.  The cochain homotopy $h$ is the identity map of cohomological degree $-1$ on $\Omega^{1,1}_\parallel(\CC^2;\gg)$.

Next we take the component
\[\xymatrix@R-10pt{
&{\Omega^{1,1}(\CC^2;\gg)} \ar[r] &\Omega^{2,1}(\CC^2;\gg)\oplus \Omega^{1,2}(\CC^2;\gg)& \\ \\
{\Omega^{1,0}(\CC^2;\gg)} \ar[uur] \ar[r] &\Omega^{2,1}(\CC^2;\gg) \ar[uur] &&
}\]
Again, as the canonical projection on to the complex $ \Omega^{1,\bullet}(\CC^2;\gg)$ with differential $\ol \dd$ is a quasi-isomorphism, this defines our map $p$.  It has an explicit quasi-inverse given by the sum of the canonical inclusion and the map sending $(\alpha_{1,0}, \alpha_{1,1}, \alpha_{1,2})$ to $\dd \alpha_{1,1}$ in $\Omega^{2,1}(\CC^2;\gg)$.  The cochain homotopy $h$ is just the degree $-1$ identity map between the two copies of $\Omega^{2,1}(\CC^2;\gg)$.

Fourth, consider the subcomplex
\[\xymatrix{
&&\Omega^{0,1}(\CC^2;\gg) \ar[r] &{\Omega^{1,2}(\CC^2;\gg)} \\ \\
& \Omega^{1,0}(\CC^2;\gg) \oplus \Omega^{0,1}(\CC^2;\gg) \ar[uur] \ar[r] &{\Omega^{1,1}(\CC^2;\gg)}. \ar[uur]& \\
}\]
The canonical inclusion of the complex $ \Omega^{1,\bullet}(\CC^2;\gg)$ is again a quasi-isomorphism, this defines our map $i$. It has a quasi-inverse $p$ defined by the sum of the projection on to $ \Omega^{1,\bullet}(\CC^2;\gg)$ and the map sending $\alpha_{0,1} \in  \Omega^{0,1}(\CC^2;\gg)$ to $\dd \alpha_{0,1}$ in $ \Omega^{1,1}(\CC^2; \gg)$.  The cochain homotopy $h$ is given by the degree $-1$ identity map between the two copies of $\Omega^{0,1}(\CC^2;\gg)$.

For the component
\[\xymatrix@R-10pt{
& \Omega^{1,1}_\parallel(\CC^2;\gg) \ar[r] &{\Omega^{2,2}(\CC^2;\gg)} & \\ \\
\Omega^{2,0}(\CC^2;\gg)  \oplus \Omega^{1,1}_{\parallel}(\CC^2;\gg) \ar[uur] \ar[r] &\Omega^{2,1}(\CC^2;\gg). \ar[uur] &&
}\]
The map $p$ is defined by the sum of the projection on to $\Omega^{2,\bullet}(\CC^2;\gg)$ and the map sending $\alpha_{1,1}$ in $\Omega^{1,1}_\parallel(\CC^2;\gg)$ to $\dd \alpha_{1,1}$ in $\Omega^{2,1}(\CC^2;\gg)$.  The map $i$ is defined to be the inclusion of $\Omega^{2,\bullet}(\CC^2;\gg)$. The cochain homotopy $h$ is the identity map of cohomological degree $-1$ on $\Omega^{1,1}_\parallel(\CC^2;\gg)$.

Finally we consider the last two rows:
\[\xymatrix{&&\Omega^{1,1}_{\parallel}(\CC^2;\gg) \oplus \Omega^{0,2}(\CC^2;\gg) \ar[r] &\Omega^{1,2}(\CC^2;\gg) \\
&\Omega^{2,0}(\CC^2;\gg)\oplus \Omega^{1,1}_\parallel(\CC^2;\gg)  \oplus {\Omega^{0,2}(\CC^2;\gg)}  \ar[ur] \ar[r] &\Omega^{2,1}(\CC^2;\gg) \oplus {\Omega^{1,2}(\CC^2;\gg)} \ar[ur] \ar[r] &{\Omega^{2,2}(\CC^2;\gg)},
}\]
is quasi-isomorphic to $\Omega^{2,\bullet}(\CC^2;\gg)$ via the canonical inclusion $i$.It has a quasi-inverse $p$ defined by the sum of the projection onto $\Omega^{2,\bullet}(\CC^2;\gg)$, the map sending $\alpha_{1,1}$ in $\Omega^{1,1}_\parallel(\CC^2;\gg)$ to $\dd \alpha_{1,1} $ in $\Omega^{2,1}(\CC^2; \gg)$, and the map sending $\alpha_{1,2}$ in $\Omega^{1,2}(\CC^2;\gg)$ to $\dd \alpha_{1,2}$ in $\Omega^{2,2}(\CC^2;\gg)$.  The cochain homotopy $h$ is just the degree $-1$ identity map between the two copies of the factors $\Omega^{0,2}(\CC^2; \gg)$, $\Omega^{1,2}(\CC^2; \gg)$ as well as $\Omega^{1,1}_\parallel(\CC^2;\gg)$.

The fact that this equivalence is compatible with the symplectic pairing that pairs the two copies of $\gg$ follows by a direct calculation.
\end{proof}

\begin{remark}
This argument only demonstrates the equivalence between the two theories at the level of cochain complexes, but both the holomorphic twist and the complex $(\Omega^{\bullet, \bullet}(\CC^2; \gg[2] \oplus \gg[1]), \ol \dd)$ come equipped with shifted dg Lie structures (or equivalently with the datum of a BV interaction functional).  Because we have the data of a deformation retract, we could explicitly compute the homotopy transfer of the Lie bracket using the explicit formulas of \cite[Theorem 10.3.9]{LodayVallette}, and check that it coincides with the natural bracket induced from the bracket on $\gg$.  We will not do this calculation here, instead referring to the argument in \cite[Theorem 10.9]{ElliottSafronovWilliams}.
\end{remark}

Now, let us consider a holomorphic supertranslation $Q_{\mr{hol}}' = \beta_1 \otimes f_1^*$ lying in the negative helicity part of the supertranslation algebra.  We will show that the the twist by this supercharge is equivalent to the same holomorphic BF theory.  This supercharge is fixed by image of the embedding $\mr U(2) \inj \spin(4)$ associated to the \emph{opposite} complex structure.  This embedding is surjective onto $\SU(2)_+$ instead of $\SU(2)_-$.  When we decompose the classical BV fields into irreducible summands under this $\mr U(2)$ action, which we denote by $\mr U(2)_+$ to emphasize the difference from the previous $\mr U(2)$, we obtain the same component fields as in Proposition \ref{holo_BV_complex_prop}, where the roles of $S_+$ and $S_-$ are reversed.  The following proof proceeds similarly to the proof of Lemma \ref{holo_twist_lemma} and Theorem \ref{holo_twist_BF_thm} above.

\begin{lemma} \label{holo_twist_second_lemma}
The twist of 4d $\mc N=4$ self-dual Yang--Mills theory by the supercharge $Q'_{\mr{hol}},$ has the following classical BV complex, considered as a $\mr U(2)_+$-module:
\begin{center}
\resizebox{\textwidth}{!}{\xymatrix@R-3pt@C-45pt{
\ul{-2}  &\ul{-1} &\ul{0} &\ul{1} &\ul{2} & \ul{3}  \\
& {\Omega^0(\CC^2;\gg)} \ar[r] &{\Omega^{1,0}(\CC^2;\gg)} \oplus \Omega^{0,1}(\CC^2;\gg) \ar[r] \ar@[blue][dr]  &{\Omega^{2,0}(\CC^2;\gg)}  \oplus \Omega^{1,1}_\parallel(\CC^2;\gg)\oplus \Omega^{0,2}(\CC^2;\gg) \ar@[blue][dr]  &\\
&& & \Omega^{2,0}(\CC^2;\gg)  \oplus \Omega^{1,1}_{\parallel}(\CC^2;\gg) \ar[r] &{\Omega^{2,1}(\CC^2;\gg)}&& \\
&\Omega^{1,0}(\CC^2;\gg) \oplus \Omega^{0,1}(\CC^2;\gg) \ar[r] \ar@[green][dr]  &{\Omega^{1,1}(\CC^2;\gg)} \ar@[blue][dr]  &\Omega^{1,1}_{\parallel}(\CC^2;\gg) \oplus \Omega^{0,2}(\CC^2;\gg) \ar@[green][dr]  \ar[r] &\Omega^{1,2}(\CC^2;\gg) \ar@[blue][dr]  & \\
& &{\Omega^{1,0}(\CC^2;\gg)} \ar[r] &\Omega^{2,1}(\CC^2;\gg) &\Omega^{1,1}_\parallel(\CC^2;\gg) \ar[r] &  {\Omega^{2,2}(\CC^2;\gg)}   \\
{\Omega^{0}(\CC^2;\gg)} \ar[r] \ar@[blue][dr]  &\Omega^{1,1}_\parallel(\CC^2;\gg)  \ar@[green][dr]   &\Omega^{0,1}(\CC^2;\gg) \ar[r] \ar@[blue][dr]  &{\Omega^{1,2}(\CC^2;\gg)} \ar@[green][dr]   &  \\
&\Omega^{1,0}(\CC^2;\gg) \ar[r]  &\Omega^{2,0}(\CC^2;\gg) \oplus \Omega^{1,1}_{\parallel}(\CC^2;\gg) & {\Omega^{1,1}(\CC^2;\gg)} \ar[r] &{\Omega^{2,1}(\CC^2;\gg)} \oplus\Omega^{1,2	}(\CC^2;\gg) \\
&\Omega^{0,1}(\CC^2;\gg) \ar[r] \ar@[blue][dr]   &\Omega^{1,1}_{\parallel}(\CC^2;\gg) \oplus \Omega^{0,2}(\CC^2;\gg)  \ar@[blue][dr]  & &  \\
&&\Omega^{2,0}(\CC^2;\gg)\oplus \Omega^{1,1}_\parallel(\CC^2;\gg)  \oplus {\Omega^{0,2}(\CC^2;\gg)}  \ar[r] &\Omega^{2,1}(\CC^2;\gg) \oplus {\Omega^{1,2}(\CC^2;\gg)} \ar[r] &{\Omega^{2,2}(\CC^2;\gg)}.
}}   \end{center}
Here we have written in blue the differential $\d_1$ on the $E_1$ page of the spectral sequence.  This differential acts by the first-order differential operator $(-1)^{k} \dd \colon  \Omega^{p,q}[k+1] \to  \Omega^{p+1,q}[k]$. The green arrows, coming from the differential $\d_2$ in the spectral sequence, are given by the identity map.
\end{lemma}

\begin{theorem} \label{opposite_holo_twist_BF_thm}
The twist by $Q'_{\mr{hol}}$ has a deformation retraction to the classical BV theory $(\Omega^{\bullet, \bullet}(\CC^2; \gg[2] \oplus \gg[1]), \dd)$, with symplectic pairing given by the invariant pairing of the two copies of $\gg$.
\end{theorem}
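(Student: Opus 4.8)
The plan is to run the proof of Theorem~\ref{holo_twist_BF_thm} essentially verbatim, starting this time from the explicit twisted BV complex of Lemma~\ref{holo_twist_second_lemma}. The key structural point is that, as a $\mr U(2)$-module equipped with its two spectral-sequence differentials $\d_1$ and $\d_2$, the complex of Lemma~\ref{holo_twist_second_lemma} is the image of the complex of Lemma~\ref{holo_twist_lemma} under the operation interchanging $S_+$ and $S_-$ --- equivalently, interchanging holomorphic and antiholomorphic form degrees on $\CC^2$, while simultaneously swapping the roles of the $E_1$- and $E_2$-differentials. This is forced by the fact that $Q'_{\mr{hol}} = \alpha_1^\vee \otimes f_2^*$ has negative helicity, so its action is computed by Proposition~\ref{action_of_negative_spinor_prop} rather than Proposition~\ref{action_of_positive_spinor_prop}: now the signed identity and Lefschetz maps appear as the $E_2$-differential $\d_2$ (the green arrows), and the first-order $\ol\dd$-type maps appear as the $E_1$-differential $\d_1$ (the blue arrows), exactly the reverse of Lemma~\ref{holo_twist_lemma}. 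I would begin by recording this dictionary carefully, paying attention to the fact that the rows of the diagram in Lemma~\ref{holo_twist_second_lemma} have also been permuted relative to those of Lemma~\ref{holo_twist_lemma}.

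With the dictionary fixed, I would decompose the complex of Lemma~\ref{holo_twist_second_lemma} into the same five subcomplexes used in the proof of Theorem~\ref{holo_twist_BF_thm}, each of which deformation retracts onto a single copy of a Dolbeault-type complex in a fixed antiholomorphic form degree --- now of the form $\Omega^{\bullet,q}(\CC^2;\gg)$ with the holomorphic differential $\dd$, possibly shifted and with a parity flip. For each subcomplex, the maps $i$, $p$ and the homotopy $h$ are obtained from the corresponding formulas in the proof of Theorem~\ref{holo_twist_BF_thm} by applying the same interchange: the degree $-1$ identity homotopies, which there contracted the acyclic pairs joined by $\d_1$, now contract the pairs joined by $\d_2$; the auxiliary ``corner'' maps built from the holomorphic differential and from the Lefschetz operator $L$ and its inverse $\Lambda$ are transported along the exchange of holomorphic and antiholomorphic degrees; and the two distinguished maps marked $(\ast)$ --- which are Lefschetz rather than identity maps for degree reasons --- are identified exactly as in the proof of Lemma~\ref{holo_twist_lemma}, by the same representation-theoretic argument determining which $\wedge^\bullet W^*$-summand is hit by the operator $t_{\alpha^\vee} \otimes t_{w^*}$. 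Splicing the five retracts together yields a deformation retraction of the twisted theory onto $(\Omega^{\bullet,\bullet}(\CC^2;\gg[1]\oplus\gg),\dd)$; this is the same holomorphic BF theory as in Theorem~\ref{holo_twist_BF_thm}, only presented with respect to the opposite complex structure, for which the holomorphic differential $\dd$ plays the role of $\ol\dd$. Finally, compatibility of the retraction with the $(-1)$-shifted symplectic pairing --- that it pairs the two copies of $\gg$ --- follows by the same direct computation as in the proof of Theorem~\ref{holo_twist_BF_thm}.

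I expect the only real work to be bookkeeping rather than anything conceptual: matching the permuted rows of Lemma~\ref{holo_twist_second_lemma} into subcomplexes that are genuinely closed under both differentials, tracking the signs $(-1)^{k+\ell}$ through the interchange, and pinning down the two $(\ast)$ maps. As in Theorem~\ref{holo_twist_BF_thm}, I would not verify directly that the homotopy-transferred dg Lie structure on the retract agrees with the expected one for holomorphic BF theory; this follows either from the transfer formulas of \cite[Theorem~10.3.9]{LodayVallette} or, as we prefer, from the parallel computation in \cite[Theorem~10.9]{ElliottSafronovWilliams}.
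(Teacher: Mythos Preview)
Your proposal is correct and matches the paper's approach exactly: the paper simply asserts that the proof ``proceeds identically to the proof of Lemma~\ref{holo_twist_lemma} and Theorem~\ref{holo_twist_BF_thm} above,'' and your plan of transporting that argument along the $S_+ \leftrightarrow S_-$ interchange (with the attendant swap of the $E_1$- and $E_2$-differentials) is precisely what that sentence is pointing to. If anything, you have spelled out more of the bookkeeping than the paper does.
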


\subsection{Further Twists}
Let us now discuss non-holomorphic twists of 4d $\mc N=4$ super Yang--Mills theory.  Such twists can be realized as \emph{further} twists of holomorphic twists.  That is, given a square-zero odd element $Q \ne 0$ in the $\mc N=4$ supertranslation algebra, we can decompose
\[Q = Q_{\mr{hol}} + Q',\]
where $Q_{\mr{hol}}$ is a holomorphic supercharge, and $[Q_{\mr{hol}}, Q'] = 0$.  Recall that the twist of a supersymmetric theory by a supertranslation $Q$ is defined by adding the term $S_\gg(Q)$ to the action functional.  In this setting, the twist by $Q$ is obtained from the twist by $Q_{\mr{hol}}$ by adding a further twisting term to the action functional: when $S_\gg^{(>2)} = 0$, i.e. when there is only a second-order $L_\infty$ correction to the on-shell supertranslation action, this further twisting term is just
\[S_\gg(Q') + 2S_\gg^{(2)}(Q_{\mr{hol}}, Q'),\]
where $S_\gg^{(2)}$ denotes the second order term in the supertranslation action.  We will treat the non-holomorphic twists  one class at a time, according to the classification that was described in Section \ref{classification_section}.

We will use the following tool in order to understand the deformation of the holomorphic twist by additional supercharges.

\begin{lemma}[Homological Perturbation Lemma]
Let \[
\xymatrix{
(\LL_0, \d_0) \ar@<3pt>[r]^i & (\LL, \d) \ar@<3pt>[l]^p \ar@(dr,ur)[]_h
}
\]
be a deformation retraction.  Choose a deformation of the differential on $\LL$ to $\d + \d'$.  Suppose that $(1-\d' \circ h)$ is invertible.  Then the operator $\d_0 + \d_0'$ where $\d_0' = p \circ (1-\d' \circ h)^{-1} \circ \d' \circ i$ is a deformation of the differential on $\LL_0$, and $(\LL_0, \d_0 + \d_0')$ is a deformation retract of $(\LL, \d + \d')$.
\end{lemma}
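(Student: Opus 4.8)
The plan is to prove this by the standard ``basic perturbation lemma'' argument: exhibit the \emph{full} perturbed deformation-retraction data explicitly and verify the axioms by a direct computation. Write $D = d + d'$ for the perturbed differential on $\LL$, and set $A = (1 - d'\circ h)^{-1}\circ d'$, which is well-defined by hypothesis (in every application in this paper $d'$ decreases a bounded filtration, so the geometric series $\sum_{n\ge 0}(d'h)^n$ terminates). The computations rest on two resolvent identities, $A = d' + d'\circ h\circ A$ and $A = d' + A\circ h\circ d'$ -- the second because $d'(1-hd') = (1-d'h)d'$ forces $(1-d'h)^{-1}d' = d'(1-hd')^{-1}$ -- together with the given relations $p\circ i = \id_{\LL_0}$, $d\circ h + h\circ d = \id_\LL - i\circ p$, and the fact that $i,p$ are chain maps.

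First I would define the candidate perturbed data
\[
I = i + h\circ A\circ i, \qquad P = p + p\circ A\circ h, \qquad H = h + h\circ A\circ h,
\]
with $d_0' = p\circ A\circ i$ exactly as in the statement, so that $D_0 = d_0 + d_0'$. It is convenient (though not logically necessary) to first arrange the side conditions $h\circ h = 0$, $h\circ i = 0$, $p\circ h = 0$ by the usual replacement of $h$, since these shorten the bookkeeping considerably; I would note that the conclusion holds without them.

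The verification then proceeds in four steps. (i) $D_0^2 = 0$: expand $(d_0')^2 = pAipAi$, rewrite the inner $ip$ using $i\circ p = \id_\LL - (d\circ h + h\circ d)$, and collapse using $(d')^2 = 0$, $d_0^2 = 0$ and the resolvent identities; the mixed terms $d_0 d_0' + d_0' d_0$ vanish against the remainder because $i$ and $p$ are $d$-chain maps. (ii) $P\circ I = \id_{\LL_0}$: a short cancellation, immediate given the side conditions. (iii) $I$ and $P$ intertwine the differentials, $D\circ I = I\circ D_0$ and $P\circ D = D_0\circ P$: here one uses $A = d' + d'\circ h\circ A$ to convert the $D\circ(h A i)$ terms into $A i$ terms. (iv) The homotopy relation $D\circ H + H\circ D = \id_\LL - I\circ P$: expand both sides and cancel using $d\circ h + h\circ d = \id_\LL - i\circ p$ and the resolvent identities. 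Putting (i)--(iv) together yields a deformation retraction from $(\LL_0, d_0 + d_0')$ onto $(\LL, d+d')$, which is the claim, and simultaneously identifies the perturbed differential with the formula in the statement.

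The hard part is purely organizational: tracking the signs introduced by the $\ZZ\times\ZZ/2\ZZ$-grading and by the odd parameters $\eps_i$, and arranging the a priori infinite geometric-series sums so that they telescope correctly. Invertibility of $1 - d'\circ h$ is not an obstacle, being assumed. Since the lemma is classical -- see Gugenheim--Lambe--Stasheff, or the treatment in \cite{LodayVallette} -- I would either cite it directly and skip the sign-chase, or present the four-step verification with the side conditions imposed so that the calculation stays short.
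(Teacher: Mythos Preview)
Your proof sketch is correct and follows the classical Gugenheim--Lambe--Stasheff argument for the basic perturbation lemma. Note, however, that the paper does not actually prove this lemma: it is stated without proof as a standard tool from the literature and then applied in the subsequent theorems. So there is nothing to compare against; your option of simply citing the result (e.g.\ \cite{LodayVallette}) is exactly what the paper does, and the detailed four-step verification you outline, while correct, goes beyond what is needed here.
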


So, we will identify our further twisted theories using the following method.
\begin{enumerate}
 \item Let $(\mc L, \d)$ be the classical BV complex of the holomorphic twist by the supercharge $Q_{\mr{hol}}$, and let $(\mc L_0, \d_0) = (\Omega^{\bullet, \bullet}(\CC^2, \gg[2] \oplus \gg[1]), \ol \dd)$, using the deformation retraction of Theorem \ref{holo_twist_BF_thm}.  Let $\d'$ be given by the action of the supercharge $Q'$.

 \item Check that $\d' \circ h$ is a nilpotent operator, so that $1-\d' \circ h$ is automatically invertible.

 \item Identify the operator $\d_0'$ up to a scalar multiple $\lambda$.  We could do this by direct computation, but we will instead simply refer to the results of \cite{ElliottSafronovWilliams} where the deformations that can occur according to different classes of twisting supercharge are completely classified.

 \item Fix the value of $\lambda$ by directly computing the action of $\d_0'$ on a single summand of $\mc L_0$.
\end{enumerate}

\subsubsection{Rank \texorpdfstring{$(2,0)$}{(2,0)} and \texorpdfstring{$(0,2)$}{(0,2)} Twists} \label{20_section}
Let us begin by computing the further twist from a rank $(1,0)$ supercharge to the twist by a rank $(2,0)$ supercharge.  This is a topological twist.  We will also describe the (equivalent) further twist from a rank $(0,1)$ twist to rank $(0,2)$.

Let $Q_{\mr{hol}} = \alpha_1 \otimes e_2$, and let $Q' = \alpha_2 \otimes e_1$.  These supertranslations commute, and their sum $Q = Q_{\mr{hol}} + Q'$ is a square-zero topological supercharge of rank $(2,0)$. We will show that the twist of $\mc N=4$ super Yang--Mills theory by the supercharge $Q$ is \emph{perturbatively trivial}, meaning that the twisted classical BV complex is contractible.  We can, however, understand it as a contractible deformation of the holomorphically twisted theory.  In order to do so, we introduce the following terminology.

\begin{definition} \label{Hodge_family_deformation}
The \emph{Hodge family} associated to the local $L_\infty$ algebra $\mc L$ on $M$ is the $\CC[t]$-linear local $L_\infty$ algebra $\mc L_{\mr{Hod}} = \mc L[1] \oplus \mc L$, with differential given by the internal differential on $\mc L$, plus $t$ times the identity morphism $\mc L[1] \to \mc L$.  If $\mc L$ is equipped with a degree 0 symmetric non-degenerate pairing $\langle -, - \rangle$ then $\mc L_{\mr{Hod}}$ is equipped with a degree $-1$ symplectic pairing valued in $\CC[t] \otimes \mr{Dens}_M$.
\end{definition}

\begin{remark}
The notation we have used for such families of classical BV theories reflects the notion in algebraic geometry of the \emph{Hodge stack} of a prestack $X$, due to Simpson \cite{Simpson}.  One can define a deformation of the 1-shifted tangent bundle $T[1]X$ of $X$ to the de Rham stack $X_{\mr{dR}}$, which is a derived stack with contractible tangent complex.  In our earlier work \cite{EY1}, we identified a 1-parameter family of further twists of the holomorphic twist of 4d $\mc N=4$ super Yang--Mills theory on $X$ with the Hodge stack of $\mr{Higgs}_G(X)$.  In this section we will verify that this family matches up with the rank $(2,0)$ twist.
\end{remark}

\begin{theorem} \label{20_twist_thm}
The twist of self-dual $\mc N=4$ super Yang--Mills theory by the $\CC[b_1]$-family of square-zero supercharges $Q_{b_1}= Q_{\mr{hol}} + b_1Q'$, generically of rank $(2,0)$, is equivalent to the Hodge family $(\Omega^{\bullet, \bullet}(\CC^2; \gg), \ol \dd)_{\mr{Hod}}$ where the Hodge deformation is parameterized by $-b_1$.
\end{theorem}

\begin{remark}
The particular element $b_1 = -1$ is worth remarking on, yielding the complex $(\Omega^{\bullet, \bullet}(\CC^2; \gg), \ol \dd)_{\mr{dR}}$.  This is the twist by the supercharge $\alpha_1 \otimes e_2 - \alpha_2 \otimes e_1$ stabilized by the Kapustin--Witten twisted action of $\mr{Spin}(4,\CC)$.  As we will see in Section \ref{KW_section}, this corresponds to the twist Kapustin--Witten refer to as $I_A$.
\end{remark}

\begin{proof}
We will follow the outline indicated in the previous subsection.  The first step is to check that $\d' \circ h$ is nilpotent, where $\d'$ is the action of $Q'$ on the holomorphically twisted theory, and $h$ is the homotopy from the proof of Theorem \ref{holo_twist_BF_thm}.  Recall that the holomorphically twisted classical BV complex, as in Lemma \ref{holo_twist_lemma}, splits into six connected components, and the homotopy $h$ preserves this splitting.  To show that $\d' \circ h$ is nilpotent, it is enough to check that the operator $\d'$, when applied to each connected component, has image entirely contained in a different connected component.  To verify this property, we can use Proposition \ref{action_of_positive_spinor_prop}.  Specifically, one performs a calculation analogous to the one we performed in Lemma \ref{holo_twist_lemma} but using $Q'$.

Thus, we can now apply the homotopy perturbation lemma, which allows us to identify the twist by a rank $(2,0)$ supercharge with a deformation of the complex $(\Omega^{\bullet, \bullet}(\CC^2; \gg[\eps]), \ol \dd)$.  This twist is fixed, up to an overall scale, by \cite[Theorem 10.10]{ElliottSafronovWilliams}.  This theorem tells us that the twist by $Q_{b_1}$ is equivalent to the Hodge deformation
\[(\Omega^{\bullet, \bullet}(\CC^2; \gg[\eps]), \ol \dd + \alpha \frac{\dd}{\dd \eps}),\]
for some complex number $\alpha$.  To conclude the proof, we wish to show that $\alpha = -1$.  We can do this by computing the action of the deformed differential on any element of the holomorphically twisted BV complex.  Explicitly, we need to compute
\[\d_0' = p \circ (1-\d' \circ h)^{-1} \circ \d' \circ i.\]
Let us compute this operator when applied to the term $\Omega^{0,2}(\CC^2;\gg\cdot \eps)$.  Because $\d' \circ h$ is nilpotent, $(1-\d' \circ h)^{-1} = 1+ \d' \circ h$.  Since $h$ acts on this term by zero, we only need to compute the composite $p \circ \d' \circ i$.  On this factor of the BV complex, the operators $i$ and $p$ are the obvious projection and inclusion maps, so to compute this composite map, we only need to know the action of $\d'$ acts by sending $\Omega^{0,2}(\CC^2;\gg\cdot \eps^1)$ to $\Omega^{0,2}(\CC^2;\gg\cdot \eps^0)$, by minus the identity.  We can read this off from Proposition \ref{action_of_positive_spinor_prop}.
\end{proof}

\subsubsection{Rank \texorpdfstring{$(1,1)$}{(1,1)} Twists} \label{11_section}
Now, let us study the twist of $\mc N=4$ super Yang--Mills theory by a square-zero supercharge of rank $(1,1)$, say by the element
\[Q = Q_{\mr{hol}} + Q'_{\mr{hol}} = \alpha_1 \otimes e_2 + \beta_1 \otimes f_1^*.\]
We will realize this as a further twist of the rank $(1,0)$ twist; by an identical argument we could realize it as a further twist of the rank $(0,1)$ twist.

We will begin by describing the rank $(1,1)$ twist as a further twist of the rank $(1,0)$ $Q_{\mr{hol}}$-twisted $\mc N=4$ theory. Hence we will describe the 1-parameter family
\[Q_a = Q_{\mr{hol}} + a Q'_{\mr{hol}},\]
for $a \in \CC$.

\begin{theorem} \label{11_twist_thm}
The family of twists by the supercharges $Q_a$ admits a deformation retraction onto the $\CC[a]$-family of theories with classical BV complex $(\Omega^{\bullet, \bullet}(\CC^2;\gg[2]\oplus  \gg[1]), \ol \dd + a \dd_{z_1})$.
\end{theorem}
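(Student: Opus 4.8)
The plan is to follow the structure of the proof of Theorem \ref{20_twist_thm}, replacing the rank $(2,0)$ further-twisting supercharge by the rank $(0,1)$ supercharge $aQ'_{\mr{hol}} = a\,\alpha_1^\vee \otimes f_2^*$. First note that $Q_{\mr{hol}} = \alpha_1 \otimes e_2$ and $Q'_{\mr{hol}} = \alpha_1^\vee \otimes f_2^*$ pair to zero under $\Gamma$, since $\Gamma$ contracts $W$ against $W^*$ and $\langle e_2, f_2^*\rangle = 0$; hence $[Q_{\mr{hol}}, Q'_{\mr{hol}}] = 0$ and $Q_a = Q_{\mr{hol}} + aQ'_{\mr{hol}}$ is square-zero, of rank $(1,1)$ for $a\ne 0$. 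As in the ``Further Twists'' discussion, the twist by $Q_a$ is obtained from the $Q_{\mr{hol}}$-twist by adding the further twisting term $S_\gg(aQ'_{\mr{hol}}) + 2a\,S_\gg^{(2)}(Q_{\mr{hol}}, Q'_{\mr{hol}})$ to the action functional. We compute the effect of this term on the holomorphic BF model $(\Omega^{\bullet,\bullet}(\CC^2;\gg[1]\oplus\gg), \ol\dd)$ of Theorem \ref{holo_twist_BF_thm} by homological perturbation, working throughout with the twisting datum $\alpha$ under which $e_1,e_2$ have weight $1$ and $f_1,f_2$ have weight $-1$, so that $Q_{\mr{hol}}$ and $Q'_{\mr{hol}}$ both carry cohomological degree $+1$.

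\textbf{The linear action of $Q'_{\mr{hol}}$.} Since $Q'_{\mr{hol}}$ lies in the negative-helicity summand $S_- \otimes W^*$, the operator $S_\gg^{(1)}(Q'_{\mr{hol}})$ is computed by Proposition \ref{action_of_negative_spinor_prop}: it acts by $d_{\alpha_1^\vee} \otimes t_{f_2^*}$ on the $E_1$-page of the relevant spectral sequence and by $t_{\alpha_1^\vee} \otimes t_{f_2^*}$ on the $E_2$-page. Lemma \ref{holo_twist_second_lemma} records precisely this operator, but organized with respect to the complex structure and twisting datum adapted to $Q'_{\mr{hol}}$; here I would instead transcribe it into the $\mr U(2)$-decomposition of the classical BV complex attached to $Q_{\mr{hol}}$ (Proposition \ref{holo_BV_complex_prop}), using the datum $\alpha$. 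The key point is that $d_{\alpha_1^\vee}$ tensors $S_+$-valued component fields by $\alpha_1^\vee$ and then reads off a first-order differential operator via $S_+ \otimes S_- \iso \CC^4$; since only $\alpha_1 \otimes \alpha_1^\vee$ and $\alpha_2 \otimes \alpha_1^\vee$ occur, this operator is a derivative purely in the $z_1$-direction. This is the origin of the operator $\dd_{z_1}$ in the statement.

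\textbf{The $L_\infty$ correction and homological perturbation.} As in the proof of Theorem \ref{20_twist_thm}, the quadratic correction $2S_\gg^{(2)}(Q_{\mr{hol}}, Q'_{\mr{hol}})$ has a piece proportional to $Ac^*$ (contributing maps from the bosonic fields to the ghost, and dually from the antighost to the bosonic antifields) and a piece proportional to $\lambda^*\lambda^*$ (contributing maps among the fermionic fields); its precise form is pinned down uniquely by the requirements that it be $\mr U(2)$-equivariant, of cohomological degree $+1$, and that the total differential square to zero, which forces it to be the evident signed identity maps between matching summands. With the perturbed differential $\d'$ (the action of $aQ'_{\mr{hol}}$ together with this correction) in hand, I would apply the Homological Perturbation Lemma along the deformation retraction $(i,p,h)$ of Theorem \ref{holo_twist_BF_thm}: one checks that $1 - \d'\circ h$ is invertible, exactly as in the rank $(2,0)$ case $\d'\circ h$ is assembled from identity maps between a handful of matching pairs of summands, so the relevant geometric series is finite and $p\circ(1-\d'h)^{-1}$ collapses, and then computes the transferred differential $\d'_0 = p\circ(1-\d'h)^{-1}\circ\d'\circ i$. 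The result is the $\CC[a]$-linear operator $a\,\dd_{z_1}$ on $\Omega^{\bullet,\bullet}(\CC^2;\gg[1]\oplus\gg)$, giving the asserted family; compatibility with the symplectic pairing that pairs the two copies of $\gg$ follows exactly as in Theorem \ref{holo_twist_BF_thm}.

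\textbf{Main obstacle.} The delicate part, as in Theorem \ref{20_twist_thm}, is isolating the $L_\infty$ correction term and then chasing signs and scalars through the $\mr U(2)$-equivariant maps and the homological-perturbation composite, so as to confirm that exactly the single holomorphic derivative $\dd_{z_1}$ appears, with coefficient precisely $a$, rather than $\dd_{z_1} + \dd_{z_2}$, a mixed first-order term, or a spurious zeroth-order contribution. A secondary point meriting care is the verification that $1 - \d'h$ is genuinely invertible as a $\CC[a]$-linear endomorphism, i.e.\ that the perturbation is admissible in the sense required by the lemma.
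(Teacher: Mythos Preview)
Your overall strategy---apply the homological perturbation lemma along the deformation retract of Theorem \ref{holo_twist_BF_thm} to the perturbation coming from $aQ'_{\mr{hol}}$---is the same as the paper's. However, there is a genuine gap in how you propose to carry it out.

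The problem is your reliance on $\mr U(2)$-equivariance. In the rank $(2,0)$ argument (Theorem \ref{20_twist_thm}) that you are mirroring, both $Q_{\mr{hol}}$ and $Q'=\alpha_2\otimes e_1$ are stabilized by the same $\mr U(2)\subset\spin(4)$, so the action of $Q'$ and the $L_\infty$ correction can be pinned down as $\mr U(2)$-equivariant maps between the irreducible summands of Proposition \ref{holo_BV_complex_prop}. But $Q'_{\mr{hol}}=\alpha_1^\vee\otimes f_2^*$ is \emph{not} stabilized by this $\mr U(2)$: it is fixed instead by the $\mr U(2)$ associated to the opposite complex structure (the one appearing in Lemma \ref{holo_twist_second_lemma}). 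So your proposed method for determining the $L_\infty$ correction---``pinned down uniquely by the requirements that it be $\mr U(2)$-equivariant, of cohomological degree $+1$, and that the total differential square to zero''---does not apply here, and more generally the action of $Q'_{\mr{hol}}$ cannot be written as maps between the $\mr U(2)_-$-irreducible pieces you have in hand.

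The paper deals with this by first refining the decomposition: some of the $\mr U(2)_-$-irreducibles (e.g.\ $\Omega^{1,0}$, $\Omega^{0,1}$, $\Omega^{1,1}$) are broken further into pieces labeled by $\wedge\,\d z_1$, $\wedge\,\d z_2$, $\wedge\,\d\ol z_1$, $\wedge\,\d\ol z_2$, so that the action of $Q'_{\mr{hol}}$ (which, as you correctly note, only produces $z_1$-derivatives via $d_{\alpha_1^\vee}$) can be written explicitly. Once this is done the paper does not invoke an equivariance argument for the correction at all; it simply records the resulting operator $\d_a$ and observes directly that $\d_a h\,\d_a=0$, so $(1-\d_a h)^{-1}=1+\d_a h$ and the transferred differential reduces to $p\circ\d_a\circ i = a\,\dd_{z_1}$. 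This is cleaner than chasing a geometric series, and in particular settles your ``secondary point'' about admissibility of the perturbation immediately.
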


\begin{proof}
We will use an argument very similar to the proof of Theorem \ref{20_twist_thm}. Namely, we will apply the homotopy perturbation lemma for the deforming differential $\d' = Q'_{\mr{hol}}$, for which we must first check that $\d' \circ h$ is nilpotent.  This is actually easier to see than the analogous fact above; we need only observe that if we view the classical BV complex for holomorphic Chern--Simons theory as filtered by the index $k$ of Proposition \ref{hCS_BV_complex_prop}, both $\d'$ and $h$ raise $k$ degree, and hence so does their composite.

We can therefore apply the homotopy perturbation lemma once more.  We know by \cite[Theorem 10.11]{ElliottSafronovWilliams}
that the twist of 4d $\mc N=4$ super Yang--Mills theory by the supercharge $Q_1$ is equivalent to a complex of the form
\[(\Omega^{\bullet, \bullet}(\CC^2, \gg[\eps]), \ol \dd + \alpha \dd_{z_1}),\]
and we must show that $\alpha=1$.  To do this, we can compute the action of the operator
\[\d'_0 = p \circ \d' \circ i + p \circ \d' \circ h \circ \d' \circ i\]
on a single element in our BV complex, say the coordinate function $z_1$ in $\Omega^0(\CC^2, \gg[\eps])$.  We described the action of $Q'_{\mr{hol}}$ in Lemma \ref{holo_twist_second_lemma}.  On this term, $i(f) = (f, \del  f) \in \Omega^0(\CC^2, \gg) \oplus \Omega^{1,0}(\CC^2, \gg)$ in degree zero.

Now we must apply $\d'$.  This operator acts on $\Omega^{0,1}(\CC^2, \gg) \oplus \Omega^{1,0}(\CC^2, \gg)$ by projection onto the holomorphic factor, \emph{but} in the complex structure associated to the action of $\mr U(2)_+$. Recall that $S_+\otimes S_-\iso \CC^4$ is given by $\alpha_1\otimes \beta_i \mapsto \ol \dd_{z_i}$ and $\alpha_2\otimes \beta_i \mapsto \dd_{z_i}$. Hence, if we denote $\mr U(2)_-$ coordinates by $z_1 = x_1 + iy_1, z_2 = x_2 + iy_2$, then $\mr U(2)_+$ coordinates are given by $w_1 = z_1 = x_1 + iy_1, w_2=\ol{z}_1 = x_z -iy_1$. Then $\del f = \frac{\del f}{\del z_1} dz_1 + \frac{\del f}{\del z_2} dz_2 $ projects to $\frac{\del f}{\del z_1} dz_1= \del_{z_1}f$ in the $\mr U(2)_+$ coordinates.
\end{proof}

\subsubsection{Rank \texorpdfstring{$(2,1)$}{(2,1)}, \texorpdfstring{$(1,2)$}{(1,2)} and \texorpdfstring{$(2,2)$}{(2,2)} Twists} \label{22_section}

In this final part of the section, we will identify further twists of the rank $(1,1)$ twist, using the two isomorphic descriptions described above.  We will describe the twist by a rank $(2,1)$ supercharge using the description given in Theorem \ref{11_twist_thm}, and then we will study the family of maximal deformations by rank $(2,2)$ supercharges.

First, consider a family consisting generically of rank $(2,1)$ supercharges.  That is, let $Q_a$ be the 1-parameter family $Q_a = Q_{\mr{hol}} + aQ'_{\mr{hol}}$ from the previous subsection, let $Q' = \alpha_2 \otimes e_1$ as in Section \ref{20_section} and consider the $\CC[a,b_1]$-family of supercharges
\[Q_{a,b_1} = Q_{\mr{hol}} + aQ'_{\mr{hol}}  + b_1Q'.\]
When we set $a=0$ we obtain the family considered in Section \ref{20_section}, and when we set $b_1=0$ we obtain the family considered in Section \ref{11_section}.

\begin{theorem} \label{21_theorem}
The twist of self-dual $\mc N=4$ super Yang--Mills theory by the family $Q_{a,b_1}$ of supercharges is equivalent to the Hodge family $(\Omega^{\bullet, \bullet}(\CC^2; \gg), \ol \dd + a \dd_{z_1})_{\mr{Hod}}$,
where the Hodge deformation is parameterized by the coordinate $-b_1$.
\end{theorem}

\begin{remark}
In the specified $\bb A^2$-family of twists it is not possible to rescale away the $a$-dependence: the $\bb A^2$ we have chosen is affinely embedded in a four-dimensional vector space of square-zero supercharges, and this affine subspace is not preserved by the $\CC^\times$ action that rescales the positive helicity supercharges.
\end{remark}

\begin{proof}
This is a combination of the calculations from Theorem \ref{20_twist_thm} and Theorem \ref{11_twist_thm}.  Write $\d'$ and $\d'_{\mr{hol}}$ for the deformations of the holomorphically twisted BV complex coming from the actions of $Q'$ and $Q'_{\mr{hol}}$ respectively.  When we apply the homological perturbation lemma, we can see immediately that $((\d' + \d'_{\mr{hol}}) \circ h)^2 = 0$ (the maps $\d' \circ h$ and $\d'_{\mr{hol}} \circ h$ square to zero and commute), and thus by our two previous calculations, the $Q_{a,b_1}$ twisted BV complex is equivalent to the desired complex provided that 
\[p \circ \d' \circ h \circ \d_{\mr{hol}}' \circ i = p \circ \d'_{\mr{hol}} \circ h \circ \d' \circ i = 0.\]
To see this, first recall that in the proof of Theorem \ref{20_twist_thm} we saw that the map $p \circ \d' \circ h$ equals zero.  Very similarly we can see that the map $h \circ \d' \circ i$ is also zero.

Thus, by the homological perturbation lemma, the differential in the classical BV complex of the $Q_{a,b_1}$-twisted theory is
\begin{align*}
\ol \dd + p \circ (b_1\d' + a\d'_{\mr{hol}}) \circ i &= \ol \dd - b_1 \dd_\eps + a \dd_{z_1},
\end{align*}
where $\dd_\eps$ is the identity map between the two copies of $\Omega^{\bullet, \bullet}(\CC^2; \gg ) $ occurring in the Hodge family.
\end{proof}

We can do an analogous calculation using the same argument to compute the twist by the similar $\CC[b_1,b_2]$-family of supercharges generically of rank $(2,1)$:
\[Q_{b_1,b_2} = Q_{\mr{hol}} + b_1Q' + b_2Q''.\]
\begin{theorem} \label{21_opposite_theorem}
The twist of self-dual $\mc N=4$ super Yang--Mills theory by the family $Q_{b_1,b_2}$ of supercharges is equivalent to the Hodge family $(\Omega^{\bullet, \bullet}(\CC^2; \gg), \ol \dd - b_2\dd_{z_2})_{\mr{Hod}}$,
where the Hodge deformation is parameterized by the coordinate $-b_1$.
\end{theorem}

Alternatively, we can mirror this argument to compute the twist by a rank $(1,2)$ supercharge.  Let $Q_a'$ be the 1-parameter family $Q_a' = Q'_{\mr{hol}} + aQ_{\mr{hol}}$, and let $Q'' = \beta_2 \otimes f_2^*$.  Consider the $\CC[a,b_2]$-family of supercharges
\[Q'_{a,b_2} = Q'_{\mr{hol}} + aQ_{\mr{hol}} + b_2Q''.\]

\begin{theorem} \label{12_theorem}
The twist of self-dual $\mc N=4$ super Yang--Mills theory by the family $Q'_{a,b_2}$ of supercharges is equivalent to the Hodge family $(\Omega^{\bullet, \bullet}(\CC^2; \gg), \dd + a \ol \dd_{z_1})_{\mr{Hod}}$,
where the Hodge deformation is parameterized by the coordinate $-b_2$.
\end{theorem}

Finally, let's study the result when we twist by the most generic class of supercharge, those of rank $(2,2)$.  We can analyze these twists by combining our calculations of the rank $(2,1)$ and $(1,2)$ twists above.  We'll start with the supercharge $Q_{a,0}$, which we can equivalently describe -- for $a \ne 0$ -- as a rescaled element $aQ'_{a^{-1},0}$ of our rank $(1,2)$ deformation family.  We'll deform this supercharge to the following $\CC[a,b_1,b_2]$-family of supercharges:
\begin{align*}
Q_{a,b_1,b_2} &= Q_{\mr{hol}}+ aQ'_{\mr{hol}} + b_1Q' + b_2Q''.
\end{align*}

Before we compute the twist, let us outline our strategy.  Note that the abelian super Lie algebra $\CC^{0|3}$ generated by the supercharges $Q'_{\mr{hol}}$, $Q'$, and $Q''$ is the odd part of the cohomology of the supertranslation algebra $\mf T^{\mc N=4}$ with respect to the odd square-zero operator $[Q_{\mr{hol}}, -]$.   
The supertranslation action on self-dual $\mc N=4$ super Yang--Mills theory induces an action of this algebra on the $Q_{\mr{hol}}$-twisted theory $\mc T^{Q_{\mr{hol}}}$.  This twisted theory was identified in Theorem \ref{holo_twist_BF_thm} as having classical BV complex
\[(\Omega^{\bullet, \bullet}(\CC^2; \gg[\eps][1]), \ol \dd) \iso (\Omega^{0, \bullet}(\CC^2; \gg[\eps_1, \eps_2, \eps][1]), \ol \dd),\]
where the isomorphism sends $\d   z_i$ to $\eps_i$ by Kapustin--Witten twisting homomorphism.  This has a natural action of $\CC^{0|3}$, by acting on the three odd generators $\eps_1,\eps_2, \eps$.  These two actions are related by a non-linear endomorphism of the Lie algebra $\CC^{0|3}$:
\[\xymatrix{
\CC^{0|3}_{Q'_{\mr{hol}},Q',Q''} \ar[rr]^F \ar[dr]_{\alpha_{\mr{SUSY}}} &&\CC^{0|3}_{\eps_1,\eps_2,\eps}\ar[dl]^{\alpha_{\mr{can}}} \\ 
&\mr{Der}(\mc T^{Q_{\mr{hol}}}).
}\]
Concretely, the map $\alpha_{\mr{SUSY}}$ comes from the supertranslation action on the untwisted theory, and is a quadratic $L_\infty$ map, whereas the map $\alpha_{\mr{can}}$ is a linear embedding with image spanned by $\del_{\eps_1}=\del_{z_1}$, $\del_{\eps_2}=\del_{z_2}$, and $\del_{\eps}$.  The map $F$ can be defined by choosing a projection onto $\CC^{0|3}$ for which $\alpha_{\mr{can}}$ is a section (although, as we will see shortly, $F$ will be independent of this choice).  To determine the action of a generic element of the twisted supertranslation algebra it is enough to compute this $L_\infty$ endomorphism.

\begin{remark}
In the diagram above $\CC^{0|3}$ and $\mr{Der}(\mc T^{Q_{\mr{hol}}})$ are $\ZZ \times \ZZ/2\ZZ$-graded objects, but the map $\alpha_{\mr{SUSY}}$ (and hence also the map $F$) only preserves the diagonal $\ZZ/2\ZZ$-grading.  Recall that an $L_\infty$ morphism is the same as a CDGA map between the associated Chevalley--Eilenberg cochain complexes. In this case, $C^\bullet(\CC^{0|3}) \iso \CC[a,b_1,b_2]$, where $a,b_1,b_2$ are generators of bidegree $(1,1)$.  The map $F$, therefore, is the same as an endomorphism of $\CC[a,b_1,b_2]$ of even total degree.  To specify such a map is just to specify three polynomials in the variables $a$, $b_1$, and $b_2$.
\end{remark}

\begin{theorem} \label{22_theorem}
The twist of self-dual $\mc N=4$ super Yang--Mills theory by the family $Q_{a,b_1,b_2}$ of supercharges is equivalent to the Hodge family $(\Omega^{\bullet, \bullet}(\CC^2; \gg), \ol \dd + a \dd_{z_1} - b_2 \dd_{z_2})_{\mr{Hod}}$,
where the Hodge deformation is parameterized by $-b_1-ab_2$.
\end{theorem}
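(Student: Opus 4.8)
The plan is to realize the twist by $Q_{a,b_1,b_2}$ as a \emph{further} twist of the holomorphic twist $\mc T^{Q_{\mr{hol}}}$, transport the induced deformation of the differential onto holomorphic BF theory along the deformation retraction of Theorem~\ref{holo_twist_BF_thm}, and evaluate it with the homological perturbation lemma, exactly as in the proofs of Theorems~\ref{20_twist_thm}, \ref{11_twist_thm} and \ref{21_theorem}. First I would record that $Q'_{\mr{hol}}$, $Q'$ and $Q''$ have vanishing $\Gamma$-brackets with one another and with $Q_{\mr{hol}}$, so $Q_{a,b_1,b_2}$ is a genuine $\CC[a,b_1,b_2]$-family of square-zero supercharges and the twist by it is computed by adding to the $Q_{\mr{hol}}$-twisted action the further-twisting term $S_\gg(Q'_{\mr{further}}) + 2S_\gg^{(2)}(Q_{\mr{hol}}, Q'_{\mr{further}})$ with $Q'_{\mr{further}} = aQ'_{\mr{hol}} + b_1Q' + b_2Q''$ (using $S_\gg^{(>2)} = 0$). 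As indicated before the statement, this is the same as computing the quadratic $L_\infty$ comparison map $F \colon \CC^{0|3} \to \CC^{0|3}$, and the whole theorem reduces to finding its linear part $F_1$ and its single surviving quadratic coefficient.

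The key point is that nearly everything is already known. On $b_1 = b_2 = 0$ the further-twisting term is the one computed in Theorem~\ref{11_twist_thm}, whose answer $\ol\dd + a\dd_{z_1}$ forces $F_1(Q'_{\mr{hol}}) = \dd_{z_1}$ and $F_2(Q'_{\mr{hol}}, Q'_{\mr{hol}}) = 0$; similarly Theorem~\ref{20_twist_thm} gives $F_1(Q') = \dd_\eps$ (the Hodge deformation, i.e. the identity map between the two copies of $\gg$) with $F_2(Q', Q') = 0$, and Theorem~\ref{11_twist_3_thm} gives $F_1(Q'') = -\dd_{z_2}$ with $F_2(Q'', Q'') = 0$. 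On the planes $(a,b_1,0)$ and $(0,b_1,b_2)$, Theorems~\ref{21_theorem} and \ref{21_opposite_theorem} give $F_2(Q'_{\mr{hol}}, Q') = 0$ and $F_2(Q', Q'') = 0$; the mechanism there is that $p \circ \d' \circ h = 0$ and $h \circ \d' \circ i = 0$ for the differential $\d'$ generated by $Q'$, and one checks that with all three deformations turned on the perturbation $\delta h$ still squares to zero, so the transferred differential is $p\delta i + p\delta h\delta i$ and no composite through the $Q'$-deformation contributes. The one coefficient left to pin down is $F_2(Q'_{\mr{hol}}, Q'')$ -- equivalently, the transfer of $2ab_2\,S_\gg^{(2)}(Q'_{\mr{hol}}, Q'')$ together with the second-order homological-perturbation term built from the actions of $aQ'_{\mr{hol}}$ and $b_2Q''$.

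To get this coefficient I would proceed in two stages. First pass to the $(Q_{\mr{hol}} + aQ'_{\mr{hol}})$-twisted theory, identified by Theorem~\ref{11_twist_thm} with $(\Omega^{\bullet,\bullet}(\CC^2;\gg[1]\oplus\gg), \ol\dd + a\dd_{z_1})$ via an $a$-dependent retraction $(i_a, p_a, h_a)$ obtained by perturbing that of Theorem~\ref{holo_twist_BF_thm}; then further twist by $b_1Q' + b_2Q''$. The $b_1Q'$ piece still transfers to $b_1\dd_\eps$, by the same vanishing ($p \circ \d' \circ h = 0$, $h \circ \d' \circ i = 0$ for the $Q'$-deformation) used in Theorem~\ref{21_theorem}; the $b_2Q''$ piece transfers to $-b_2\dd_{z_2}$ plus a correction fed by the $a$-dependent parts of $h_a$ and of $2b_2 S_\gg^{(2)}(Q_{\mr{hol}} + aQ'_{\mr{hol}}, Q'') = 2b_2 S_\gg^{(2)}(Q_{\mr{hol}}, Q'') + 2ab_2 S_\gg^{(2)}(Q'_{\mr{hol}}, Q'')$. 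One constrains $S_\gg^{(2)}(Q'_{\mr{hol}}, Q'')$ up to a scalar by equivariance for the torus $T$ that scales each of $Q_{\mr{hol}}, Q'_{\mr{hol}}, Q', Q''$ -- the residual R-symmetry of Proposition~\ref{P1xP1_family_prop} together with the twisting datum $\alpha$ -- and by requiring cohomological degree one, exactly as the ``purple arrows'' are fixed in Theorem~\ref{20_twist_thm}, and identifies its transfer along $(i_a, p_a, h_a)$ with the Hodge map $\dd_\eps$. The residual scalar is pinned to $+1$ either by imposing the $L_\infty$ relations on $\alpha_{\mr{SUSY}}$ (equivalently, that $F$ is a genuine $L_\infty$ morphism) or by matching with the rank $(1,2)$ family of Theorem~\ref{12_theorem}, of which $Q_{a,0}$ is a rescaled member, via the complex-conjugation isomorphism of Remark~\ref{rmk:complex conjugation}. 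Assembling the pieces, the transferred differential on $\Omega^{\bullet,\bullet}(\CC^2;\gg[1]\oplus\gg)$ is $\ol\dd + a\dd_{z_1} - b_2\dd_{z_2} + (b_1 + ab_2)\dd_\eps$, i.e. the Hodge family $(\Omega^{\bullet,\bullet}(\CC^2;\gg), \ol\dd + a\dd_{z_1} - b_2\dd_{z_2})_{\mr{Hod}}$ with parameter $b_1 + ab_2$.

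I expect the genuine difficulty to be precisely this last step: determining $F_2(Q'_{\mr{hol}}, Q'')$ and, especially, its normalization. It is the only new input beyond the earlier theorems, and it is exactly the non-linearity of $F$ that the theorem is designed to exhibit. Unlike the pair treated in Theorem~\ref{20_twist_thm}, the supercharges $Q'_{\mr{hol}}$ and $Q''$ share no common $\mr U(2)$, only a torus, so equivariance alone does not rigidly fix the correction; the fail-safe is to plug in the explicit dimensionally reduced form of the quadratic supertranslation correction from \cite[eq. (34)]{ElliottSafronovWilliams}, which I would keep in reserve behind the cleaner consistency and conjugation arguments above.
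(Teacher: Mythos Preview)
Your proposal is correct and follows essentially the same approach as the paper: reduce the problem to determining the quadratic $L_\infty$ map $F \colon \CC^{0|3} \to \CC^{0|3}$, read off its linear part and all but one quadratic coefficient from Theorems~\ref{20_twist_thm}, \ref{11_twist_thm}, \ref{11_twist_3_thm}, \ref{21_theorem}, \ref{21_opposite_theorem}, and then pin down the remaining coefficient $F_2(Q'_{\mr{hol}}, Q'')$ by comparison with the rank $(1,2)$ twist of Theorem~\ref{12_theorem}. The paper's execution of that last step is a bit more concrete than your sketch: rather than relying on torus equivariance or the complex-conjugation isomorphism, it transports the Hodge operator $\dd_\eps$ directly from the $Q'_{\mr{hol}}$-model to the $Q_{\mr{hol}}$-model via the explicit composite $pi' \circ \dd_\eps \circ p'i$ (built from the two retractions of Theorems~\ref{11_twist_thm} and \ref{11_twist_2_thm}, which it checks are inverse cochain isomorphisms at the rank $(1,1)$ level), evaluates on the ghost $\Omega^0(\CC^2;\gg)$, and reads off $\dd_\eps - \dd_{z_2}$, giving the coefficient $+1$ for $ab_2\,\dd_\eps$ without appeal to \cite[eq.~(34)]{ElliottSafronovWilliams}.
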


In particular we obtain a ``B-type'' twist -- meaning here that the Hodge deformation is trivial -- along the locus where $b_1 =- ab_2$.

\begin{remark}
Note that we need to take some care here.  We cannot apply the homotopy perturbation lemma anymore because the operator $(a\dd'_{\mr{hol}} + b_2\dd'') \circ h$ will not usually be nilpotent.
\end{remark}

\begin{proof}
We can completely determine the $L_\infty$ endomorphism of $\CC^{0|3}$ relating the desired supertranslation action to the natural action by observing that it is at most quadratic, along with the description of the endomorphism after restricting to the three possibly two-dimensional subalgebras from our calculations of less generic twists.

First, the endomorphism is quadratic, because the supertranslation action on $\mc N=4$ super Yang--Mills theory only included quadratic corrections.  That is, in the commutative triangle of $L_\infty$-morphisms, the map $\alpha_{\mr{SUSY}}$ -- the action coming from supertranslations -- is quadratic, and the map $\alpha_{\mr{can}}$ -- the canonical action coming from the vector fields $\dd_{z_1}, \dd_{z_2}$ and $\dd_{\eps}$ -- is linear, and therefore $F$ must also be quadratic.

Now, choose coordinates $(a,b_1,b_2)$ on the domain $\CC^{0|3}$ corresponding to the generators $Q'_{\mr{hol}}$, $Q'$, and $Q''$ of the $Q_{\mr{hol}}$-cohomology of the supertranslation algebra and coordinates on the codomain $\CC^{0|3}$ corresponding to the basis $\del_{z_1},\del_{z_2},\del_\eps$. According to Theorem \ref{21_theorem}, on the locus $b_2=0$ our endomorphism is given by the map $(a,b_1,0) \mapsto (a, 0, -b_1)$.  That is, $ aQ'_{\mr{hol}}+b_1Q' $ maps to the vector field $a\dd_{z_1} - b_1 \dd_\eps$.  Similarly, by Theorem \ref{21_opposite_theorem}, on the locus $a=0$ it is given by the map $(0,b_1,b_2)\mapsto (0, -b_2, -b_1)$.  

On the locus $b_1=0$ our endomorphism is given by $a\dd_{z_1} - b_2\dd_{z_2}$ -- as we can see by considering the limit as $a$ or $b_2 \to 0$ -- plus a quadratic correction term proportional to $ab_2$, say $ab_2(\alpha \dd_{z_1} + \beta \dd_{z_2} + \gamma \dd_\eps)$. We can fix $(\alpha,\beta,\gamma)$ using Theorem \ref{12_theorem}.  Indeed, suppose $a=b_2=1$.  The correction term is obtained from the equivalence
\[\xymatrix{
(\Omega^{\bullet,\bullet}(\CC^2; \gg[\eps][1]), \ol \dd + \dd_{z_1}) \ar@<1ex>[r]^{p'i} &(\Omega^{\bullet,\bullet}(\CC^2; \gg[\eps][1]), \dd + \ol \dd_{z_1}), \ar@<1ex>[l]^{pi'}
}\]
where $(p,i)$ and $(p',i')$ are the deformation retracts defining the rank $(1,1)$ twist as a deformation of a rank $(1,0)$ and $(0,1)$ twist respectively, from Section \ref{11_section}.  The maps $p'i$ and $pi'$ actually define inverse cochain isomorphisms, since at the level of the rank $(1,1)$ twist the composite $p' \circ \d \circ h \circ i' = 0$, and similarly where we swap the primed and unprimed maps.  The deformation of the complex $(\Omega^{\bullet,\bullet}(\CC^2; \gg[\eps][1]), \ol \dd + \dd_{z_1})$ to the rank $(1,2)$ twist is therefore given by the composite map $pi' \circ (-\dd_\eps) \circ p'i$.  We can compute this operator by evaluating it on the ghost $\Omega^0(\CC^2;\gg)$ in degree $-2$, where we find it acts by $-\dd_\eps - \dd_{z_2}$.  Indeed, the operator $p'i$ acts trivially on $f \in \Omega^0(\CC^2;\gg\eps)$, so $\dd_\eps \circ p'i(f \eps) = f$.  The operator $i'$ sends $f$ to $(f, \ol \dd f) \in \Omega^0(\CC^2, \gg) \oplus \Omega^{0,1}(\CC^2, \gg)$.  Finally, $p$ acts trivially on the first component, and it acts on the second component $\ol \dd f$ in coordinates as $f_{(1)} \d \ol z_1 + f_{(2)} \d \ol z_2 \mapsto f_{(2)} \d z_2 = \dd_{z_2} f \in \Omega^{1,0}(\CC^2, \gg)$.  Therefore $(\alpha,\beta,\gamma) = (0,0,-1)$.

Putting these three calculations together, for general elements, the quadratic map must send $(a,b_1,b_2)$ to $(a, -b_2, -b_1 - ab_2)$ as required.

\end{proof}

The discussion on twists is summarized in the diagram of Figure \ref{twist_explanation_fig}.

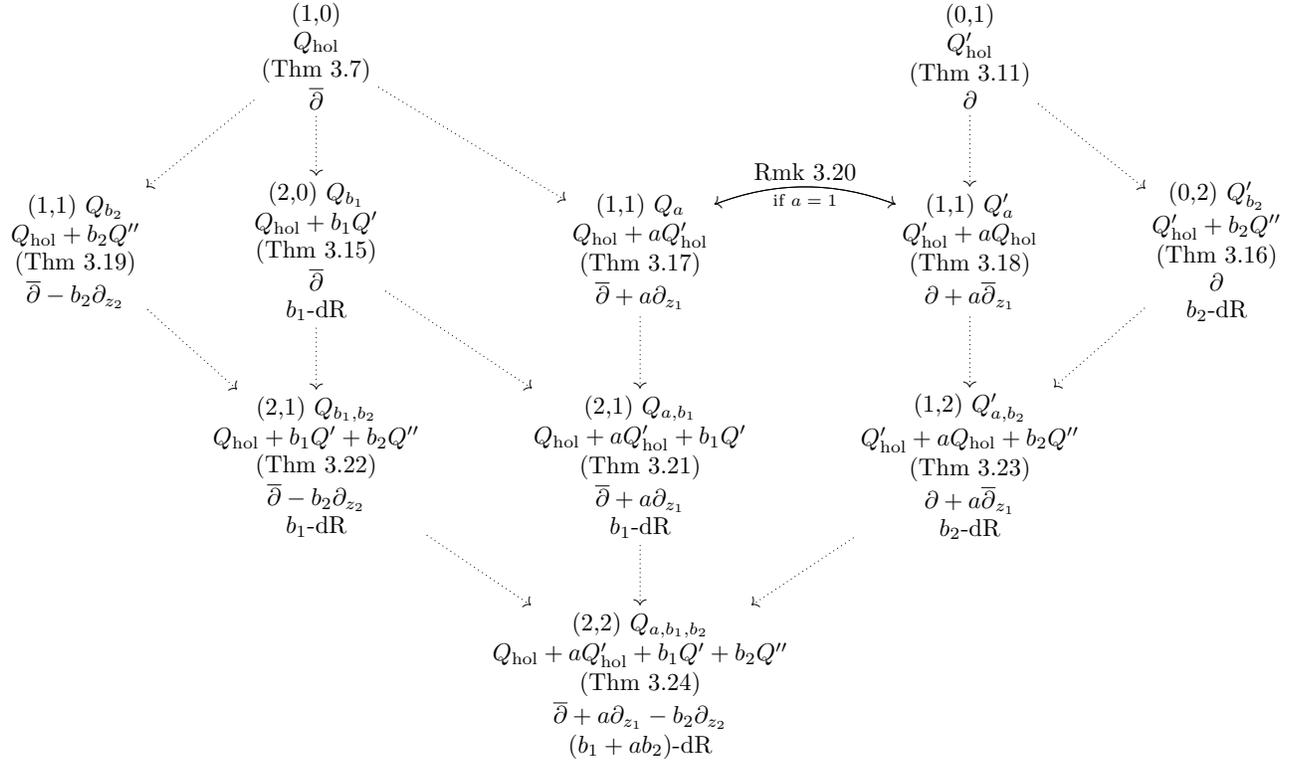
\begin{figure}[!ht]
\resizebox{\textwidth}{!}{
\xymatrix{
 & \txt{(1,0)\\ $Q_{\mr{hol}}$\\(Thm \ref{holo_twist_BF_thm})\\ $\ol\dd$}\ar@{.>}[dl]\ar@{.>}[d]\ar@{.>}[dr] & & \txt{(0,1)\\$Q'_{\mr{hol}}$\\ (Thm \ref{opposite_holo_twist_BF_thm}) \\$\dd$} \ar@{.>}[d]\ar@{.>}[dr]  \\
 \txt{(1,1) $Q_{b_2}$\\$ Q_{\mr{hol}}+ b_2 Q''$ \\ \\
$\ol \dd - b_2\dd_{z_2}$}  \ar@{.>}[dr] &  \txt{(2,0) $Q_{b_1}$\\$Q_{\mr{hol}}+b_1Q'$\\ (Thm \ref{20_twist_thm})\\ $\ol\dd$\\ $-b_1$-dR }\ar@{.>}[d]  \ar@{.>}[dr]    &  \txt{(1,1) $Q_a$ \\ $Q_{\mr{hol}}+ a Q_{\mr{hol}}'$\\ (Thm \ref{11_twist_thm}) \\
$\ol \dd + a \dd_{z_1}  $ } \ar@{.>}[d]  & \txt{(1,1) $Q'_a$\\$  Q_{\mr{hol}}'+a Q_{\mr{hol}}$ \\ \\
$\dd + a \ol \dd_{z_1}  $}  \ar@{.>}[d]  \ar@/_2pc/[l]   & \txt{(0,2) $Q'_{b_2}$\\ $Q'_{\mr{hol}}+b_2Q''$\\ \\  $\dd$\\ $-b_2$-dR}  \ar@{.>}[dl]   \\
& \txt{(2,1) $Q_{b_1,b_2}$\\ $ Q_{\mr{hol}}+ b_1 Q'+ b_2Q''$\\ (Thm \ref{21_opposite_theorem})\\ $\ol\dd - b_2 \dd_{z_2}$\\ $-b_1$-dR}  \ar@{.>}[dr]   & \txt{(2,1) $Q_{a,b_1}$\\ $ Q_{\mr{hol}}+ a Q'_{\mr{hol}}+ b_1 Q'$\\ (Thm \ref{21_theorem})\\ $\ol\dd + a \dd_{z_1}$\\ $-b_1$-dR}  \ar@{.>}[d]    & \txt{(1,2) $Q'_{a,b_2}$\\ $ Q_{\mr{hol}}'+ a Q_{\mr{hol}}+ b_2 Q''$\\ (Thm \ref{12_theorem})\\ $\dd + a \ol \dd_{z_1}$\\ $-b_2$-dR}  \ar@{.>}[dl]     \\
&& \txt{(2,2) $Q_{a,b_1,b_2}$ \\
$Q_{\mr{hol}} +aQ_{\mr{hol}}' +b_1 Q'+b_2Q''$\\
(Thm \ref{22_theorem})\\ $\ol\dd + a \dd_{z_1} - b_2\dd_{z_2}$\\ $(-b_1-ab_2)$-dR}
}}
\caption{A schematic illustrating the relationship between the twists we have discussed in this paper.  For each twist we give the rank of the relevant supercharge, the notation we have used to indicate the supercharge, the result in which a BV description for the twist is given, and then the differential on $\Omega^{\bullet,\bullet }(\CC^2; \gg[\eps][1])$ occurring in this description.  The operator $\lambda \dd_\eps$ is indicated by $\lambda$-dR.  Deformations, or ``further'' twists, are indicated with dotted lines.}
\label{twist_explanation_fig}
\end{figure} 

\subsection{Comparison with the Work of Kapustin--Witten} \label{KW_section}

In this section we will discuss the set-up considered by Kapustin and Witten in \cite{KapustinWitten}, and explain how the twists that they consider are realized in the analysis of the present paper.  Along the way we will discuss the occurrence of the (complexified) coupling constant and Kapustin--Witten's canonical parameter $\Psi$, and talk about how S-duality acts on the families of twisted theories. However, as discussed in Remark \ref{comparison with KW}, the way we obtain forms of categorical geometric Langlands correspondence will be different from the one of Kapustin and Witten.

For most of their paper, Kapustin and Witten consider the one parameter family of supercharges that are compatible with what we've referred to in Section \ref{classification_section} as the Kapustin--Witten twisting homomorphism $\phi_{\mr{KW}}$.  They denote this family by
\[u \eps_L + v \eps_R = u(\alpha_1 \otimes e_2 - \alpha_2 \otimes e_1) + v(\beta_1 \otimes f_1^* - \beta_2 \otimes f_2^*).\]
There is an action of $\CC^\times$ by R-symmetries that simultaneously rescales $u$ and $v$, and hence one can consider the $\bb{CP}^1$-family of twists with projective coordinate $t = v/u$.

Kapustin and Witten consider the full four parameter family of supercharges that we have been discussing in \cite[Section 5.1]{KapustinWitten}.  In terms of our family of supercharges from Proposition \ref{P1xP1_family_prop}, they use the notation
\[\hat \eps = u'(\alpha_1 \otimes e_2) + u''(\alpha_2 \otimes e_1) + v'(\beta_1 \otimes f_1^*) + v''(\beta_2 \otimes f_2^*).\]
Again, the four-dimensional family of twists is invariant under the $\CC^\times \times \CC^\times$ R-symmetry whose factors simultaneously rescale $u', v'$ and $u'', v''$ respectively, so we can instead consider the $\bb{CP}^1 \times \bb{CP}^1$-family with projective coordinates that Kapustin and Witten denote $(w_+,w_-) = (-v'/u', u''/v'')$.  Note that in our notation as for $u'=1$ fixed, we have $b_1=u''$, $a=v'$, and $b_2=v''$ and hence $w_+= -a$ and $w_-= b_1/b_2$. Note also that all points in this family correspond to supercharges of total rank at least two.

Within this family, two special loci can be extracted.  Recall that Kapustin and Witten argue that twists of $\mc N=4$ on a product $C \times \Sigma$ of two Riemann surfaces can be identified as twists of $\mc N=(4,4)$ supersymmetric sigma models from $\Sigma$ into the Hitchin system on $C$, with an appropriately chosen holomorphic or symplectic structure in its twistor family. In the $\bb{CP}^1 \times \bb{CP}^1$ family, the diagonal locus, $w_+ = w_-$, is argued to consist of \emph{B-models} into the Hitchin system.  Note that this coincides with our condition from Theorem \ref{22_theorem} that the Hodge deformation vanishes when $b_1 =- ab_2$. The locus $\ol{w}_+ \cdot w_-= -1$ is argued to consist of \emph{A-models} into the Hitchin system. It's also worth taking note of another special locus where $w_+ \cdot w_- = -1$: this is the locus of twists that are $\spin(4)$-invariant, and therefore can be defined on more general 4-manifolds. Note that as we mostly work in an algebraic framework that is defined for a complex algebraic surface, the latter two features are not visible. 

Let us recall the action of S-duality on the space of supertranslations.  The action of S-duality depends on the value of the complexified coupling constant $\tau$.  The ordinary coupling constant is a real number usually denoted $\frac 1{e^2}$, which rescales the action functional.  In the BV formalism, this can be viewed as a real rescaling of the BV symplectic form.  The \emph{complexified} coupling constant for super Yang--Mills theory is obtained by also including a topological term into the action functional of the form
\[S_\theta = -\frac \theta{8\pi^2} \int \langle F_A \wedge F_A \rangle.\]
The complexified coupling constant is then defined to be the complex linear combination
\[\tau = 4\pi\left(\frac \theta {8\pi^2} + i \frac{1}{e^2}\right).\]

Now, S-duality acts on the projective coordinates $(w_+, w_-)$ by $(w_+, w_-) \mapsto (\frac{\tau}{|\tau|}w_+, \frac{|\tau|}{\tau}w_-)$.  
In particular, if the $\theta$-angle vanishes, so that $\tau$ is purely imaginary, the S-duality transformation is given by $(w_+, w_-) \mapsto (iw_+, -iw_-)$.  Note that this is not yet an involution, but instead it has order 4.  We recover an involution when we quotient by the discrete $\ZZ/2\ZZ$-symmetry acting antipodally on the two copies of $\bb{CP}^1$. 

We can summarize the relationship between our notation and that of Kapustin and Witten;  In Table \ref{notation_table} we list the various twisting supercharges that we have analyzed so far, with the classical BV descriptions, Kapustin and Witten's notations, as well as their S-dual theory.
\begin{table}[h] 
\centering
\begin{tabular}{|c|c|c|c|c|c|}
\hline
Rank & $(w_+, w_-)$ & BV Differential & KW Name & Dual \\
\hline
\rowcolor{light-gray}$(1,0)$ & N/A & $\ol \dd$ & None & Self-dual\\
\rowcolor{light-gray}$(0,1)$ & N/A & $\dd$ & None & Self-dual\\
\rowcolor{light-gray}$(2,0)$ & $(0, \infty)$ & $\ol \dd + \dd_\eps$ & $I_A$ & Self-dual \\
\rowcolor{light-gray}$(0,2)$ & $(\infty, 0)$ & $ \dd + \dd_\eps$ & $(-I)_A$ & Self-dual \\
$(1,1)$ & $(0,0)$ & $\ol \dd + \dd_{z_1}$ & $I_B$ & Self-dual \\
$(1,1)$ & $(\infty, \infty)$ & $\dd + \ol \dd_{z_1}$ & $(-I)_B$ & Self-dual \\
$(2,1)$ & $(t, \infty)$ & $\ol \dd + t\dd_{z_1} + \dd_\eps$ & None & Self-dual \\
$(1,2)$ & $(\infty, t)$ & $\dd + t\ol \dd_{z_1} + \dd_\eps$ & None & Self-dual\\
$(2,2)$ & $(-i,i)$ & $\ol \dd + i\dd_{z_1} - i \dd_{z_2} - 2\dd_\eps$ & $J_A$ & $K_B$\\
\rowcolor{light-gray}$(2,2)$ & $(1,-1)$ & $\ol \dd - \dd -2 \dd_\eps$ & $K_A$ & $(-J)_B$ \\
\rowcolor{light-gray}$(2,2)$ & $(-i,-i)$ & $\ol \dd + i\dd$ & $J_B$ & $K_A$\\
$(2,2)$ & $(1,1)$ & $\ol \dd + \dd_{z_1} - \dd_{z_2}$ & $K_B$ & $(-J)_A$ \\
\hline
\end{tabular}
\caption{Comparison between our twists and those discussed by Kapustin and Witten. Here the BV differential refers to a differential on the graded vector space $\Omega^{\bullet,\bullet}(\CC^2;\gg[\eps][1])$.  Rows highlighted in gray indicate those twists that are $\spin(4)$-invariant.}
\label{notation_table}
\end{table}

We will conclude this section with a few words on the incorporation of the complexified coupling constant.  Kapustin and Witten consider a $\bb{CP}^1$ family of 4-dimensional topologically twisted theories parameterized not by $t$ but by the so-called canonical parameter 
\[\Psi = \frac{\tau +\ol\tau }{2 } + \frac{\tau - \ol\tau  }{2} \left( \frac{t-t^{-1}}{t+ t^{-1}} \right).\]
When the $\theta$-angle vanishes, i.e. when $\tau$ is purely imaginary, this can be written more simply as 
\[\Psi = \tau \frac{t^2-1}{t^2+1}.\]
In the next section we will discuss geometric quantization of the classical BV theories that we have described above.  This depends not only on the classical moduli space, but also on some additional data, including a choice of pre-quantization that will depend on the shifted symplectic form. We will argue that in our framework of discussing field theory in terms of derived algebraic geometry, we can discuss the occurrence of the categories of twisted differential operators appearing in the quantum geometric Langlands conjecture without specifying the coupling constant of the theory.


\section{Application to Quantum Geometric Langlands} \label{QGL_section}

In this subsection, we discuss a method by which the statement of the quantum geometric Langlands correspondence appears by quantizing the stacks of classical solutions in various twists of $\mc N=4$ super Yang--Mills theory.  We describe an ansatz for categorified geometric quantization by analogy with the ordinary geometric quantization story.  This paper will not, however, provide a systematic study of this categorified geometric quantization procedure beyond a definition applicable for our examples (although subsequent to the appearance of the first version of this paper, a more complete approach to categorified geometric quantization was developed by \cite{SafronovGQ}).

\subsection{Recollections on Geometric Quantization}

Let us briefly recall the traditional story of geometric quantization for smooth symplectic manifolds. The input data is a symplectic manifold $(M,\omega)$, called the \emph{phase space}, and the output will be a Hilbert space $\mc H$.

First, one needs to fix the following data:
\begin{itemize}
\item (prequantization) A prequantization of $(M,\omega)$ is a choice of a Hermitian line bundle $L_M$ equipped with a Hermitian connection $\nabla$ on $M$ such that its curvature 2-form $F(\nabla)$ coincides with $\omega$.
\item (polarization) A polarization of $(M,\omega)$ is a Lagrangian foliation, meaning a foliation whose leaves are Lagrangian submanifolds. In particular, one has an integrable distribution $\mc F\subset TM$ such that $\mc F_p \subset T_p M$ is a Lagrangian subspace for each $p\in M$.
\end{itemize}

Then the desired Hilbert space $\mc H$ consists of those sections that are flat along the foliation, that  is, \[\mc H= \{ s  \in \Gamma(M,L_M) \mid \nabla_{X} s =0 \text{ for }X\in\Gamma(M,\mc F)\} .\]
In practice, when we categorify this definition it will be by an analogy to the simplest situation: where one has a Lagrangian foliation realized by a map $\pi \colon M\to N$ where $N$ is the space of leaves, such that each leaf is contractible and the Hermitian line bundle $L_M$ is of the form $\pi^* L_N$ for a line bundle $L_N$ over $N$. In this case, the Hilbert space can be realized essentially as $\mc H \cong  \Gamma(N, L_N )$.  Slightly more precisely, one has to introduce a metaplectic correction.  This means one considers a half-density line bundle $\dens^{1/2}_N$ and forms the space $L^2(N, L_N\otimes \dens^{1/2}_N)$ of square-integrable sections.  These subtleties are necessary in order to define the inner product on Hilbert space, but they don't play an important role in our ansatz.

\begin{example}
Consider $M=T^*N$ with the standard symplectic form $\omega_M$. The trivial complex line bundle with the canonical 1-form is a prequantization datum. For a polarization, we consider the projection $\pi \colon T^*N\to N$ and take $\mc F=\ker (d\pi  \colon T(T^*N)\to TN)$. The corresponding Hilbert space is then $\mc H \cong \Gamma(N, N\times \bb C) =:C^\infty(N,\bb C)$.
\end{example}

\begin{example}\label{ex:twisted_cotangent}
Consider $T^*N$. For a closed 2-form $\alpha$ on $N$, one can consider $\omega= \omega_M + \pi^* \alpha$; we denote such a symplectic manifold by $T^*_\alpha N$ and call it a twisted cotangent bundle. The map $\pi\colon M\to N$ still gives a Lagrangian foliation. If the cohomology class $[\frac{\alpha}{2\pi }]\in \mr H^2(N,\bb R)$ lies in the lattice $\mr H^2(N, \ZZ)$ then one can find a line bundle $L_N$ over $N$ with $c_1(L_N)=[ \frac{\alpha}{2\pi }] \in H^2(N,\bb Z)$ so that the Hermitian line bundle is $\pi^*  L_N$. The resulting Hilbert space is $\mc H\iso \Gamma(N,L_N)$.
\end{example}

\begin{remark} \label{twist_cotangent_rmk}
Note that for a holomorphic line bundle $L$ over a complex manifold $X$, the corresponding cohomology class (of the curvature for any choice of a Hermitian metric with corresponding Chern connection) is given by $c_1(L) \in \mr H^{1,1}(X)  \iso \mr H^1(X,\Omega^1_X)$.

Thus, for geometric quantization of a complex manifold $M$ with $[\omega] \in \mr H^{1,1}(M)$, one may want to look for a holomorphic morphism $\pi\colon M\to N$ together with a holomorphic line bundle $L_N$ over $N$. For instance, if $\alpha\in \mr H^1(N,\Omega^1_{\mr{cl}})$,  one may use it to consider the twisted cotangent bundle $T^*_\alpha N$ where the symplectic form is $\omega_{T^*N} + \pi^* \alpha $ and in particular its cohomology class is $\pi^*[\alpha]$. Then geometric quantization of twisted cotangent bundle is given by the space $\Gamma(N, L_N)$ of holomorphic sections of the holomorphic line bundle $L_N$ over $N$. In this context of a complex manifold, a half-density bundle is a square-root $K_N^{1/2}$ of the canonical bundle $K_N$.

Combining with the above example, we obtained a recipe for geometric quantization of a twisted cotangent bundle $M=T^*_\alpha N$ of a complex manifold $N$. That is, one identifies the cohomology class $[\omega]$ of its symplectic form $\omega$ as $[ \frac{ \omega}{2\pi }] =  [\pi^* c_1( L)]$ for a holomorphic line bundle $L$ over $N$ and then the result of geometric quantization is the space $\Gamma(N, L)$ of holomorphic sections of a holomorphic line bundle $L$ over $N$.
\end{remark}

\subsection{Gerbes and Twisted D-modules} \label{gerbe_section}

We will define categorified geometric quantization for 1-shifted symplectic algebraic stacks by analogy with the ordinary geometric quantization story.  Instead of considering functions twisted by a prequantum line bundle, we will consider sheaves twisted by a prequantum gerbe.  In this section we will explain what this means in a brief and informal way, and discuss how one obtains the category of twisted D-modules as an example of this idea. For more details on this topic, we refer to the paper of Gaitsgory and Rozenblyum \cite{GRCrystals}.

\begin{definition}
A \emph{$\bb G_m$-gerbe} on a scheme $S$ is a category $\mc G$ on which $\mr{Pic}(S)$ acts simply transitively. If we have a map of schemes $ f\colon S_1\to S_2$ and a gerbe $\mc G_2$ on $S_2$, we can construct a gerbe on $S_1$ by setting $\mc {G}_1 := \mr{Pic}(S_1) \times_{ \mr{Pic} (S_2)  }\mc {G}_2$. Then a \emph{$\bb G_m$-gerbe} on a prestack $\mc Y$ is defined to be a family of $\bb G_m$-gerbes $\mc G_{S,y}$ on $S$ for every $y\colon S\to \mc Y$ in a way compatible with pullbacks.
\end{definition}
The trivial gerbe $\mc G_{\mr{triv}}$ on $\mc Y$ is defined to be $(\mc G _{\mr{triv}} )_{S,y}=\mr{Pic}(S)$. 

Given a scheme $S$ and a $\bb G_m$-gerbe $\mc G$ on $S$, let us describe the category $\QC_{\mc G}(S)$ of $\mc G$-twisted quasi-coherent sheaves. The idea is that for $g_1,g_2\in \mc G(S)$, there exists a unique associated line bundle $\mc L_{g_1,g_2} \in \mr{Pic}(S)$ and that $\QC_{\mc G}(S)$ is locally equivalent to $\QC(S)$ by any choice of $g\in\mc G(S)$ (in the same way a line bundle is locally equivalent to the trivial bundle by any choice of a non-vanishing section). Using this, an object $\mc F\in \QC_{\mc G}(S)$ is a collection $\{(\mc F_{g_1},\mc F_{g_2}, \mc L_{g_1,g_2})\}_{g_1,g_2\in\mc G}$, where $\mc F_{g_i}\in\QC(S)$ satisfies $\mc F_{g_2} = \mc L_{g_1,g_2} \otimes  \mc F_{g_1}$; one should think of the following commutative diagram: 
\[\xymatrix{
 & \QC(S) \ar[dd]_\simeq^{\mc L_{g_1,g_2}\otimes (-)} \\
\QC_{\mc G}(S) \ar[ur]_\simeq^{g_1}  \ar[dr]^\simeq_{g_2} &\\
& \QC(S).
}\]
In particular, if $\mc G=\mc G_{\mr{triv}}$, then $g_i=\mc L_i \in \mr{Pic}(S)$ and $\mc L_{g_1,g_2} = \mc L_2\otimes \mc L_1^{-1} $, so $\mc F\in \QC(S)$ determines $\mc F\in \QC_{\mc G}(S)$ and hence $\QC_{\mc G_{\mr{triv}}}(S)= \QC(S)$. For a prestack $\mc Y$,  we define $\QC_{\mc G}(\mc Y)=\displaystyle\lim_{y\colon  S\to\mc Y} \QC_{\mc G_{S,y}}(S)$.

Gaitsgory and Rozenblyum demonstrated that the theory of twisted D-modules can be understood in terms of twisted sheaves.
\begin{definition}
A \emph{twisting} on $\mc Y$ is a $\bb G_m$-gerbe $\mc G$ on $\mc Y_{\mr{dR}}$ together with a choice of a trivialization $\alpha \colon p_{\mc Y}^*\mc G \iso \mc G_{\mr{triv}}$ under pullback along the canonical map $p_{\mc Y}\colon  \mc Y\to \mc Y_{\mr{dR}}$.  We can associate to a twisting $\mc G$ an element of $\mr H^2_{\mr{dR}}(\mc Y)$ called the \emph{curvature 2-form} \cite[Section 6.5]{GRCrystals}.
\end{definition}

Given a twisting datum, we define the category of twisted D-modules on $\mc Y$ as $\QC_{\mc G}(\mc Y_{\mr{dR}})$ together with a fixed functor 
\[\xymatrix{ \QC_{\mc G}(\mc Y_{\mr{dR}})\ar[r]^-{p_{\mc Y}^*} &  \QC_{ \mc G|_{\mc Y} }(\mc Y)\\
 \text{D}_{\mc G}(\mc Y)  \ar@{=}[u] \ar[r] &  \QC(\mc Y)  \ar@{=}[u]_{\alpha}}
\]
that describes the underlying quasi-coherent sheaf of the twisted D-module. We may abuse the notation to write the category as $\text{D}_{(\mc G,\alpha)}(\mc Y)$.

If $S$ is a smooth classical scheme, then a twisting in this sense is equivalent to the notion of a twisted differential operator \cite[Section 6.5]{GRCrystals}.

\begin{remark}
The reuse of the term ``twisting'' in this section is potentially misleading.  There is a priori no relationship between the theory of twisting for sheaves and D-modules and the theory of twisting for supersymmetric field theory (although, of course, there will turn out to be a connection for the specific example of twisted $\mc N=4$ super Yang--Mills theory, as we will see shortly).
\end{remark}

\begin{example}
Let $S$ be a smooth classical scheme and $\mc L$ be a line bundle on $S$. Consider the sheaf $\mc D_{\mc L}$ of differential operators twisted by $\mc L$. The category of $\mc D_{\mc L}$-modules can be realized in terms of a corresponding twisting on $S$. We consider the trivial gerbe $\mc {G}=\mc {G}_{\mr{triv}}$ on $ S_{\mr{dR}}$, but under the pullback to $S$ its trivialization corresponds to an automorphism of $\mc {G}_{\mr{triv}}$ given by tensoring by the line bundle $\mc L$ on $S$. In other words, $\text{D}_{(\mc G_{\mr{triv}},\mc L)}(\mc Y)=\text{D}_{\mc L}(\mc Y) $ is the category of twisted D-modules on $\mc L$.
\end{example} 

\begin{example}
There is a generalization of this procedure developed by Beilinson and Bernstein \cite{BBJantzen} where we twist by a power $\mc L^\kappa$ of a line bundle $\mc L$, where $\kappa$ is a complex number.  This power does not exist as a line bundle for generic values of $\kappa$, but the twisting still exists (see \cite[pp 82--83]{FrenkelLectures} for a clear explanation of the construction for the case of our interest, namely, twisted D-modules on the stack $\bun_G(C)$).
\end{example}

\subsection{Categorified Geometric Quantization}

Now let us discuss geometric quantization in the algebraic and categorified setting. For us, the input data is a 1-shifted symplectic space $(\mc  X,\omega)$ and the output is a DG category. Note that as $\omega$ is an algebraic symplectic form, it has a cohomology class $[\omega]\in  F^2\mr H^3_{\mr{dR}}(\mc X) = \mr H^{2,1}(\mc X)$. 

By analogy to the ordinary story of geometric quantization, we will fix two pieces of data:
\begin{itemize}
\item (prequantization) A prequantization of the 1-shifted symplectic space $(\mc X,\omega)$ is a choice of a $\bb G_m$-gerbe $\mc G_{\mc X}$ with connection on $\mc X$ whose curvature 2-form coincides with $\omega$.
\item (polarization) A polarization of $(\mc X,\omega)$ is a 1-shifted Lagrangian fibration $\pi \colon \mc  X\to \mc Y$.  More generally we could consider a 1-shifted Lagrangian foliation, using the notion of a foliation in derived geometry recently developed by Borisov, Sheshmani, and Yau \cite{BSY}, To\"en and Vezzosi \cite{ToenVezzosiFoliation} and Pantev, but we will not need this level of generality for the examples in this paper. 
\end{itemize}

Let us assume that the prequantum $\bb G_m$-gerbe $\mc G_{\mc X}$ is realized as the pullback $\pi^* \mc G$ for a $\bb G_m$-gerbe $\mc G$ on $\mc Y$.  Given this data, we define the categorified geometric quantization of $(\mc X, \omega)$ as follows.

\begin{definition}\label{cat_geom_quant_def}
The \emph{categorified geometric quantization} of $(\mc X, \omega)$ associated to the polarization $p$ and prequantum gerbe $\mc G_{\mc X} = p^*\mc G$ is the category $\QC_{\mc G }(\mc Y)$ of quasi-coherent sheaves on $\mc Y$ twisted by the gerbe $\mc G$.
\end{definition}

\begin{example} \label{1_cotangent_example}
Consider $\mc X=T^*[1]\mc Y$ with the standard $1$-shifted symplectic form. The trivial gerbe with the canonical 1-form is a prequantization data. A polarization is given by $\pi\colon T^*[1]\mc Y\to \mc Y$. The corresponding category is $\QC(Y)$.
\end{example}

Now let us try to find geometric quantization of the de Rham stack $\mc X_{\mr{dR}}$ of a derived stack $\mc X$. This may look troublesome because $\mc X_{\mr{dR}}$ is canonically $k$-shifted symplectic for any $k\in\ZZ$, just because its tangent complex is trivial. Accordingly, one may consider any polarization together with the trivial $k$-gerbe as prequantum data, whose geometric quantization won't of course be well-defined.

The observation that saves us is that a de Rham stack is to be understood as a part of a Hodge stack. That is, we consider the Hodge stack $\mc X_{\mr{Hod}}$ such that its special fiber is the Dolbeault stack  $\mc X_{\mr{Dol}} = T_f[1]\mc X$ and its generic fiber is the de Rham stack $\mc X_{\mr{dR}}$. Note that if $\mc X$ were $n$-shifted, then $T_f[1]\mc X \cong T^*_f[n+1]\mc X$ canonically equips the Dolbeault stack $\mc X_{\mr{Dol}}$ with the $(n+1)$-shifted symplectic structure, which is to suggest that one has to consider an $n$-shifted symplectic structure on $\mc X$ and compatibility with the natural map $\Phi\colon \mc X \to \mc X_{\mr{dR}}$ to perform geometric quantization on $\mc X_{\mr{Dol}}$ and also on $\mc X_{\mr{dR}}$ as $(n+1)$-shifted symplectic space.

The case of interest for us is $n=1$. Then we have to use the geometric quantization data of 0-shifted symplectic space $(\mc X,\omega)$. The categorified geometric quantization of $(\mc X_{\mr{dR}},\mc X, \omega)$, where $(\mc X,\omega)$ is a 0-shifted symplectic space, should be using the geometric quantization data of $(\mc X,\omega)$ as a 0-shifted symplectic stack in a way compatible with the one of $\mc X_{\mr{dR}}$, that is, one should consider a polarization coming from a map $\pi \colon \mc X \to \mc Y$ (and hence a compatible polarization $\Pi \colon \mc X_{\mr{dR}}\to \mc Y_{\mr{dR}}$ as well in a trivial manner) making the commutative diagram
\[\xymatrix{
\mc X\ar[r]^-\Phi \ar[d]_{\pi} &\mc X_{\mr{dR}} \ar[d]^{\Pi} \\
\mc Y \ar[r]_-\phi &\mc Y_{\mr{dR}},
}\] together with a prequantum line bundle $L$ of $\mc X$, which can be regarded as a trivialization of the pullback $\Phi^*\mc G_{\mc X,\mr{triv}}$ of the trivial gerbe $\mc G_{\mc X,\mr{triv}}$ over $\mc X_{\mr{dR}}$. When the prequantum line bundle $L$ is of the form $L=\pi^* L'$ for a line bundle $L'$ over $\mc Y$, it can be understood as the trivialization of $\phi^*\mc G_{\mc Y,\mr{triv}}$, which defines a twisting on $\mc Y$. This motivates the following definition.

\begin{definition}\label{de_rham_example}
The \emph{categorified geometric quantization} of $(\mc X_{\mr{dR}},\mc X, \omega)$, associated to the geometric quantization data of 0-shifted symplectic space $(\mc X,\omega)$ given by the polarization $\pi\colon \mc X\to \mc Y$ and a prequantum line bundle $L$ of the form $L = \pi^*L'$, is the category $\text{D}_{ (\mc G_{\mc Y,\mr{triv} } , L' )  } (\mc Y)=\text{D}_{L'}(\mc Y)$.
\end{definition}

\begin{remark}[Metaplectic correction] \label{metaplectic_remark}
From now on we will modify our definition slightly for examples of this form, coming from the quantization of a 1-shifted symplectic de Rham stack.  Instead of using the line bundle $L$ prequantizing the 0-shifted symplectic structure on $\mc X$, we will impose a \emph{metaplectic correction}: we will choose a square-root $K^{1/2}$ of the canonical bundle on $\mc Y$, and denote a \emph{corrected} geometric quantization by $\text D_{L }(\mc Y)$ (which would have been written as $\text D_{L \otimes K^{1/2}}(\mc Y)$ in the convention until now).
\end{remark}

\subsection{Geometric Quantization of Twists of 4d \texorpdfstring{$\mc N=4$}{N=4} Gauge Theory}\label{subsub:quantization examples}

Now, having introduced the general notion of categorified geometric quantization, let us apply it to the examples at hand, namely to the Kapustin--Witten twists of $\mc N=4$ super Yang--Mills theory.  That is, we will consider the family of 1-shifted symplectic stacks over $\CC[\lambda,\mu]$ given by
\[\mr{Map}(\bb C_{\mr{dR}} \times C_{\mr{Hod}}, BG)_{\mr{Hod}} \iso \mr{Map}(C_{\mr{Hod}}, BG)_{\mr{Hod}},\]
where $\lambda$ and $\mu$ parameterize the Hodge families inside and outside the mapping stack respectively.  We can understand the 1-shifted symplectic structures following Definition \ref{de_rham_example}, using the 0-shifted symplectic form on the mapping stack $\mr{Map}(C_{\mr{Hod}}, BG)$ defined by transgression (that is, the AKSZ symplectic structure studied in \cite{PTVV}).  

\begin{remark}
Note that by requiring an $\bb C_{\mr{dR}}$-factor in the source, we are excluding rank $(1,0)$ and $(0,1)$ twists, along with their deformations to rank $(2,0)$ and $(0,2)$ twists.  The 1-shifted symplectic structure is more difficult to describe without these de Rham directions.  Such twists are self-dual, and do not appear in the geometric Langlands correspondence as described in \cite{KapustinWitten}, although they are considered in Kapustin's approach in \cite{Kapustinnote}. 
\end{remark}

\begin{remark}
Concretely, when we fix $\lambda$, the mapping stack $\mr{Map}(C_{\mr{Hod}}, BG)$ becomes the stack $\flat_G^\lambda(C)$ of flat $\lambda$-connections, that is, the Hitchin moduli space on $C$ with complex structure $I_\lambda$. The AKSZ symplectic structure we are considering is complex algebraic. This is what Kapustin and Witten would denote by $\Omega_{I_\lambda}$, which is a $(2,0)$-form in complex structure $I_\lambda$. We should distinguish this from the K\"ahler structure which they would denote by $\omega_{I_\lambda}$, which is a real $(1,1)$-form in the complex structure $I_\lambda$.
\end{remark}

Let us consider first the case where $\mu = 0$, so we are quantizing the 1-shifted cotangent space $T^*[1]\flat_G^\lambda(C)$.  We use the trivial prequantum gerbe, and the polarization given by the projection $\pi \colon T^*[1]\flat_G^\lambda(C) \to \flat_G^\lambda(C)$.  The resulting category is, therefore, $\QC(\flat_G^\lambda(C))$, as in Example \ref{1_cotangent_example}.  The case $\mu = 0$ occurs in our family of supersymmetric twists in the following two examples:

\begin{example}[Rank $(1,1)$] \label{rank (1,1)}
 The moduli space associated to the pair $(\lambda,\mu)=(0,0)$ occurs when we consider the twist by a rank $(1,1)$ supercharge.  This is also known as the \emph{Kapustin twist} \cite{KapustinHolo}.  The category the Kapustin twist of $\mc N=4$ super Yang--Mills theory assigns to $\RR^2 \times C$ is $\QC(\higgs_G(C))$.
\end{example}

\begin{example}[Special Rank $(2,2)$] \label{rank special (2,2)}
	 If $\lambda \ne 0$, the moduli space associated to the pair $(\lambda, 0)$ occurs when we consider the twist by special rank $(2,2)$ supercharges of B-type.  Recall from Section \ref{22_section} that we obtain a B-type twist, i.e., $\mu = 0$, when we consider a supercharge 
 \[Q_{\mr{hol}}+ aQ'_{\mr{hol}} + b_1Q' + b_2Q'' \]
 with $b_1 + ab_2 = 0$.  Assuming that $a \ne 0$, in these terms, the parameter $\lambda = -b_2$. Since $\flat^\lambda_G(C)\cong \flat_G(C)$ as derived stacks for any $\lambda \ne 0$, the category we obtain when we quantize is then $\QC(\flat_G(C))$. 
\end{example}

\begin{remark}
We should note that this is not the category occurring in the geometric Langlands correspondence: instead of quasi-coherent sheaves, in order to obtain a category which could plausibly be Langlands dual to the category of D-modules we would need to consider ind-coherent sheaves with a nilpotent singular support condition \cite{ArinkinGaitsgory}.  We studied the meaning of such singular support conditions in supersymmetric gauge theory in \cite{EY2}, and a complete analysis of the connection between twisted supersymmetric gauge theory and the geometric Langlands program would have to combine these approaches.	
\end{remark}

Now, suppose $\mu \ne 0$ and for the moment let us set $\mu = 1$. As discussed above, we consider the commutative square
\[\xymatrix{
\flat^\lambda_G(C)\ar[r]^-\Phi \ar[d]_{\pi_\lambda} &\flat^\lambda_G(C)_{\mr{dR}} \ar[d]^{\Pi_\lambda} \\
\bun_G(C) \ar[r]_-\phi &\bun_G(C)_{\mr{dR}}.
}\]
The claim is that we find a prequantum twisting $L_{\lambda}$ for the 0-shifted symplectic structure $\Omega_{\lambda} =\Omega_{I_\lambda }$ on $\flat^\lambda_G(C)$ and moreover we will see in Theorem \ref{geom_quant_line_bundles_thm} below that there is a twisting $L'_{\lambda}$ on $\bun_G(C)$ so that $L_{\lambda} \iso \pi_\lambda^*L'_{\lambda}$. 

There is a natural line bundle on $\bun_G(C)$ called the \emph{determinant line bundle}.  This line bundle has the property that the stack $\flat_G(C)$ can be identified as the twisted cotangent space to $\bun_G(C)$ by this line bundle, as in Remark \ref{twist_cotangent_rmk} (see \cite[Proposition 4.1.4]{BZFrenkelSS} and the references given there).  When we restrict to coarse moduli spaces, the equivalence between $T^*_{ L_{\mr{det}}} \bun_G(C)$ and $\flat_G(C)$ is, furthermore, a symplectomorphism (this is essentially the prequantization of complex Chern--Simons theory originally considered by Witten \cite{WittencomplexCS}).  This symplectic structure can be described very concretely. Recall that the tangent complex to the stack $\flat_G(C)$ at a point $(P,\mc A)$ can be described as the shifted de Rham complex $(\Omega^\bullet(C; \gg_P)[1], \d_{\mc A})$.  The pairing on the tangent complex coming from the holomorphic symplectic structure $\Omega$ on $\flat_G(C)$ is inherited from the wedge-and-integrate pairing on differential forms, and the invariant pairing on $\gg$: concretely it is given by 
\[ \langle \delta_1 \mc A, \delta_2 \mc A \rangle_{\Omega} = -\frac{1}{4\pi} \int_C \langle \delta_1 \mc A \wedge \delta_2 \mc A \rangle.\] 
On the other hand, for our purpose, we would like to understand $\Omega$ as part of the family $\Omega_{I_w}$ of holomorphic symplectic structures.

We recall the following from \cite{KapustinWitten}. The Hitchin moduli space is a hyper-K\"ahler manifold. It has complex structures $I$, $J$, and $K$, where linear holomorphic functions in those complex structures are evaluations at points of $C$ of $(A_{\ol z} ,\phi_z)$, $\mc A=A+i\phi $, and $(A_{\ol z } - \phi_{\ol z} , A_z +\phi_z)$, respectively. Indeed, the $\bb{CP}^1_h$-family of complex structures can be parametrized as
\[I_\lambda = \frac{1-\ol \lambda \lambda }{1+ \ol \lambda \lambda}I + \frac{i(\lambda-\ol \lambda)}{1+\ol \lambda \lambda}J + \frac{\lambda + \ol \lambda}{1 + \ol \lambda \lambda}K \]
with respect to which $(A_{\ol z } - \lambda\phi_{\ol z} , A_z + \lambda^{-1} \phi_z)$ are holomorphic. Note that the cases of $\lambda=0$, $\lambda=-i$, $\lambda=1$ correspond to the triple of complex structures $I,J,K$, respectively.

Now for symplectic structures, writing $|\d^2z| = idz \wedge  \d \ol  z$, we have the following explicit expressions for K\"ahler and holomorphic symplectic forms.
\begin{align*}
\omega_I & = -\frac{i}{2\pi  } \int_C  |\d^2z| \tr  (\delta A_{\ol z}\wedge \delta A_{z} - \delta  \phi_{\ol z }\wedge \delta \phi_z		 ) = -\frac{1}{4\pi }  \int_C  \tr  (\delta A  \wedge  \delta A -  \delta \phi \wedge \delta \phi )\\
\omega_J  & = \frac{1}{2\pi } \int_C |\d^2z|  \tr( \delta \phi_{\ol z}\wedge \delta A_z  + \delta \phi_z \wedge \delta A_{\ol z} )\\
\omega_K & = \frac{i}{2\pi }\int_C  |\d^2z| \tr( \delta \phi_{\ol z}\wedge \delta A_z  - \delta \phi_z \wedge \delta A_{\ol z}) = \frac{1}{2\pi }\int_C \tr \delta \phi \wedge \delta A\\
\Omega_I & = \omega_J+i\omega_K = \frac{1}{\pi} \int_C |\d^2z|\tr \delta \phi_z \wedge  \delta A_{\ol z}\\
\Omega_J & =\omega_K + i\omega_I = -\frac{i}{4\pi} \int_C \tr (\delta \mc A \wedge \delta \mc A) = \frac{i}{2\pi } \int_C |\d^2z| \tr (  \delta \phi_{\ol z}\wedge \delta A_z  - \delta \phi_z \wedge \delta A_{\ol z}  - i\delta A_{\ol z}\wedge \delta A_{z} +i \delta  \phi_{\ol z }\wedge \delta \phi_z	  )  \\
\Omega_K &  =\omega_I + i\omega_J = -\frac{i}{2\pi }\int_C  |\d^2z|\tr ( \delta A_{\ol z}\wedge \delta A_{z} - \delta  \phi_{\ol z }\wedge \delta \phi_z	 - \delta \phi_{\ol z}\wedge \delta A_z  - \delta \phi_z \wedge \delta A_{\ol z}  )
\end{align*}
Extrapolating to the full $\bb{CP}^1$ family, one obtains the expression
\begin{align*}
 \Omega_{I_\lambda}& = -\frac{i}{2\pi} \int_C |\d^2z|  \lambda \tr \left(   \delta(A_{\ol z } - \lambda\phi_{\ol z}) \wedge \delta( A_z + \lambda^{-1} \phi_z )   \right)\\
 & =  -\frac{i}{2\pi} \int_C |\d^2z|  \tr ( \lambda\delta A_{\ol z}\wedge \delta A_{z} - \lambda\delta  \phi_{\ol z }\wedge \delta \phi_z	 - \lambda^2 \delta \phi_{\ol z}\wedge \delta A_z  - \delta \phi_z \wedge \delta A_{\ol z} )
\end{align*}
where we have $\Omega_{I_{\lambda=0}} = \frac{i}{2}\Omega_I$, $\Omega_{I_{\lambda=-i}} = -\Omega_J = -i\Omega $, and $\Omega_{I_{\lambda=1}} =\Omega_K$.  Here $\omega_J,\omega_K$ are exact, while $\omega_I$ is not. One has $\omega_I= \omega_I' + \delta \lambda_I$ where
\begin{align*}
 \omega_I' & = -\frac{i}{2\pi  } \int_C  |d^2z| \tr  \delta A_{\ol z}\wedge \delta A_{z} 	  = -\frac{1}{4\pi }  \int_C  \tr   \delta A  \wedge  \delta A\\
 	\lambda_I & = \frac{1}{4\pi } \int_C \tr \phi \wedge \delta \phi.
\end{align*}
In this terminology, one can identify a generator of $\text{Pic}(\bun_G(C))\cong \bb Z $ as $L_{\mr{det}}$, for which one has $ [p^* c_1(L_{\mr{det}})] = [ \frac{\omega_I'}{2\pi }] \in \mr H^2(\mc M_H(G,C), \bb Z )$ where $p \colon \mc M_H(G,C) \to  \bun_G(C)$ is the map that forgets the Higgs field. Then the fact that $\Flat_G(C)$ is a twisted cotangent bundle over $\bun_G(C)$ boils down to the observation that \[[\Omega] = [ - i \Omega_J   ] = [\omega_I] = [\omega_I']\] in view of Remark \ref{twist_cotangent_rmk}.

Now, the Hitchin moduli space $ \mc M_H(G,C)$ with complex structure $I_\lambda$ is the moduli space of $G$-bundles with flat $\lambda$-connection on $C$. An important thing to note is
\[[\Omega_{I_\lambda } ] = [\lambda  \omega_I] =  \lambda [\omega_I'] \]
and hence the moduli space $\Flat^\lambda_G(C)$ of $\lambda$-connections on $C$ is the twisted cotangent bundle of $\bun_G(C)$ using the line bundle $L_{\mr{det}}^{\lambda}$.

In our argument below, we will make the assumption that this statement extends to the derived level.  In other words, we will assume the following.

\begin{assumption} \label{expectation}
We will assume that the equivalence between $T^*_{ L_{\mr{det}}} \bun_G(C)$ and $\flat_G(C)$ is an equivalence of 0-shifted symplectic stacks.  This is the subject of work in progress by Calaque and Safronov, who prove the Dolbeault analogue of this statement: that the equivalence $T^*\bun_G(C) \iso \mr{Map}(C_{\mr{Dol}}, BG)$ is a symplectomorphism -- the desired statement here is a Hodge deformation of their result.\footnote{We are grateful to P. Safronov for discussing his work in progress with us.}
\end{assumption}

Now we have everything we need to quantize: we have a trivial gerbe $\mc G'_{\lambda}$ on $\bun_G(C)_{\mr{dR}}$, and a trivialization of the pullback $\phi^*\mc G'_{\lambda}$ given by a twisting $L'_{\lambda}$ on $\bun_G(C)$.  We can understand this quantization concretely using the following result.

\begin{theorem} \label{geom_quant_line_bundles_thm}
In the commutative square
\[\xymatrix{
\flat^\lambda_G(C)\ar[r]^-\Phi \ar[d]_{\pi_\lambda} &\flat^\lambda_G(C)_{\mu\text{-dR}} \ar[d]^{\Pi_\lambda} \\
\bun_G(C) \ar[r]_-\phi &\bun_G(C)_{\mu\text{-dR}}.
}\]
the resulting twisting line bundle $L'$ on $\bun_G(C)$ is isomorphic to a power $L_{\mr{det}}^\kappa$ of the determinant line, where $\kappa =  \frac{\lambda}{\mu}$.
\end{theorem}

If we apply this result to the definition of the categorical geometric quantization as given in Definition \ref{de_rham_example}, the following corollary is immediate.

\begin{corollary} \label{twisted_Dmod_cor}
The category $\text{D}_{\kappa}(\bun_G(C))$ can be obtained by geometric quantization from the 1-shifted symplectic stack $\flat_G^\lambda(C)_{\mu\text{-dR}}$ induced from the holomorphic symplectic structure $\Omega_\lambda$ of $\flat_G^\lambda(C)$ where $\kappa = \frac{\lambda}{\mu}$.
\end{corollary}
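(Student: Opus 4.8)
The plan is to assemble Corollary~\ref{twisted_Dmod_cor} directly from Theorem~\ref{geom_quant_line_bundles_thm}, the prescription of Example~\ref{de_rham_example}, and the metaplectic correction of Remark~\ref{metaplectic_remark}; no new geometric input beyond these is required. First I would check that the data needed to run the quantization of Example~\ref{de_rham_example} is in place. The relevant $0$-shifted symplectic stack is $\flat_G^\lambda(C)$ equipped with the transgression (AKSZ) symplectic form $\Omega_\lambda$ derived from $\omega_\lambda$; by Assumption~\ref{expectation} this stack is equivalent, as a $0$-shifted symplectic stack, to the twisted cotangent bundle $T^*_{L_\lambda}\bun_G(C)$ for the twisting $L_\lambda$. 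In particular the projection $\pi_\lambda \colon \flat_G^\lambda(C) \to \bun_G(C)$ is a Lagrangian fibration and the prequantization $L_\lambda$ is the pullback $\pi_\lambda^* L'_\lambda$ of a twisting $L'_\lambda$ on the base, so the hypothesis of Example~\ref{de_rham_example} on the prequantization is satisfied. Passing to de Rham stacks produces the compatible polarization $\Pi_\lambda \colon \flat_G^\lambda(C)_{\mr{dR}} \to \bun_G(C)_{\mr{dR}}$ together with the trivial prequantum gerbe $\mc G'_\lambda$ on $\bun_G(C)_{\mr{dR}}$ trivialized along $\phi$ by $L'_\lambda$ — exactly a twisting datum in the sense of Section~\ref{gerbe_section}.

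Next I would read off from Example~\ref{de_rham_example} that the uncorrected categorified geometric quantization of $\flat_G^\lambda(C)_{\mr{dR}}$ with its induced $1$-shifted symplectic structure is the category $\text{D}_{L'_\lambda}(\bun_G(C))$ of $L'_\lambda$-twisted D-modules. Applying Theorem~\ref{geom_quant_line_bundles_thm} identifies $L'_\lambda$ with the power $L_{\mr{det}}^{\lambda^2/\mu}$ of the determinant line, so this category is $\text{D}_{\lambda^2/\mu}(\bun_G(C))$. I would then incorporate the two remaining adjustments explained in the preceding discussion: the metaplectic correction of Remark~\ref{metaplectic_remark} replaces $L'_\lambda$ by $L'_\lambda \otimes K^{1/2}_{\bun_G(C)}$, which under the Beilinson--Bernstein identification of twistings with complex parameters is the critical shift $\kappa \mapsto \kappa + 1/2$; and the overall rescaling of the symplectic form by the real coupling constant scales the curvature $2$-form, and hence the twisting parameter, linearly, sending it to $\kappa = -2\tau(\lambda^2/\mu + 1/2)$. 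Collecting these identifications gives the claimed category $\text{D}_\kappa(\bun_G(C))$.

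The main obstacle is bookkeeping rather than geometry: one must verify that the metaplectic shift by $1/2$ and the overall rescaling by $-2\tau$ are applied in the correct order and with the correct signs and normalizations, so that the output agrees with the category $\text{D}_\kappa$ appearing in the quantum geometric Langlands conjecture under the convention that $\kappa$ is measured in multiples of the Killing form. A secondary point that requires care is checking that the equivalence of Assumption~\ref{expectation} is compatible with the chosen polarizations — that it genuinely matches the Lagrangian fibration $\pi_\lambda$ with the canonical projection off the twisted cotangent bundle — so that the prequantization line bundle really does descend along $\pi_\lambda$ in the way the quantization recipe of Example~\ref{de_rham_example} demands; this is the only place where the derived (as opposed to coarse) content of Assumption~\ref{expectation} is used.
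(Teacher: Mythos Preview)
Your proposal is correct and follows essentially the same route as the paper: the corollary is derived in the paragraph immediately preceding it by combining Theorem~\ref{geom_quant_line_bundles_thm} with the quantization recipe of Example~\ref{de_rham_example}, then applying the metaplectic correction of Remark~\ref{metaplectic_remark} and the overall rescaling by $-2\tau$, exactly as you describe. Your account is in fact more explicit than the paper's about why the polarization and prequantization hypotheses of Example~\ref{de_rham_example} are met, and your identification of the bookkeeping (order of the metaplectic shift and the rescaling) as the only delicate point matches the paper's treatment.
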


\begin{proof}[Proof of Theorem \ref{geom_quant_line_bundles_thm}]
We begin with the case $\mu = 1$, which follows from Assumption \ref{expectation}.  Indeed, if $\flat_G^\lambda (C)$ is identified with the twisted cotangent bundle to $\bun_G(C)$ by $L_{\mr{det}}^\lambda$ as a 0-shifted symplectic stack, then in particular it admits a prequantization by $L_{\mr{det}}^\lambda$.

Now the tangent complex to $\flat_G(C)_{\mu\text{-dR}}$ can be identified relative to $\CC[\mu]$ with the complex $(\Omega^\bullet(C; \gg_P)[\eps],  \d_{\mc A} + \mu \dd_\eps)[1]$, where $\eps$ is a degree 1 parameter.  If we write $F_\mu \colon \flat_G(C)_{\mr{dR}} \overset \iso \to \flat_G(C)_{\mu\text{-dR}}$, we can identify the pullback $F_\mu^*\wt \Omega_\mu $ with the rescaling $\mu^{-1}\wt \Omega_{\mu = 1}$, where $\wt \Omega_\mu $ is the 1-shifted symplectic structure of $\Flat_G(C)_{\mu\text{-dR}}$. Although the symplectic structures are trivial for the de Rham stack, this is a meaningful statement in terms of the Hodge family. This is induced from the corresponding rescaling of 0-shifted symplectic structure on $\Flat_G(C)$. Thus, when we rescale $\mu$, the prequantization is given by the twisting $L_{\mr{det}}^{\lambda/\mu}$.
\end{proof}

\begin{example}[Rank $(2,1)$]\label{rank (2,1)}
Let us consider the case where $\lambda = 0$, but $\mu \ne 0$.  This occurs when we consider the twist by a rank $(2,1)$ supercharge of the form $Q_{\mr{hol}}+ aQ'_{\mr{hol}} + b_1Q' $.  As we learned in Section \ref{22_section}, the moduli space of solutions on $\bb C \times C$ in this twisted theory takes the form $\higgs_G(C)_{-b_1\text{-dR}}$. According to Corollary \ref{twisted_Dmod_cor}, we obtain the category $\text{D}(\bun_G(C))$ regardless of $\mu$.
\end{example}

\begin{example}[Generic rank $(2,2)$]\label{rank generic (2,2)}
Now, let us consider the case where both $\lambda$ and $\mu$ are non-zero.  This occurs for generic supercharges $Q_{\mr{hol}}+ aQ'_{\mr{hol}} + b_1Q' + b_2Q'' $ of rank $(2,2)$.  As we learned in Section \ref{22_section}, the moduli space of solutions on $\bb C \times C$ in this twisted theory takes the form $\flat^\lambda_G(C)_{\mu\text{-dR}}$ where $\lambda = -b_2$ and $\mu = b_1 + ab_2$.  According to Corollary \ref{twisted_Dmod_cor}, when we quantize this moduli stack with coupling constant $\tau$, we obtain the category $\text{D}_{\kappa }(\bun_G(C))$ where $\kappa = \lambda/\mu$.
\end{example}

\begin{remark}\label{comparison with KW}
Twisted S-duality of \cite{RaghavendranYoo} gives a concrete map on the space of all possible deformations of the Kapustin twist of 4d $\mc N=4$ supersymmetric gauge theory for $G=\GL_n$, which is interpreted as the space of closed string states. In particular, it sends $[\lambda,\mu] \mapsto [\mu,-\lambda ]$. Combined with the discussion of the current paper, it recovers most of the known forms of the global categorical geometric Langlands conjectures:
\begin{itemize}
	\item Consider Example \ref{rank (1,1)} where we had $(\lambda,\mu)=(0,0)$. This twist is physically known to be self-dual, and indeed so in the context of twisted S-duality. The associated conjectural equivalence of categories we obtain as the result of geometric quantization is \[\QC(\higgs_G(C)) \iso \QC(\higgs_{G^\vee}(C)).\] This is the so-called classical limit of the geometric Langlands correspondence or Dolbeault geometric Langlands correspondence, as conjectured by Donagi and Pantev \cite{DonagiPantev}.	
  	\item Consider Examples \ref{rank special (2,2)} and \ref{rank (2,1)}. Twisted S-duality exchanges $(a,0) \leftrightarrow (0,-a)$. This predicts the de Rham geometric Langlands equivalence
\[\text{D}(\bun_G(C)) \iso \QC(\flat_{G^\vee}(C)),\]
modulo issues of singular support as discussed in \cite{EY2}.
  	\item Consider Example \ref{rank generic (2,2)}. Twisted S-duality exchanges $\kappa \leftrightarrow - \frac{1}{\kappa}$. Upon categorified geometric quantization, this leads to a conjecture \[\text{D}_\kappa(\bun_G(C))\simeq \text{D}_{ -\frac{1}{\kappa}}(\bun_{\check{G}}(C))\] which is the statement of the quantum geometric Langlands correspondence.
\end{itemize}
We should comment on the difference between the parameter $\lambda/\mu$ and Kapustin--Witten's $\Psi$ parameter. In our analysis, we didn't incorporate the additional structure coming from the coupling constant of the theory.  A more elaborate analysis would include this additional structure as a deformation of the AKSZ shifted symplectic structure. We will not attempt to analyze this further in the present work.
\end{remark}

\pagestyle{bib}
\printbibliography

@incollection{CostelloSH,
	Author = {Costello, Kevin},
	Booktitle = {Special Issue: In Honor of Dennis Sullivan},
	Number = {1},
	Series = {Pure and Applied Mathematics Quarterly},
	Title = {Notes on supersymmetric and holomorphic field theories in dimensions 2 and 4},
	Volume = {9},
	Year = {2013}}

@article{KapustinWitten,
	Author = {Kapustin, Anton and Witten, Edward},
	Journal = {Communications in Number Theory and Physics},
	Pages = {1--236},
	Title = {Electric-magnetic duality and the geometric {Langlands} program},
	Volume = {1},
	Year = {2007}}

@article{ArinkinGaitsgory,
	Author = {Arinkin, Dima and Gaitsgory, Dennis},
	Journal = {Selecta Mathematica},
	Number = {1},
	Pages = {1--199},
	Publisher = {Springer},
	Title = {Singular support of coherent sheaves and the geometric {Langlands} conjecture},
	Volume = {21},
	Year = {2015}}

@book{CostelloBook,
	Author = {Kevin Costello},
	Optseries = {Mathematical Surveys and Monographs},
	Publisher = {AMS},
	Title = {Renormalization and Effective Field Theory},
	Volume = {170},
	Year = {2011}}

@book{CostelloGwilliam2,
	Author = {Costello, Kevin and Gwilliam, Owen},
	Date-Added = {2017-10-04 08:24:59 +0000},
	Date-Modified = {2019-08-13 15:57:40 +0200},
	Title = {Factorization algebras in quantum field theory. Vol. 2},
	Url = {https://people.math.umass.edu/~gwilliam/vol2may8.pdf},
	Year = {2018},
	Bdsk-Url-1 = {https://people.math.umass.edu/~gwilliam/vol2may8.pdf}}

@article{WittenTQFT,
	Author = {Witten, Edward},
	Journal = {Communications in Mathematical Physics},
	Number = {3},
	Pages = {353--386},
	Publisher = {Springer},
	Title = {Topological quantum field theory},
	Volume = {117},
	Year = {1988}}

@article{BMS,
	Author = {Boels, Rutger and Mason, Lionel and Skinner, David},
	Journal = {Journal of High Energy Physics},
	Number = {02},
	Pages = {014},
	Publisher = {IOP Publishing},
	Title = {Supersymmetric gauge theories in twistor space},
	Volume = {2007},
	Year = {2007}}

@article{DAGX,
	Author = {Lurie, Jacob},
	Journal = {available at author's website: {http://www.math.harvard.edu/lurie}},
	Title = {Derived Algebraic Geometry {X}: Formal Moduli Problems},
	Year = {2011}}

@article{KapustinHolo,
	Author = {Kapustin, Anton},
	archiveprefix = {arXiv},
	eprint = {0612119},
	primaryclass = {hep-th},
	Title = {Holomorphic reduction of {$N=2$} gauge theories, {Wilson}-'t {Hooft} operators, and {S}-duality},
	Year = {2006}}

@article{PTVV,
	Author = {Pantev, Tony and To{\"e}n, Bertrand and Vaqui{\'e}, Michel and Vezzosi, Gabriele},
	Journal = {Publications math{\'e}matiques de l'IH{\'E}S},
	Number = {1},
	Pages = {271--328},
	Publisher = {Springer},
	Title = {Shifted symplectic structures},
	Volume = {117},
	Year = {2013}}

@article{Simpson,
	Author = {Simpson, Carlos},
	Journal = {Lecture Notes in Pure and Applied Mathematics},
	Pages = {229--264},
	Publisher = {MARCEL DEKKER AG},
	Title = {Homotopy over the complex numbers and generalized de {Rham} cohomology},
	Year = {1996}}

@article{PolishchukRothstein,
	Author = {Polishchuk, Alexander and Rothstein, Mitchell},
	Journal = {Duke Mathematical Journal},
	Number = {1},
	Pages = {123--146},
	Publisher = {Duke University Press},
	Title = {Fourier transform for {D}-algebras, {I}},
	Volume = {109},
	Year = {2001}}

@article{Kapustinnote,
	Author = {Kapustin, Anton},
	archiveprefix = {arXiv},
	eprint = {0811.3264},
	Title = {A note on quantum geometric {Langlands} duality, gauge theory, and quantization of the moduli space of flat connections},
	Year = {2008}}

@article{GRCrystals,
	Author = {Gaitsgory, Dennis and Rozenblyum, Nick},
	archiveprefix = {arXiv},
	eprint = {1111.2087},
	Title = {Crystals and {D}-modules},
	Year = {2011}}

@article{Movshev,
	Author = {Movshev, Mikhail},
	archiveprefix = {arXiv},
	eprint = {0812.0224},
	Title = {A Note on Self-Dual {Yang}-{Mills} Theory},
	Year = {2008}}

@article{EY1,
	Author = {Elliott, Chris and Yoo, Philsang},
	Journal = {Advances in Theoretical and Mathematical Physics},
	Title = {Geometric {Langlands} Twists of {$N=4$} Gauge Theory from Derived Algebraic Geometry},
	Volume={22},
	Number={3},
	Pages={615--708},
	Year = {2018}}

@incollection{FrenkelLectures,
	Author = {Frenkel, Edward},
	Booktitle = {Frontiers in number theory, physics, and geometry II},
	Pages = {387--533},
	Publisher = {Springer},
	Title = {Lectures on the {Langlands program} and conformal field theory},
	Year = {2007}}

@article{Wallbridge,
	Author = {Wallbridge, James},
	archiveprefix = {arXiv},
	eprint = {1610.00441},
	Title = {Derived smooth stacks and prequantum categories},
	Year = {2016}}

@article{CostelloLiSUGRA,
	Author = {Costello, Kevin and Li, Si},
	Journal = {arXiv preprint arXiv:1606.00365},
	Title = {Twisted supergravity and its quantization},
	Year = {2016}}

@article{EY2,
	Author = {Elliott, Chris and Yoo, Philsang},
	Journal = {Communications in Mathematical Physics},
	Title = {A Physical Origin for Singular Support Conditions in Geometric {Langlands}},
	Year = {2019},
	volume= {368}, Issue = {3}, Pages = {985--1050}}

@article{ElliottSafronov,
	Archiveprefix = {arXiv},
	Author = {Elliott, Chris and Safronov, Pavel},
	Date-Added = {2019-12-26 16:24:12 +0300},
	Date-Modified = {2019-12-26 16:24:39 +0300},
	Doi = {10.1007/s00220-019-03393-9},
	Eprint = {1805.10806},
	Fjournal = {Comm. Math. Phys.},
	Issn = {0010-3616},
	Journal = {Comm. Math. Phys.},
	Mrclass = {81T60 (17B37 18D50 18F20)},
	Mrnumber = {4019918},
	Number = {2},
	Pages = {727--786},
	Primaryclass = {math-ph},
	Title = {Topological twists of supersymmetric algebras of observables},
	Volume = {371},
	Year = {2019},
	Bdsk-Url-1 = {https://doi.org/10.1007/s00220-019-03393-9}}

@article {WittenTwistor,
    AUTHOR = {Witten, Edward},
     TITLE = {Perturbative gauge theory as a string theory in twistor space},
   JOURNAL = {Comm. Math. Phys.},
  FJOURNAL = {Communications in Mathematical Physics},
    VOLUME = {252},
      YEAR = {2004},
    NUMBER = {1-3},
     PAGES = {189--258},
      ISSN = {0010-3616},
   MRCLASS = {81T60 (32L25 32L81 81R25 81T13 81T18 81T30 81T45)},
  MRNUMBER = {2104879},
MRREVIEWER = {Andrew Neitzke},
       DOI = {10.1007/s00220-004-1187-3},
       URL = {https://doi.org/10.1007/s00220-004-1187-3},
}

@article {ElliottSafronovWilliams,
    AUTHOR = {Elliott, Chris and Safronov, Pavel and Williams, Brian R.},
     TITLE = {A taxonomy of twists of supersymmetric {Y}ang--{M}ills theory},
   JOURNAL = {Selecta Math. (N.S.)},
  FJOURNAL = {Selecta Mathematica. New Series},
    VOLUME = {28},
      YEAR = {2022},
    NUMBER = {4},
     PAGES = {Paper No. 73},
      ISSN = {1022-1824},
   MRCLASS = {81T60 (14D21 70S15 81T13)},
  MRNUMBER = {4468561},
       DOI = {10.1007/s00029-022-00786-y},
       URL = {https://doi.org/10.1007/s00029-022-00786-y},
}

@article{BatalinVilkovisky,
	Author = {Batalin, Igor and Vilkovisky, Grigori},
	Date-Added = {2018-03-22 17:20:09 +0000},
	Date-Modified = {2018-03-22 17:20:16 +0000},
	Doi = {10.1016/0370-2693(81)90205-7},
	Fjournal = {Physics Letters. B},
	Issn = {0370-2693},
	Journal = {Phys. Lett. B},
	Mrclass = {81E99 (58F06 81G20)},
	Mrnumber = {616572},
	Number = {1},
	Pages = {27--31},
	Title = {Gauge algebra and quantization},
	Volume = {102},
	Year = {1981},
	Bdsk-Url-1 = {https://doi.org/10.1016/0370-2693(81)90205-7}}

@article{EagerSaberiWalcher,
	Archiveprefix = {arXiv},
	Author = {Eager, Richard and Saberi, Ingmar and Walcher, Johannes},
	Date-Added = {2018-07-13 07:58:30 +0000},
	Date-Modified = {2020-02-13 20:15:40 -0800},
	Eprint = {1807.03766},
	Primaryclass = {hep-th},
	Title = {Nilpotence varieties},
	Year = 2018}

@article{PridhamFMP,
	Archiveprefix = {arXiv},
	Author = {Pridham, J. P.},
	Date-Added = {2020-02-06 11:56:47 -0800},
	Date-Modified = {2020-02-06 11:57:30 -0800},
	Doi = {10.1016/j.aim.2009.12.009},
	Eprint = {0705.0344},
	Fjournal = {Advances in Mathematics},
	Issn = {0001-8708},
	Journal = {Adv. Math.},
	Mrclass = {14B12 (16E35)},
	Mrnumber = {2628795},
	Mrreviewer = {Eivind Eriksen},
	Number = {3},
	Pages = {772--826},
	Primaryclass = {math.AG},
	Title = {Unifying derived deformation theories},
	Volume = {224},
	Year = {2010},
	Bdsk-Url-1 = {https://doi.org/10.1016/j.aim.2009.12.009}}

@incollection{Toen,
	Author = {To{\"{e}}n, Bertrand},
	Date-Added = {2020-02-06 11:49:44 -0800},
	Date-Modified = {2020-02-12 09:16:30 -0800},
	Fjournal = {Ast\'{e}risque},
	Isbn = {978-2-85629-855-8},
	Issn = {0303-1179},
	Journal = {Ast\'{e}risque},
	Mrclass = {14B12 (14F05 18F05 18G99)},
	Mrnumber = {3666027},
	Mrreviewer = {Montserrat Teixidor i Bigas},
	Note = {S\'{e}minaire Bourbaki. Vol. 2015/2016. Expos\'{e}s 1104--1119},
	Number = {390},
	Pages = {Exp. No. 1111, 199--244},
	Title = {Probl\`emes de modules formels},
	Year = {2017}}

@article{ElliottWilliamsYoo,
	Archiveprefix = {arXiv},
	Author = {Elliott, Chris and Williams, Brian R and Yoo, Philsang},
	Date-Modified = {2020-02-13 20:09:33 -0800},
	Doi = {10.1016/j.geomphys.2017.08.009},
	Eprint = {1702.05973},
	Fjournal = {Journal of Geometry and Physics},
	Issn = {0393-0440},
	Journal = {J. Geom. Phys.},
	Mrclass = {81T13 (53D50)},
	Mrnumber = {3724786},
	Pages = {246--283},
	Primaryclass = {math-ph},
	Title = {Asymptotic freedom in the {BV} formalism},
	Volume = {123},
	Year = {2018},
	Bdsk-Url-1 = {https://doi.org/10.1016/j.geomphys.2017.08.009}}

@article{Stoyanovsky,
  title={Quantum {L}anglands duality and conformal field theory},
  author={Stoyanovsky, Alexander},
  Archiveprefix={arXiv},
  primaryclass={math},
  eprint={0610974},
  year={2006}
}

@incollection {BBJantzen,
    AUTHOR = {Beilinson, A. and Bernstein, J.},
     TITLE = {A proof of {J}antzen conjectures},
 BOOKTITLE = {I. {M}. {G}elfand {S}eminar},
    SERIES = {Adv. Soviet Math.},
    VOLUME = {16},
     PAGES = {1--50},
 PUBLISHER = {Amer. Math. Soc., Providence, RI},
      YEAR = {1993},
   MRCLASS = {22E47 (14A22 14F05)},
  MRNUMBER = {1237825},
MRREVIEWER = {D. Mili\v{c}i\'{c}},
}

@article{Zhao1,
  title={Quantum parameters of the geometric {L}anglands theory},
  author={Zhao, Yifei},
  archiveprefix={arXiv},
  eprint={1708.05108},
  year={2017}
}

@article{Zhao2,
  title={Tame twistings and {$\Theta$}-data},
  author={Zhao, Yifei},
  archiveprefix={arXiv},
  eprint={2004.09671},
  year={2020}
}

@article {GaitsgoryLysenko,
    AUTHOR = {Gaitsgory, D. and Lysenko, S.},
     TITLE = {Parameters and duality for the metaplectic geometric
              {L}anglands theory},
   JOURNAL = {Selecta Math. (N.S.)},
  FJOURNAL = {Selecta Mathematica. New Series},
    VOLUME = {24},
      YEAR = {2018},
    NUMBER = {1},
     PAGES = {227--301},
      ISSN = {1022-1824},
   MRCLASS = {14F05 (14D24 14H60)},
  MRNUMBER = {3769731},
MRREVIEWER = {Andrew Salch},
       DOI = {10.1007/s00029-017-0360-4},
       URL = {https://doi.org/10.1007/s00029-017-0360-4},
}

@article{FrenkelGaiotto,
  title={Quantum {L}anglands dualities of boundary conditions, {D}-modules, and conformal blocks},
  author={Frenkel, Edward and Gaiotto, Davide},
  archiveprefix={arXiv},
  eprint={1805.00203},
  year={2020},
  journal = {Communications in Number Theory and Physics},
  Volume={14},
  pages={199--313}  
}

@article{ToenVezzosiFoliation,
  title={Algebraic foliations and derived geometry: the {R}iemann--{H}ilbert correspondence},
  author={To{\"e}n, Bertrand and Vezzosi, Gabriele},
  archiveprefix={arXiv},
  eprint={2001.05450},
  year={2020}
}

@article{BSY,
  title={Global shifted potentials for moduli spaces of sheaves on {CY4}},
  author={Borisov, Dennis and Sheshmani, Artan and Yau, Shing-Tung},
  archiveprefix={arXiv},
  erint={1908.00651},
  year={2019}
}

@article{DonagiPantev,
  title={Lectures on the geometric {Langlands} conjecture and non-abelian {Hodge} theory},
  author={Donagi, Ron and Pantev, Tony},
  journal={Second International School on Geometry and Physics Geometric Langlands and Gauge Theory},
  pages={129},
  year={2010},
  publisher={Citeseer}
}

@article {WittencomplexCS,
    AUTHOR = {Witten, Edward},
     TITLE = {Quantization of {C}hern-{S}imons gauge theory with complex
              gauge group},
   JOURNAL = {Comm. Math. Phys.},
  FJOURNAL = {Communications in Mathematical Physics},
    VOLUME = {137},
      YEAR = {1991},
    NUMBER = {1},
     PAGES = {29--66},
      ISSN = {0010-3616},
   MRCLASS = {58F06 (14H60 58D27 81S10 81T40 83C45)},
  MRNUMBER = {1099255},
MRREVIEWER = {N. J. Hitchin},
       URL = {http://projecteuclid.org/euclid.cmp/1104202513},
}

@book {LodayVallette,
    AUTHOR = {Loday, Jean-Louis and Vallette, Bruno},
     TITLE = {Algebraic operads},
    SERIES = {Grundlehren der Mathematischen Wissenschaften [Fundamental
              Principles of Mathematical Sciences]},
    VOLUME = {346},
 PUBLISHER = {Springer, Heidelberg},
      YEAR = {2012},
     PAGES = {xxiv+634},
      ISBN = {978-3-642-30361-6},
   MRCLASS = {18D50 (16E99)},
  MRNUMBER = {2954392},
MRREVIEWER = {Andrey Yu. Lazarev},
       DOI = {10.1007/978-3-642-30362-3},
       URL = {https://doi.org/10.1007/978-3-642-30362-3},
}

@incollection {BZFrenkelSS,
    AUTHOR = {Ben-Zvi, David and Frenkel, Edward},
     TITLE = {Geometric realization of the {S}egal-{S}ugawara construction},
 BOOKTITLE = {Topology, geometry and quantum field theory},
    SERIES = {London Math. Soc. Lecture Note Ser.},
    VOLUME = {308},
     PAGES = {46--97},
 PUBLISHER = {Cambridge Univ. Press, Cambridge},
      YEAR = {2004},
   MRCLASS = {17B67 (17B68 17B69)},
  MRNUMBER = {2079371},
       DOI = {10.1017/CBO9780511526398.006},
       URL = {https://doi.org/10.1017/CBO9780511526398.006},
}

@article{SafronovGQ,
  title={Shifted geometric quantization},
  author={Safronov, Pavel},
  eprint={2011.05730},
  archiveprefix={arXiv},
  year={2020}
}

@article{RaghavendranYoo,
	Author = {Raghavendran, Surya and Yoo, Philsang},
  eprint={1910.13653},
  archiveprefix={arXiv},
  	Year = {2019}}

\textsc{Amherst College}\\
\textsc{Department of Mathematics and Statistics, 220 South Pleasant Street,  MA 01002}\\
\texttt{celliott@amherst.edu}\\
\vspace{5pt}\\
\textsc{Seoul National University}\\
\textsc{Department of Mathematical Sciences \& Research Institute of Mathematics, Gwanak-Gu, Gwanak-Ro 1, Seoul 08826, Republic of Korea} \\
\texttt{philsang.yoo@snu.ac.kr}

\end{document}